
\documentclass[lettersize,journal,10pt]{IEEEtran}
%\documentclass[journal, comsoc]{IEEEtran}
%\documentclass[journal, 8pt, draftclsnofoot, onecolumn]{IEEEtran}
%\documentclass[journal, 8pt, onecolumn]{IEEEtran}
%
% If IEEEtran.cls has not been installed into the LaTeX system files,
% manually specify the path to it like:
% \documentclass[journal,comsoc]{../sty/IEEEtran}

%\usepackage[T1]{fontenc}% optional T1 font encoding

%--------- More Packages -----------
%\usepackage{fullpage}
\usepackage{amssymb}
\usepackage{amsthm}
\usepackage{tcolorbox}
\usepackage{latexsym}
\usepackage[hyperref]{knowledge}
\usepackage{color}

\usepackage[font={footnotesize}]{caption}
\usepackage[font={footnotesize}]{subcaption}%[caption=false,font=footnotesize]
\usepackage{csquotes}
% \usepackage[%bindingoffset=0.2in,%
%             left=0.8in,right=0.8in,top=1in,bottom=1in,%
%             footskip=.25in]{geometry}
% \special{papersize=8.5in,11in}
%bindingoffset=0.2in, 
\usepackage{cite}
\usepackage{hyperref}%Turning \ref into hypertext.
\usepackage{url}
\usepackage{epsf}
\usepackage{dcolumn}
\usepackage{array,booktabs,arydshln,xcolor}%% For good TABLES (see the newcommand for Vrule below)
\usepackage{float}
\usepackage{lipsum}%Placing figures side-by-side in a minipage environment.
\usepackage{widetext}
\usepackage{cuted} %for using "strip"
\usepackage{textcomp}%For apostrophe
\usepackage{enumitem}
%% ---------------------------- Packages for Algorithms ------------------------------------------------------
\usepackage{algorithm}
\usepackage{algorithmicx}
\usepackage{algpseudocode}

%% ---------------------------- Packages for drawing and tikz ------------------------------------------------------
\usepackage{wrapfig}
\usepackage{tikz}
\usetikzlibrary {arrows.meta}
\usepackage{graphicx}
% Some very useful LaTeX packages include:
% (uncomment the ones you want to load)

% *** MISC UTILITY PACKAGES ***
%
%\usepackage{ifpdf}
% Heiko Oberdiek's ifpdf.sty is very useful if you need conditional
% compilation based on whether the output is pdf or dvi.
% usage:
% \ifpdf
%   % pdf code
% \else
%   % dvi code
% \fi
% The latest version of ifpdf.sty can be obtained from:
% http://www.ctan.org/pkg/ifpdf
% Also, note that IEEEtran.cls V1.7 and later provides a builtin
% \ifCLASSINFOpdf conditional that works the same way.
% When switching from latex to pdflatex and vice-versa, the compiler may
% have to be run twice to clear warning/error messages.

% *** CITATION PACKAGES ***
%
%\usepackage{cite}
% cite.sty was written by Donald Arseneau
% V1.6 and later of IEEEtran pre-defines the format of the cite.sty package
% \cite{} output to follow that of the IEEE. Loading the cite package will
% result in citation numbers being automatically sorted and properly
% "compressed/ranged". e.g., [1], [9], [2], [7], [5], [6] without using
% cite.sty will become [1], [2], [5]--[7], [9] using cite.sty. cite.sty's
% \cite will automatically add leading space, if needed. Use cite.sty's
% noadjust option (cite.sty V3.8 and later) if you want to turn this off
% such as if a citation ever needs to be enclosed in parenthesis.
% cite.sty is already installed on most LaTeX systems. Be sure and use
% version 5.0 (2009-03-20) and later if using hyperref.sty.
% The latest version can be obtained at:
% http://www.ctan.org/pkg/cite
% The documentation is contained in the cite.sty file itself.

% *** GRAPHICS RELATED PACKAGES ***
%
\ifCLASSINFOpdf
  % \usepackage[pdftex]{graphicx}
  % declare the path(s) where your graphic files are
  % \graphicspath{{../pdf/}{../jpeg/}}
  % and their extensions so you won't have to specify these with
  % every instance of \includegraphics
  % \DeclareGraphicsExtensions{.pdf,.jpeg,.png}
\else
  % or other class option (dvipsone, dvipdf, if not using dvips). graphicx
  % will default to the driver specified in the system graphics.cfg if no
  % driver is specified.
  % \usepackage[dvips]{graphicx}
  % declare the path(s) where your graphic files are
  % \graphicspath{{../eps/}}
  % and their extensions so you won't have to specify these with
  % every instance of \includegraphics
  % \DeclareGraphicsExtensions{.eps}
\fi
% graphicx was written by David Carlisle and Sebastian Rahtz. It is
% required if you want graphics, photos, etc. graphicx.sty is already
% installed on most LaTeX systems. The latest version and documentation
% can be obtained at: 
% http://www.ctan.org/pkg/graphicx
% Another good source of documentation is "Using Imported Graphics in
% LaTeX2e" by Keith Reckdahl which can be found at:
% http://www.ctan.org/pkg/epslatex
%
% latex, and pdflatex in dvi mode, support graphics in encapsulated
% postscript (.eps) format. pdflatex in pdf mode supports graphics
% in .pdf, .jpeg, .png and .mps (metapost) formats. Users should ensure
% that all non-photo figures use a vector format (.eps, .pdf, .mps) and
% not a bitmapped formats (.jpeg, .png). The IEEE frowns on bitmapped formats
% which can result in "jaggedy"/blurry rendering of lines and letters as
% well as large increases in file sizes.
%
% You can find documentation about the pdfTeX application at:
% http://www.tug.org/applications/pdftex

%--------- Writers  -----------

% \newcommand{\mrkK}[1]{\textcolor{cyan}{#1}}
% \newcommand{\Yakub}[1]{\textcolor{blue}{#1}}
%--------- Packages -----------

% *** MATH PACKAGES ***
%
\usepackage{amsmath}
% A popular package from the American Mathematical Society that provides
% many useful and powerful commands for dealing with mathematics.
% Do NOT use the amsbsy package under comsoc mode as that feature is
% already built into the Times Math font (newtxmath, mathtime, etc.).
% 
% Also, note that the amsmath package sets \interdisplaylinepenalty to 10000
% thus preventing page breaks from occurring within multiline equations. Use:
\interdisplaylinepenalty=2500
% after loading amsmath to restore such page breaks as IEEEtran.cls normally
% does. amsmath.sty is already installed on most LaTeX systems. The latest
% version and documentation can be obtained at:
% http://www.ctan.org/pkg/amsmath

% Select a Times math font under comsoc mode or else one will automatically
% be selected for you at the document start. This is required as Communications
% Society journals use a Times, not Computer Modern, math font.
\usepackage[cmintegrals]{newtxmath}
% The freely available newtxmath package was written by Michael Sharpe and
% provides a feature rich Times math font. The cmintegrals option, which is
% the default under IEEEtran, is needed to get the correct style integral
% symbols used in Communications Society journals. Version 1.451, July 28,
% 2015 or later is recommended. Also, do *not* load the newtxtext.sty package
% as doing so would alter the main text font.
% http://www.ctan.org/pkg/newtx
%
% Alternatively, you can use the MathTime commercial fonts if you have them
% installed on your system:
%\usepackage{mtpro2}
%\usepackage{mt11p}
%\usepackage{mathtime}

%\usepackage{bm}
% The bm.sty package was written by David Carlisle and Frank Mittelbach.
% This package provides a \bm{} to produce bold math symbols.
% http://www.ctan.org/pkg/bm

%--------Definitions and Environments----------
% %---------- Spacing -------------
% \setlength{\oddsidemargin}{-0.25 in}
% \setlength{\evensidemargin}{-0.25 in}
%  \setlength{\topmargin}{-0.6 in}
%  \setlength{\textwidth}{6.5 in}
% % \setlength{\textheight}{9.4 in}
%  \setlength{\headsep}{0.75 in}
%  \setlength{\parindent}{0 in}
%  \setlength{\parskip}{0.1 in}
% %---------- end Spacing --------------

%--------Environments----------
\theoremstyle{definition}
\newtheorem{thm}{Theorem}%[section] %The last entry is used for including Section Number in the theorem/proposition/lemma number.
\newtheorem{lem}[thm]{Lemma}
\newtheorem{prop}[thm]{Proposition}

\theoremstyle{definition}

\theoremstyle{remark}
\newtheorem*{rem}{Remark}

%%%%%%%%%%  For a Table of Contents only in the Appendix ----------------------------------------
% Make the "Part I" text invisible
%\renewcommand \thepart{}
%\renewcommand \partname{}
%----------------------------------------------------------------------------------------------------------------------
%---------Definitions----------
%\renewcommand{\thesection}{\arabic{section}} Makes EVERYTHING Hindu-Arabic
%\renewcommand{\thesection}{\arabic{section}.\arabic{theorem}}

%

%\newcommand{\algorithm}{\textup{\textrm{{algorithm}}}}

\newcommand{\argmax}{arg~max_}
%\newcommand{\F}{{\mathcal{F}}}

%=============================== END PREAMBLE ===============================

% correct bad hyphenation here
\hyphenation{op-tical net-works semi-conduc-tor}

%============================ BEGIN DOCUMENT ================================

% correct bad hyphenation here
%\hyphenation{op-tical net-works semi-conduc-tor}

\begin{document}
%
% paper title
% Titles are generally capitalized except for words such as a, an, and, as,
% at, but, by, for, in, nor, of, on, or, the, to and up, which are usually
% not capitalized unless they are the first or last word of the title.
% Linebreaks \\ can be used within to get better formatting as desired.
% Do not put math or special symbols in the title.
\title{A Low-Delay MAC for IoT Applications: Decentralized Optimal Scheduling of Queues without Explicit State Information Sharing}
%\title{A Low-Delay MAC for IoT Applications: Decentralized Optimal Scheduling of Queues\\ Without Explicit State Information Sharing}
%\title{Hybrid Medium Access Control Protocols for Low-Delay Scheduling}

%\title{Bare Demo of IEEEtran.cls for\\ IEEE Communications Society Journals}
%
%
% author names and IEEE memberships
% note positions of commas and nonbreaking spaces ( ~ ) LaTeX will not break
% a structure at a ~ so this keeps an author's name from being broken across
% two lines.
% use \thanks{} to gain access to the first footnote area
% a separate \thanks must be used for each paragraph as LaTeX2e's \thanks
% was not built to handle multiple paragraphs
%

\author{Avinash~Mohan,~\IEEEmembership{Member,~IEEE,}
        Arpan~Chattopadhyay,\IEEEmembership{}
        Shivam~Vinayak~Vatsa,\IEEEmembership{}
        and~Anurag~Kumar,~\IEEEmembership{Fellow,~IEEE.}
        % <-this % stops a space
\thanks{E-mail: avimohan@bu.edu, arpanc@ee.iitd.ac.in, shivamv@iisc.ac.in and anurag@iisc.ac.in, respectively.
\emph{Avinash Mohan is the corresponding author.}}
\thanks{Avinash Mohan is a postdoctoral fellow at Boston University, Boston, Massachusetts. Arpan Chattopadhyay is with the Indian Institute of Technology, Delhi, India. Shivam Vinayak and Anurag Kumar are with the Indian Institute of Science, Bangalore, India.}
\thanks{This work was presented, in part, at the 13\textsuperscript{th} IEEE International Conference on Mobile Ad Hoc and Sensor Systems \cite{mohan-etal16hybrid-macsMASSversion}. This research was supported by the Ministry of Human Resource Development, via a graduate fellowship for the first author, and the Department of Science and Technology, via a J.C. Bose Fellowship awarded to the last author.
}
%\thanks{\textbf{Please note that all appendices are provided in the Supplementary Material.}}
}
%\thanks{M. Shell was with the Department
%of Electrical and Computer Engineering, Georgia Institute of Technology, Atlanta,
%GA, 30332 USA e-mail: (see http://www.michaelshell.org/contact.html).}% <-this % stops a space
%\thanks{J. Doe and J. Doe are with Anonymous University.}% <-this % stops a space
%\thanks{Manuscript received April 19, 2005; revised August 26, 2015.}}

% note the % following the last \IEEEmembership and also \thanks - 
% these prevent an unwanted space from occurring between the last author name
% and the end of the author line. i.e., if you had this:
% 
% \author{....lastname \thanks{...} \thanks{...} }
%                     ^------------^------------^----Do not want these spaces!
%
% a space would be appended to the last name and could cause every name on that
% line to be shifted left slightly. This is one of those "LaTeX things". For
% instance, "\textbf{A} \textbf{B}" will typeset as "A B" not "AB". To get
% "AB" then you have to do: "\textbf{A}\textbf{B}"
% \thanks is no different in this regard, so shield the last } of each \thanks
% that ends a line with a % and do not let a space in before the next \thanks.
% Spaces after \IEEEmembership other than the last one are OK (and needed) as
% you are supposed to have spaces between the names. For what it is worth,
% this is a minor point as most people would not even notice if the said evil
% space somehow managed to creep in.

% The paper headers
\markboth{QZMAC 2 column}%{Journal of \LaTeX\ Class Files,~Vol.~14, No.~8, August~2015}%
{Version controlled}%{Shell \MakeLowercase{\textit{et al.}}: Bare Demo of IEEEtran.cls for IEEE Communications Society Journals}
% The only time the second header will appear is for the odd numbered pages
% after the title page when using the twoside option.
% 
% *** Note that you probably will NOT want to include the author's ***
% *** name in the headers of peer review papers.                   ***
% You can use \ifCLASSOPTIONpeerreview for conditional compilation here if
% you desire.
\date{June 6\textsuperscript{th}, 2020}

% If you want to put a publisher's ID mark on the page you can do it like
% this:
%\IEEEpubid{0000--0000/00\$00.00~\copyright~2015 IEEE}
% Remember, if you use this you must call \IEEEpubidadjcol in the second
% column for its text to clear the IEEEpubid mark.

% use for special paper notices
%\IEEEspecialpapernotice{(Invited Paper)}

% make the title area
\maketitle

% As a general rule, do not put math, special symbols or citations
% in the abstract or keywords.
\begin{abstract}
%Motivated by the emerging delay-sensitive applications of the Internet of Things (IoT), 
%there has been a resurgence of interest in scheduled time division MAC protocols for wireless sensor networks. 
%We consider the Medium Access Control (MAC) problem in resource-constrained ad-hoc wireless networks typical of the Internet of Things (IoT). 
%Due to the delay-sensitive nature of emerging IoT applications, there has been increasing interest in developing medium access control (MAC) protocols in a slotted framework. 
%In this paper w
We consider a system of several collocated nodes sharing a time slotted wireless channel, and seek a MAC (medium access control) that (i) provides low mean delay, (ii) has distributed control (i.e., there is no central scheduler), and (iii) does not require explicit exchange of state information or control signals. 
The design of such MAC protocols must keep in mind the need for contention access at light traffic, and scheduled access in heavy traffic, leading to the long-standing interest in hybrid, adaptive MACs.%, such as ZMAC (\hspace{-0.05cm}\cite{warrier2005stochastic})). 

Working in the discrete time setting, for the distributed MAC design, we consider a practical information structure where each node has local information and some common information obtained from overhearing. In this setting, \enquote{ZMAC} is an existing protocol that is hybrid and adaptive. We approach the problem via two steps (1) We show that it is sufficient for the policy to be \enquote{greedy} and \enquote{exhaustive.} Limiting the policy to this class reduces the problem to obtaining a queue switching policy at queue emptiness instants. (2) Formulating the delay optimal scheduling as a POMDP (partially observed Markov decision process), we show that the optimal switching rule is Stochastic Largest Queue (SLQ). 

Using this theory as the basis, we then develop a practical distributed scheduler, QZMAC, which is also tunable. We implement QZMAC on standard off-the-shelf TelosB motes and also use simulations to compare QZMAC with the full-knowledge centralized scheduler, and with ZMAC. We use our implementation to study the impact of false detection while overhearing the common information, and the efficiency of QZMAC. Our simulation results show that the mean delay with QZMAC is close that of the full-knowledge centralized scheduler.

\end{abstract}

% Note that keywords are not normally used for peer review papers.
\begin{IEEEkeywords}
%Wireless Sensor Networks, Medium Access Control (MAC) protocols, Optimal Polling, Delay Minimization, Hybrid MACs, Self-Organizing Networks, Internet
%of Things (IoT), Dynamic Programming.%, Markov decision processes (MDPs), Lyapunov Analysis, Stochastic Ordering.
%IoT, MAC protocols, Optimal Polling, Hybrid MACs, Self-Organization, Markov decision processes (MDPs).
Sensor Networks, Medium Access Control, Optimal Polling, Internet of Things, POMDPs, 6TiSCH.
%Communications Society, IEEE, IEEEtran, journal, \LaTeX, paper, template.
\end{IEEEkeywords}

% For peer review papers, you can put extra information on the cover
% page as needed:
% \ifCLASSOPTIONpeerreview
% \begin{center} \bfseries EDICS Category: 3-BBND \end{center}
% \fi
%
% For peerreview papers, this IEEEtran command inserts a page break and
% creates the second title. It will be ignored for other modes.
\IEEEpeerreviewmaketitle

\section{Introduction}
\IEEEPARstart{I}n the Internet of Things (IoT), wireless access networks will connect embedded sensors to the infrastructure network. %(see Fig.~\ref{fig:WsnGatewayInternet}). 
Since these embedded devices will be resource challenged, the wireless medium access control (MAC) protocols will need to be simple, and decentralized, and not require explicit exchange of state information and control signals. \emph{However, some of the emerging applications over IoT networks might expect low packet delivery delays as well} \cite{dujovne-etal14ip-enabled-industrial-iot}. In this paper, we report our work on developing a low mean delay  MAC protocol for $N$ \emph{collocated} nodes sharing a \emph{time slotted} wireless channel, such that there is no centralized control and no \emph{explicit} exchange of state information. Networks with collocated nodes regularly arise in the industrial IoT setting, specifically, machine health monitoring \cite{zhong-etal17iot-real-time-machine-status}. Emerging standards for IoT applications, such as the DetNet %IEEE 802.15.4e \cite{watteyne-etal15using-15-4-iot}
and 6TiSCH \cite{thubert-etal15sdn-meets-iot}, have shown considerable interest in systems with a 
synchronous time-slotted framework. %In this paper, we propose MAC protocols for such systems. 
It may be noted that, for this setting, a centralized scheduler \emph{with full queue length information} can just schedule any nonempty queue in each slot. The challenge we address in this paper is to develop a distributed mechanism, without explicit exchange of queue length information, that achieves mean delay very close to that of the centralized scheduler.

% \begin{figure}%{0.5\textwidth}
% %\hspace{-1.50cm}
% \centering%[height=5.25cm, width=17cm]
% %\includegraphics[height=3.0cm, width=7.0cm]
% % \input{}
% \input{figures/figure-nodes-gateway-internet}
% %{}
% %{figures_for_full_proof/jfi_tdma_exh_poll_lam_003qs_30.eps}
% \caption{A typical application scenario. Each of the $N$ sensors generates measurement packets that get queued in its packet buffer. These are then transmitted over a \emph{wireless} link to a gateway that forwards the packet to the broader internet, i.e., the \emph{infrastructure network.}}
% \label{fig:WsnGatewayInternet}
% \vspace{-0.60cm}
% \end{figure}
% % \vspace{-0.50cm}
% %\end{subfigure}

\indent It is well known that, while contention access (ALOHA, CSMA etc.) performs well at low contention, it can result in very large delays
and possibly instability under high contention \hspace{-0.05cm}\cite{lam84principles-comm-networking-prtcls}. While attempts have been made to stabilize CSMA, the delay of these algorithms still remains prohibitively high \cite{jiang2012stability,jiang2011approaching,rajagopalan09network-adiabatic}. Polled access (e.g., 1-limited cyclic service \cite{takagi-kleinrock85tutorial-polling-systems}, which we will call TDMA in this paper)
%(TDMA, i.e., the 1-limited cyclic service in \cite{takagi-kleinrock85tutorial-polling-systems})
 on the other hand, shows the opposite behavior. 
% With $N$ nodes sharing the medium, zero switchover time, and the channel being 
% assigned to each node for at least one frame (or packet time), the delay with TDMA at low arrival
% rates approaches $\frac{N}{2}+1$,  as the arrival rate goes to zero,
% compared to the delay of about $1.5$ slots 
% that contention-based protocols achieve in this region. 
It is, hence, desirable to have protocols that can behave like TDMA
%\footnote{Throughout this
%paper, TDMA will refer to cyclic polling wherein each polled queue is allowed to transmission 
%at most \emph{one} packet. See Sec.~2.2 of \cite{rom-sidi90multiple-access-performance-analysis} for a 
%more detailed description of the TDMA protocol.}
%\cite{takagi-kleinrock85tutorial-polling-systems} for details about 1-limited
under high contention and CSMA under low contention. The qualitative sketch in Fig.~\ref{figHybridAndDesiredMACPerformance} illustrates this behavior. The figure compares the delay performance of polling and contention MACs (orange and blue, respectively) showing the phenomenon discussed above. Also shown are two illustrations of \enquote{load adaptive} performance expected of hybrid MACs, the one in red being uniformly better than the one in green. The figure also shows the lowest possible delay attainable in this setting (black curve). This curve is discussed in more detail in Sec.~\ref{secNumRes}, where we show that the hybrid MACs proposed in this paper come extremely close to this curve and, hence, are nearly delay optimal.
\begin{figure}[bt]
%\hspace{0.50cm}
%\centering%[height=5.25cm, width=17cm]%
%\includegraphics[height=5.5cm, width=9cm]{figures_for_full_proof/pictures_for_presentation/figure_lb_polling_contention_desired_full_all_different_colors5.png}
%\input{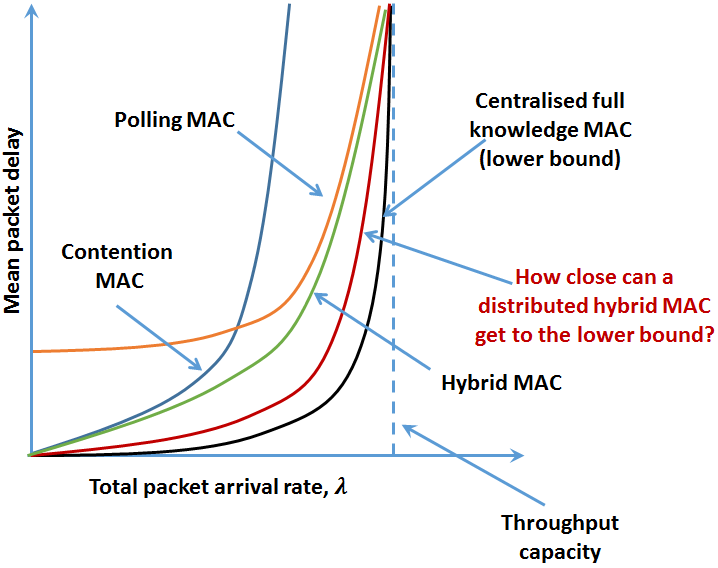}
 \includegraphics[scale = 0.4]{figures/figure_behavior_of_polling_contention_hybrid_lowerBound.png}
\caption{A sketch of the delay performance of various MAC protocols. Arrival rates to all queues are assumed equal. Throughput capacity = 1, i.e. $\lambda\in[0,1)$.}%  In this paper, such a system is termed a \emph{symmetric} system. For details, refer Sec.~\ref{secLEQunequal}.}
\label{figHybridAndDesiredMACPerformance}
\vspace{-0.50cm}
\end{figure}
% There have been many attempts at proposing protocols that achieve this, 
% especially in the context of wireless sensor networks. Important examples include \cite{nazib=etal21energy-efficient-fast-data-collection,liu-etal18qtsac-energy-efficient-delay-minimizaion,ephremides82analysis,doerr2005multimac,huang-etal18low-latency-mobile-sensor-control-systems,rhee08zmac,ahn06funneling} and \cite{sitanayah2010er-mac}. Among these, ZMAC \cite{rhee08zmac} has been observed to perform the best  in terms of delay and channel utilization (see \cite{warrier2005stochastic}), and will be an important point of comparison in this paper. \\
% \indent The design of these protocols, however, has been heuristic and without recourse to known results in optimal scheduling of queues. Further, little attention appears to have been paid to distributed scheduling in the TDMA context, such that each node uses only locally available information.  If all queue lengths are known at a central scheduler, then ideal performance can be achieved. In the context of partial information about queue occupancies, the aim, at each scheduling instant, is to minimize the time it takes for the system to find a nonempty queue and allow it to transmit. How this can be achieved in the setting of collocated nodes with a common receiver, by using stochastic optimal control (with partial information) in conjunction with extensions of known results on optimal scheduling, is the main contribution of this paper.\\

There have been many attempts at proposing protocols that achieve this, 
especially in the context of wireless sensor networks. Important examples include \cite{nazib-etal21energy-efficient-fast-data-collection,liu-etal18qtsac-energy-efficient-delay-minimizaion,ephremides82analysis,doerr2005multimac,huang-etal18low-latency-mobile-sensor-control-systems,rhee08zmac,ahn06funneling} and \cite{sitanayah2010er-mac}. Among these, ZMAC \cite{rhee08zmac} has been observed to perform the best  in terms of delay and channel utilization (see \cite{warrier2005stochastic}), and will be an important point of comparison in this paper. 
% \\
% \indent ZMAC achieves contention-dependent behavior by maintaining a TDMA schedule as the basic transmission scheme, and by allowing nodes to contend for slots when the TDMA user is empty. 
% % The TDMA schedule is established in a distributed manner using an algorithm called DRAND 
% % described in \cite{rhee08zmac}. 
% When the network is lightly loaded, most nodes are empty and other nodes can acquire their TDMA slots through contention without waiting for their own TDMA slots, while under heavy load, most nodes use their TDMA slots and the protocol behaves like TDMA, as 
% desired. 
ZMAC's delay performance, however, is quite poor compared to centralized scheduling with full queue length knowledge (see Fig.~\ref{figDelAll4Qs30}). The question, therefore, remains whether we can achieve low delay, in a decentralized setting, in which the information available to each node is just what it can acquire by \enquote{listening} over the channel.
\\
\indent Moreover, the design of the aforementioned protocols has been heuristic and without theoretical explanation of optimal scheduling of queues. Further, little attention appears to have been paid to distributed scheduling in the TDMA context, such that each node uses only locally available information.  If all queue lengths are known at a central scheduler, then ideal performance can be achieved. In the context of partial information about queue occupancies, the aim, at each scheduling instant, is to minimize the time it takes for the system to find a nonempty queue and allow it to transmit. How this can be achieved in the setting of collocated nodes with a common receiver, by using stochastic optimal control (with partial information) in conjunction with extensions of known results on optimal scheduling, is the main contribution of this paper.\\
\vspace{-8mm}
\subsection{Our Contributions}% and Organization
%We are concerned with low delay, decentralized scheduling in a time-slotted setting, in which the information
%available to each node is just what it can acquire by \enquote{listening} over the channel.
% We consider $N$ nodes sharing a slotted wireless channel to transmit packets (each of which fits into one slot) to a common receiver. Each node receives a stochastic arrival process embedded at the slot boundaries. 
We consider $N$ nodes sharing a slotted wireless channel to transmit packets (each of which fits into one slot) to a common receiver. Each node receives a stochastic arrival process embedded at the slot boundaries.
    %\item 
    We begin by describing in detail the structure of the (partial) information that nodes in the network will be assumed to possess, in Sec.~\ref{secMslot}. This structure will inform the development of optimal scheduling policies and distributed scheduling protocols. 
\begin{itemize}[leftmargin=0.15cm]
    % \item
    % We consider $N$ nodes sharing a slotted wireless channel to transmit packets (each of which fits into one slot) to a common receiver. Each node receives a stochastic arrival process embedded at the slot boundaries.
    % \item 
    % We begin by describing in detail the structure of the (partial) information that nodes in the network will be assumed to possess, in Sec.~\ref{secMslot}. This structure will inform the development of optimal scheduling policies and distributed scheduling protocols. 
    \item 
    We then prove the delay optimality of greedy and exhaustive service policies in this novel partial information setting (in Sec.~\ref{secOptPollCont}). While our analysis employs proof techniques in \cite{liu-nain92optimal-polling} to show sample pathwise dominance, the added complexity of \emph{state dependent} server switchover delays complicates the analysis in this (and the next) section considerably.
    %our setting (described in detail in Sec.~\ref{secSysMod}) to
    \item 
    Focusing on greedy and exhaustive service policies, with no explicit queue length information being shared between nodes, we derive a mean delay optimal policy by formulating the problem as a Markov decision process with partial information (Sec.~\ref{secDelayOptimalityInSymmetricSystems}).
    %\\
    %\item %\indent 
    We initially cast the problem as an $\alpha$-discounted cost MDP, obtain the optimal policy, and extend the result to time-average costs (Sec.~\ref{secTheAverageCostCriterion}).
    % We also provide an important partial characterization of the space of stabilizing stationary Markov policies in our partial information setting (Sec.~\ref{secClassOfStableStationaryMarkovPolicies}). 
    %Delay optimality is also shown in a stochastic ordering sense (\cite{mohan-etal17hybrid-protocols}, Prop.~III.3).
    % The protocol with unequal arrival rates requires knowledge of arrival rates and we show how to implement it in a distributed manner. 
    %\item 
    We also use Foster Lyapunov theory \cite{fayolle-etal95constructive-theory-markov-chains} to modify and extend our proposed optimal policy %QZMAC protocol 
    to solve the problem of unfairness that it gives rise to in our Technical Report \cite[Sec.~4.6]{mohan-etal21low-delay-iot}. Further, we discuss how our results can also handle channel errors and fading.
    % \item 
    % We then use the delay optimality results developed in Sec.~\ref{secOptPollCont} to design two Hybrid MAC protocols (EZMAC and QZMAC) and show that the delays achieved by them are much lower than that achieved by ZMAC (Sec.~\ref{secProtocolDesign}). To the best of our knowledge, this is the first work that deals with hybrid MAC scheduling for systems with \emph{unequal} arrival rates (Prop.~\ref{propSLEQunequal} and Thm.~\ref{thmCycExhStable}). We then present modifications to the QZMAC protocol to handle unequal arrival rates, alarm traffic and Clear Channel Assessment (CCA) errors.  We also show that delay with QZMAC is very close to the minimum delay that can be obtained in this scenario (Sec.~\ref{secSymmetricSystemsSimulations}).
    % \item 

    % \item 
    % We report the results of implementing QZMAC over a collocated network comprising CC2420 based Crossbow telosB motes running the 6TiSCH \cite{dujovne-etal14ip-enabled-industrial-iot} communication stack, in Sec.~\ref{sec:Experiments}. In Sec.~\ref{secNumRes}, we present simulation results comparing the delay (both mean and delay CDF) performance of QZMAC and EZMAC with that of ZMAC.
    % We discuss techniques to \emph{tune} QZMAC to modify its performance over different portions of the network capacity region (see Eqn.\eqref{eqnBasicCapacityRegion}) and also compare the Channel Utilization of QZMAC with that of ZMAC.
    % Finally, we conclude the paper and present directions for future research.
    \item
    We then use our delay optimality results to design two Hybrid MAC protocols ({\color{blue}EZMAC} and {\color{blue}QZMAC}) and show that the delays achieved by them are much lower than that achieved by ZMAC (Sec.~\ref{secProtocolDesign}).
    % For clarity of exposition, we defer discussing the details of EZMAC to Sec.~\ref{appendix:secEZMACProtocol} in the Appendix.
    We then present modifications to QZMAC to handle unequal arrival rates, alarm traffic and Clear Channel Assessment (CCA) errors. To the best of our knowledge, this is the first work that deals with hybrid MAC scheduling for systems with \emph{unequal} arrival rates (Prop.~\ref{propSLEQunequal} and Thm.~\ref{thmCycExhStable}). 
    %\item 
    \item 
    We report the results of implementing QZMAC over a collocated network comprising CC2420 based Crossbow telosB motes running the 6TiSCH \cite{dujovne-etal14ip-enabled-industrial-iot} communication stack, in Sec.~\ref{sec:Experiments}. In Sec.~\ref{secNumRes}, we present simulation results comparing the delay (both mean and delay CDF) performance of QZMAC and EZMAC with that of ZMAC.  We also show that delay with QZMAC is very close to the minimum delay that can be obtained in this scenario (Sec.~\ref{secSymmetricSystemsSimulations}).
    We discuss techniques to \emph{tune} QZMAC to modify its performance over different portions of the network capacity region (see Eqn.\eqref{eqnBasicCapacityRegion}) and also compare the Channel Utilization of QZMAC with that of ZMAC.
    Finally, we conclude the paper and present directions for future research.
    
%The protocol requires knowledge of arrival rates and we show how to implement it in a distributed manner. We then specialize this result to symmetric systems and obtain mean delay optimal polling policies in this case. %We then use this result and the theory of MDPs to obtain optimal polling policies for the information structure in our system. 
%Based on this result, we develop algorithms that provide low queueing delay in systems with general arrival rates. We also show that the algorithm is throughput optimal (Sec.~\ref{secLEQunequal}). 
\end{itemize}

\vspace{-4mm}
\section{Frame Structure and System Processes}
\label{secSysMod}
We consider a wireless network comprising several source nodes (e.g., sensor nodes) transmitting to a common receiver node (e.g., a base station). The nodes are \emph{collocated} in the sense that all nodes can \emph{hear} each other's transmissions. This could mean, at one extreme, that they can \emph{decode} each other's transmissions, or just sense each other's transmissions. We will comment on this further in Sec.~\ref{secMechanismsForDecentralizedScheduling}. %Sec.~\ref{secMslot} on Information Structures. 
%, wherein all transmitting entities (e.g., sensors) can sense all ongoing transmissions (i.e., we have a fully connected interference graph).
 Time is assumed slotted; the slots are indexed $t= 0, 1, 2, \cdots$, with slot $t$ being bounded by the epochs $t$ and $t+1$. In each slot $t$, a single node can transmit successfully (note that there is a common receiver). If a node transmits in slot $t$, at the end of the slot, i.e., at instant $t+1$, that node is viewed as the 
%...In each slot, there is a single node that is permitted to transmit. At the beginning of the next slot this node is viewed as the 
\enquote{node under service,} or the \enquote{incumbent,} the identity of which is assumed known to all the nodes, a property that is ensured by the information structure and our distributed algorithms.

We model the system as a network of $N$ parallel queues with a single shared server, whose service has to be scheduled between the $N$ queues. %$\mathcal{I}:=\{1, 2, \cdots, N\}$ represents the set of queues. %server, where a transmitter and its intended receiver are modeled as one queue.\\
We denote by $A_j(t)$, the number of arrivals to Queue $j$ \emph{at} time slot boundary $t~(t=0,1,2,\cdots)$. $A_j(t)$ is assumed i.i.d Bernoulli with $P\{A_j(t)=1\}=\lambda_j$. %rate $\lambda_j$, i.e.,  %We denote by $a_{i,n}$ the instant at which the $n^{th}$ arrival occurs to queue $i\in I$. 
The arrival processes are assumed independent of each other and the system backlog. The backlog at Queue $j$ at the beginning of time slot $t$ (i.e., at $t+$) is denoted by $Q_j(t)$ (see Fig.~\ref{figEmbeddingQueueLengthArrivalAndDepartureProcesses}). We assume that each slot can carry exactly one data packet, after allowing time for any protocol overhead.
%packet transmission time is equal to the slot duration.
A packet transmission from Node~$j$ in slot $t$ leads to a \enquote{departure} from the corresponding queue, which is viewed as occurring at $(t+1)-$, i.e., just before the end of slot $t$. $D_j(t)\in\{0,1\}$ indicates whether Queue~$j$ is scheduled for service in slot $t.$ 
%Departures, if any, complete by the end of a time slot (i.e., by  $t^-$), and are denoted by $\{D_i(t),~t\geq0\}$.
We assume packet transmission success probability to be $1$ and remove this assumption in \cite[Sec.~11.12]{mohan-etal21low-delay-iot}. It follows from the embedding of the processes described, that the evolution of the queue-length process at Queue $j$ can be described by. 
\begin{eqnarray}
 Q_j(t+1)&=&(Q_j(t)-D_j(t))^++A_j(t+1),\\%\text{ with}\\\hspace{-1.5cm}
 Q_j(0)&=&A_j(0),\nonumber
\label{eqn:qEvolution}
\end{eqnarray}
where for all $x\in\mathbb{R},(x)^+=$max$(x,0)$. \\
\textbf{Stability and Delay. }Since at most one queue (equivalently, \enquote{node}) can be scheduled for transmission in a slot and at most one packet can be transmitted in one slot, the capacity region \cite{tassiulas92stability} of this system is given by
%\footnote{In the sequel we will use \enquote{node} and \enquote{queue} interchangeably.}
%------------------
\begin{equation}
\boldsymbol{\Lambda}=\left\lbrace\boldsymbol{\lambda}\in\mathbb{R}^N_+:\sum_{j=1}^N\lambda_j<1\right\rbrace.
\label{eqnBasicCapacityRegion}
\end{equation}
%------------------
\looseness=-1 Define $\mathbf{Q}(t)=\left[Q_1(t),\cdots,Q_N(t)\right],~t\geq0.$ We say that the system is \emph{stable} if the system backlog Markov chain, $\left\lbrace\mathbf{Q}(t),~t\geq0\right\rbrace$, is positive recurrent. A protocol that is capable of stabilizing any vector in $\mathbf{\Lambda}$ is said to be \textit{Throughput Optimal} \cite{tassiulas92stability}.
Delay is defined as the number of slots between the instant a packet enters a queue and the instant it leaves the queue. %{\color{blue}Delay} % and, hence, the system. 
%, while service
% time is defined as the duration between a packet entering the Head-of-the-Line (HOL) position and leaving the
% queue. 
Note that this along with the embedding described so far means that a packet experiences a delay of at least one slot.
%--------------------------------------
% \begin{figure}[tb]
% \centering%[height=5.25cm, width=17cm]\hspace{-0.5cm}%
% \includegraphics[height=4.8cm, width=8.9cm]{figures/figure_embedding_queueing_and_arrival_processes.png}
% \caption{Figure showing how the queue length, arrival and service processes are embedded.}
% %Note that $t+$ and $t-$ are the instants just before the beginning of slot $t$, i.e., $t+=\lim_{\epsilon\downarrow0}t+\epsilon$, and $t-=\lim_{\epsilon\downarrow0}t-\epsilon$.
% \label{figEmbeddingQueueLengthArrivalAndDepartureProcesses}
% \end{figure}
% \begin{figure}[tb]
% \centering%[height=5.25cm, width=17cm]\hspace{-0.5cm}%
% \includegraphics[height=4.8cm, width=8.9cm]{figures/figure_embedding_queueing_and_arrival_processes.png}
% \caption{Figure showing how the queue length, arrival and service processes are embedded.}
% %Note that $t+$ and $t-$ are the instants just before the beginning of slot $t$, i.e., $t+=\lim_{\epsilon\downarrow0}t+\epsilon$, and $t-=\lim_{\epsilon\downarrow0}t-\epsilon$.
% \label{figEmbeddingQueueLengthArrivalAndDepartureProcesses}
% \end{figure}
\begin{figure}[tb]
\centering%[height=5.25cm, width=17cm]\hspace{-0.5cm}%
\tikzset{every picture/.style={line width=0.75pt}} %set default line width to 0.75pt        
\resizebox{7.00cm}{3.2500cm}{
\begin{tikzpicture}[x=0.75pt,y=0.75pt,yscale=-1,xscale=1]
%uncomment if require: \path (0,273); %set diagram left start at 0, and has height of 273

%Straight Lines [id:da07024712356696838] 
\draw [line width=3.75]    (5.8,200.6) -- (469.8,200.6) ;
%Straight Lines [id:da5165905956246457] 
\draw [line width=3]    (34.8,102.6) -- (34.8,200.6) ;
%Straight Lines [id:da3511931442911944] 
\draw [line width=3]    (275.8,107.6) -- (275.8,201.6) ;
%Straight Lines [id:da033625715632004116] 
\draw [color={rgb, 255:red, 24; green, 15; blue, 202 }  ,draw opacity=1 ][line width=2.25]    (39.8,236) -- (270.8,236) ;
\draw [shift={(275.8,236)}, rotate = 180] [fill={rgb, 255:red, 24; green, 15; blue, 202 }  ,fill opacity=1 ][line width=0.08]  [draw opacity=0] (14.29,-6.86) -- (0,0) -- (14.29,6.86) -- cycle    ;
\draw [shift={(34.8,236)}, rotate = 0] [fill={rgb, 255:red, 24; green, 15; blue, 202 }  ,fill opacity=1 ][line width=0.08]  [draw opacity=0] (14.29,-6.86) -- (0,0) -- (14.29,6.86) -- cycle    ;
%Curve Lines [id:da14567811814653386] 
\draw [color={rgb, 255:red, 24; green, 15; blue, 202 }  ,draw opacity=1 ][line width=2.25]    (69.8,65.6) .. controls (78.3,90.17) and (70.74,104.92) .. (44.56,112.37) ;
\draw [shift={(39.8,113.6)}, rotate = 346.87] [fill={rgb, 255:red, 24; green, 15; blue, 202 }  ,fill opacity=1 ][line width=0.08]  [draw opacity=0] (14.29,-6.86) -- (0,0) -- (14.29,6.86) -- cycle    ;
%Curve Lines [id:da4371970002881418] 
\draw [color={rgb, 255:red, 24; green, 15; blue, 202 }  ,draw opacity=1 ][line width=2.25]    (332.8,86.6) .. controls (328.15,119.15) and (306.2,121.43) .. (285.46,124.81) ;
\draw [shift={(280.8,125.6)}, rotate = 349.7] [fill={rgb, 255:red, 24; green, 15; blue, 202 }  ,fill opacity=1 ][line width=0.08]  [draw opacity=0] (14.29,-6.86) -- (0,0) -- (14.29,6.86) -- cycle    ;
%Curve Lines [id:da35275031725111394] 
\draw [color={rgb, 255:red, 92; green, 166; blue, 7 }  ,draw opacity=1 ][line width=2.25]    (273.8,196.6) .. controls (237.14,199.42) and (237.59,186.34) .. (226.15,164.82) ;
\draw [shift={(223.8,160.6)}, rotate = 59.74] [fill={rgb, 255:red, 92; green, 166; blue, 7 }  ,fill opacity=1 ][line width=0.08]  [draw opacity=0] (14.29,-6.86) -- (0,0) -- (14.29,6.86) -- cycle    ;
%Straight Lines [id:da3640405278484238] 
\draw [color={rgb, 255:red, 208; green, 2; blue, 27 }  ,draw opacity=1 ][line width=2.25]    (276,29) -- (276,44.6) -- (276,94.6) ;
\draw [shift={(276,99.6)}, rotate = 270] [fill={rgb, 255:red, 208; green, 2; blue, 27 }  ,fill opacity=1 ][line width=0.08]  [draw opacity=0] (14.29,-6.86) -- (0,0) -- (14.29,6.86) -- cycle    ;

% Text Node
\draw (46,33.4) node [anchor=north west][inner sep=0.75pt]  [font=\Large]  {${\textstyle \mathbf{Q_{j}( t)}}$};
% Text Node
\draw (234,0.4) node [anchor=north west][inner sep=0.75pt]  [font=\Large]  {${\textstyle \mathbf{A_{j}( t+1)}}$};
% Text Node
\draw (189,125.4) node [anchor=north west][inner sep=0.75pt]  [font=\Large]  {${\textstyle \mathbf{D_{j}( t)}}$};
% Text Node
\draw (287,49.2) node [anchor=north west][inner sep=0.75pt]  [font=\Large]  {${\textstyle \mathbf{Q_{j}( t) \ =\ ( Q_{j}( t) -D_{j}( t))^{+} +\ A_{j}( t+1)}}$};
% Text Node
\draw (119,244.4) node [anchor=north west][inner sep=0.75pt]  [font=\Large]  {$\mathbf{\text{Slot} \ t}$};
% Text Node
\draw (31,204.4) node [anchor=north west][inner sep=0.75pt]  [font=\Large]  {${\textstyle \mathbf{t}}$};
% Text Node
\draw (252,203.4) node [anchor=north west][inner sep=0.75pt]  [font=\Large]  {${\textstyle \mathbf{t+1}}$};

% \vspace{-0.50cm}

\end{tikzpicture}
}
\caption{Figure showing how the queue length, arrival and service processes are embedded.}
%Note that $t+$ and $t-$ are the instants just before the beginning of slot $t$, i.e., $t+=\lim_{\epsilon\downarrow0}t+\epsilon$, and $t-=\lim_{\epsilon\downarrow0}t-\epsilon$.
\label{figEmbeddingQueueLengthArrivalAndDepartureProcesses}
\vspace{-0.50cm}
\end{figure}
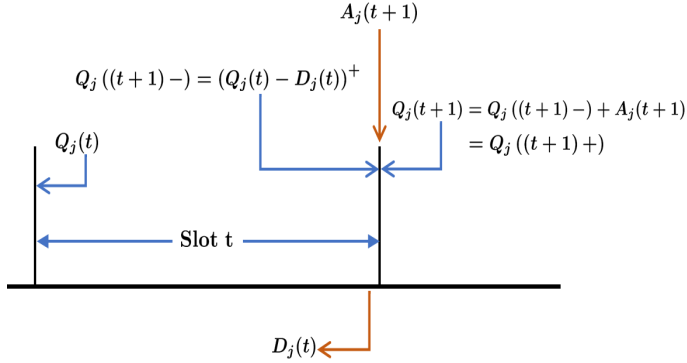
%-----------------------------------------

Let 
$Q_j\stackrel{\Delta}{=}\lim_{t\rightarrow\infty}\frac{1}{t}\sum_{\tau=0}^{t-1}Q_j(\tau)$, let $W_j(k)$ denote the delay experienced by the $k^{th}$ packet in Queue $j$ and $W_j\stackrel{\Delta}{=}\lim_{L\rightarrow\infty}\frac{1}{L}\sum_{k=0}^{L-1}W_j(k)$. Under stability, $Q_j$ and $W_j$ are constant with probability one; we denote these constants also by $Q_j$ and $W_j$. 
%the queue-length processes at all queues will converge with probability $1$ to proper random variables. 
Using Little's Theorem, $W_j=\frac{Q_j}{\lambda_j}$. The delay experienced by a packet randomly chosen from the arriving stream, therefore, is given by % under stationarity
%$D=\sum_{k=1}^N\frac{\lambda_k}{\parallel\boldsymbol{\lambda}\parallel_1}D_k=\frac{1}{{\parallel\boldsymbol{\lambda}\parallel_1}}\sum_{k=1}^NQ_k$.
$W=\sum_{j=1}^N\left(\frac{\lambda_j}{\sum_{k=1}^N\lambda_k}\right)W_j=\frac{1}{{\sum_{k=1}^N\lambda_k}}\sum_{j=1}^NQ_j$.
Our objective in the paper is to develop decentralized protocols that minimize $W.$
%\subsection{Minislot Structure for Scheduling}\label{secMslot}
% \section{Decentralized Scheduler Design:  Overview}\label{secMslot}%\label{secInformationStructures}
% We note, at the outset, that although we will be using terms like \enquote{polling} and \enquote{server} in this section and the next for ease of exposition, \emph{the actual implementation of these mechanisms will be distributed} (for details see Sec.~\ref{secEz}). The minislot structure described in \cite{rhee08zmac} facilitates distributed implementation and we will use this technique in the design of our hybrid protocols. Every slot is divided into two parts, a short \textit{scheduling portion} followed by a much longer \textit{transmission portion} (see Fig.~\ref{figMinislotStructure}). The scheduling portion is further divided into two parts, each consisting of several minislots. The first portion has $T_p$ ($\geq 1$) minislots and second has $T_c$ ($\geq 0$) minislots. % is much smaller than the %Fig.~\ref{minislotStructure} shows 
\\
\textbf{Centralized vs Distributed Scheduling.}\label{secMslot} In the setting described in Sec.~\ref{secSysMod}, if, at the beginning of every slot, the queue lengths are all known to a central scheduler, then (assuming zero queue-switching overheads) it suffices to simply schedule \emph{any} nonempty queue. 
The sum queue length process would then be stochastically equivalent to a discrete time, work conserving, single server queue, whose arrival process is the superposition of the $N$ arrival processes. The mean delay (equivalently, the mean queue length) would then be the smallest possible. We, however, wish to {\color{blue}avoid any explicit dissemination} of queue length information over the network, and aim to develop a scheduler that the nodes implement in a \emph{decentralized} fashion. 
\\
% \indent Towards that end, in Sec.~\ref{secRefineModel}, we will formally prove that in the present setting, achieving optimal mean delay only requires nodes to keep track of how long back other nodes in the network were served. This is clearly an enormous reduction in the amount of state information to be disseminated networkwide. In Sec.~\ref{secMechanismsForDecentralizedScheduling}, we will leverage this information and a special slot structure that will allow nodes to infer this information without \emph{any} explicit message passing which we then utilize construct our \emph{decentralized} hybrid MAC protocols.
%
%
%\section{Delay-Optimal Polling with Partial Information}
\vspace{-4mm}
\section{Delay Optimal Multiqueue Scheduling with Local and Common Queue Length Information}\label{secRefineModel}% ($T_p=1$, $T_c=0$)
\textbf{Information Structure.} 
%The MAC design problem outlined in Sec.~\ref{secMslot} motivates  the following information structure
Assuming that the collocated setting is such that all nodes can overhear and decode each other's transmissions, we consider the following natural information structure for the multiqueue scheduling problem. If each transmitted packet carries the queue length of the node at the instant it is transmitted, at the beginning of each slot, every node knows the length of every queue at the beginning of every slot in which the queue was allowed to transmit. 
%But since queue length evolution is Markovian (see Eqn.~\ref{eqn:qEvolution}) the nodes need only keep track of the last observed queue lengths\footnote{If Node $i$ transmits in slot $t$, the other nodes get to know $Q_i(t)$, and since a packet was transmitted, they can infer $Q_i\left(\left(t+1\right)-\right)=Q_i(t)-1$. 
% However, the process embedding we have assumed prevents the queues from knowing $Q_i(t+1)$ itself, since $Q_i(t+1)=\left(Q_i(t)-1\right)^++A_i(t+1)$ and $A_i(t+1)$ is \emph{not known}.}, 
% i.e., at the beginning of slot $t$, each queue knows $\left\lbrace
At the beginning of slot $t$, each queue knows $\left\lbrace Q^\pi_j\left(\left(t-V_{j}^\pi(t)\right)-\right),~j\in I\right\rbrace$, where $V_{j}^\pi(t)$ is the number of slots prior to slot $t$ in which Queue~$j$ was allowed to transmit under a generic scheduling policy $\pi$ (the term \enquote{policy} is formally defined in Sec.~\ref{secTheCentralizedScheduler} below). As an example, if the incumbent is Queue~$k$, the packet transmitted in Slot $t-1$ would have carried the queue length $Q^{\pi}_k(t-1)$; since one packet was transmitted, $Q^\pi_k(t-) = Q^\pi_k(t) - 1$, and, since $V^\pi_k(t) = 0$,  this is also $Q^\pi_k \left(\left(t -  V^\pi_k(t)\right)-\right)$. 

In the literature, $V_{j}^\pi(t)$ is also called \enquote{Time Since Last Service} (TSLS) \cite{li-etal17emulating-round-robin}. 
%\enquote{Age of Information} (AoI) \cite{yates-modiano-etal21age-informationSURVEY}, or 
% For example, if $j$ is the queue that transmitted in slot $(t-1)$, then $V^{\pi}_j(t) = 0$, and all other nodes know $Q^{\pi}_j\left((t)-\right)$ (Fig.~\ref{figEmbeddingQueueLengthArrivalAndDepartureProcesses}). 
%i.e., the queue length at node $j$ just at the beginning of slot $t$ \emph{before} the arrival $A_j(t)$. %; additionally, knowing that a transmission took place from Queue $j$ in slot $t-1$, all nodes know that the queue length at the end of slot~$t-1$ is actually one less than $Q^{\pi}_j(t-1)$.
In addition, at the beginning of slot $t$, every node evidently knows its own queue length; this can be viewed as local information at Queue~$j$. %$Q^{\pi}_j(t)$ at the beginning of each slot. 
We seek a distributed scheduling policy where each node acts autonomously based on the queue information structure above, while aiming to achieve a global performance objective (say, minimizing the time average total queue length in the system). 

% We will motivate the solution to this problem by considering \emph{two centralized decision problems} --  determining the optimal action when the incumbent is (i) nonempty and (ii) when the incumbent is empty. Notice that despite the centralized nature of the scheduling, our novel partial information structure involving random \emph{state-dependent} delays renders this analysis highly nontrivial. 
% % in the following manner.} 
% The former is developed in Sec.~\ref{secTheCentralizedScheduler} and then solved in Sec.~\ref{secOptPollCont}. This leaves us with a centralized average cost Markov decision problem, with partial information, which is formulated and solved in Sec.\ref{secDelayOptimalityInSymmetricSystems}. Together, the solutions of these two problems provide the distributed scheduler that we need. It turns out that this solution requires a little more information to be shared amongst the queues. In  Sec.~\ref{secMechanismsForDecentralizedScheduling} we use a device from \cite{rhee08zmac} for providing this additional information to all the nodes, thus paving the way for a practical implementation of the distributed scheduler.
\vspace{-4mm}
\subsection{Our approach to developing an optimal policy} 
Our approach is via two optimal scheduling problems that each requires more than the common information.\\ 
The \emph{\color{blue}first} problem, formulated in Sec.~\ref{secTheCentralizedScheduler}, uses the common information (outlined above) and the incumbent's queue length (which is actually known only to the incumbent), to establish that, for every scheduling policy, there is a  \emph{greedy and exhaustive} scheduler that yields stochastically smaller queue lengths (Sec.~\ref{secOptPollCont}). Due to this result, we can limit ourselves to greedy and exhaustive schedulers. \\
Once we limit to a greedy and exhaustive scheduler, effectively, the decision instants are those at which the incumbent queue becomes empty, at which instants a decision has to be made to switch to one of the other queues. We formulate this \emph{\color{blue}second} problem via an average cost Markov decision process (Sec.~\ref{secMeanDelayOptimalityOfCyclicExhaustiveService}), for which the available information is the common information and some additional information. We show that, for equal arrival rates ($\lambda_k = \lambda , k \in I$), an optimal policy is Stochastic Longest Queue, i.e., at an instant at which an incumbent becomes empty, the queue to which the service switches must not be smaller than any other queue in the stochastic ordering sense. This is also equivalent to serving the {\color{blue}Longest Expected Queue} (LEQ), albeit with equal arrival rates.
%\\ \indent 
Finally, the implementation of this policy only requires a distributed means for all nodes to realize that the incumbent has become empty. In Sec.~\ref{secMechanismsForDecentralizedScheduling}, we use a technique from \cite{rhee08zmac} for providing this additional information to all the nodes, thus implementing the distributed scheduler, which works only with a single bit of  common information known to all the nodes, by virtue of overhearing.\\
\vspace{-9mm}

\subsection{{\color{blue}Problem~1}: Centralized Scheduling with an Augmented Information Structure}\label{secTheCentralizedScheduler}
%\footnote{This notation is adapted from \cite{liu-nain92optimal-polling}.}
A centralized scheduler is defined by a scheduling policy say $\pi$, which (informally speaking), at the beginning of slot $t$, i.e., at $t+$, uses the available information and past actions to determine which queue (if any) should be allowed to transmit in slot $t$. We indicate the use of a given policy $\pi$ by placing the superscript $\pi$ on all processes associated with the system; for example, $Q^{\pi}_j(t),~t \in \{0, 1, \cdots\} $ is the queue length process of Node~$j$ under the policy $\pi$.
%Continuing with the terminology in Sec.\ref{secSysMod}, w
At the beginning of slot $t,~t\geq1$, we denote by $\pi^q_{t-1}$, the queue that had been scheduled in the previous slot, or, equivalently the \enquote{incumbent} at the beginning of slot $t.$ \\
\textbf{Notation for policies. }Under policy $\pi$, the action at $t+$ is denoted by $\mathcal{A}^\pi_{t}:=(\pi^a_t,\pi^q_t),$ where $\pi^a_t$ represents the action to be taken during slot $t$ and $\pi^q_t,$ the queue upon which the action is to be performed. $\pi^a_t\in\{0,1,s_0,s_1\}$ which mean respectively, idle at $\pi^q_t=\pi^q_{t-1}$, serve $\pi^q_t=\pi^q_{t-1}$ if nonempty, switch (away from queue $\pi^q_{t-1}$) to $\pi^q_t$ and idle there, and switch to $\pi^q_t$ and serve it if nonempty. 
With the above in mind, and anticipating a result similar to \cite [Prop.~3.2]{liu-nain92optimal-polling}, we consider a centralized scheduler that has the following information.
%
%
%To begin with, we limit ourselves to scheduling policies in which a single decision is made regarding the transmitter in a slot and \emph{\underline{retained during that slot}}.
%% \begin{figure}[tb]
%% %\hspace{-1.0cm}
%% \centering%[height=5.25cm, width=17cm]
%% \includegraphics[height=3.10cm, width=8.80cm]{figures_for_full_proof/sample_path_To_1.png}
%% \caption{A sample path illustrating scheduling in the system described.}% in Sec.~\ref{secRefineModel}.}
%%% \textbf{$T_c$ has not been shown.}}
%% \label{figSys2_omega}
%% \end{figure}
%The notation used in this section is adopted from \cite{liu-nain92optimal-polling}. 
%%For any policy $\pi$, %denote by $\pi^e_n$, the epoch at which the $n^{th}$ decision is 
%%taken. Since one minislot is spent in 
%%\footnote{Clearly, this instant could be the beginning of a slot or that of a minislot.}
%%and is described by the tuple 
%Under policy $\pi$, the action at $t+$ is denoted by $A^\pi_{t}:=(\pi^a_t,\pi^q_t),$ where $\pi^a_t$ represents the action to be taken during slot $t$ and $\pi^q_t,$ the queue upon which the action is to be performed. $\pi^a_t\in\{0,1,s_0,s_1\}$ which mean: idle at $\pi^q_t=\pi^q_{t-1}$, serve $\pi^q_t=\pi^q_{t-1}$ if nonempty, switch (away from queue $\pi^q_{t-1}$) to $\pi^q_t$ and idle there, and switch to $\pi^q_t$ and serve it if nonempty, respectively. As per the terminology in Sec.\ref{secSysMod} $\pi^q_{t-1}$ is the \enquote{incumbent} at slot $t.$ %the rest of the paper, we will call 
%
%Let $F_{i,t}^\pi$ be the last time before $t$ the backlog of $i$ was known. 
For every $t\geq1$ and policy $\pi$, the \enquote{history} of the policy contains (i) all the actions taken until $t-1,$ (ii) the backlog of each queue at the last instant it was allowed to transmit, %revealed to the system, 
 (iii) the instants at which this backlog was revealed, and (iv) the backlog of the incumbent
%  \footnote{Once again, since the evolution of the system is Markovian, we do not require information about arrivals.}
 at $t$, i.e.,
\begin{eqnarray}\label{eqn:DefnHistoryOfPolicyPi}
H^\pi_t&:=&\left\lbrace\left(\mathcal{A}^\pi_m\right)_{m=0}^{t-1};\left(Q^\pi_j\left(\left(t-V_{j}^\pi(t)\right)-\right)\right)_{j=1}^N; \right. \nonumber\\
&& \left. \left(V_{j}^\pi(t)\right)_{j=1}^N; Q^\pi_{\pi^q_{t-1}}(t)\right\rbrace.
\label{eqn:historyHybridInformation}
\end{eqnarray}
%H^\pi_t:=\left\lbrace\left(A^\pi_m\right)_{m=0}^{t-1};\left(Q^\pi_j(x_j), x_j\leq F_{j}^\pi(t)\right)_{j\in I}; Q^\pi_{\pi^q_{t-1}} %(t)\right\rbrace,
%\end{equation}
%where $Q^\pi_j(x_j)$ is the backlog of queue $j$ at the beginning of %\footnote{This 
%includes the arrival in time slot $t$, if any.} 
%time slot $x_j$.
% Note that everything in $H^\pi_t$ is common knowledge. %apart from $Q^\pi_{\pi^q_{t-1}}(t),$ 
With reference to the information structure defined at the beginning of Sec.~\ref{secRefineModel}, the common information available to all nodes at $t$ is in $H^\pi_t$. 
Let $\mathbb{H}^\pi_t$ denote the set of all histories under $\pi$ up to time $t$. %$\pi^e_n$. \\
A \textit{deterministic admissible policy} is defined as a sequence of measurable functions from 
$\mathbb{H}^\pi_t$ into the 
action space $\mathbb{A}=\left\lbrace 0,1,s_0,s_1\right\rbrace\times I$. 
For policy $\pi$, define
\begin{equation}
 Q^\pi(t)=\sum_{i=1}^NQ^\pi_i(t),
\end{equation}
as the total backlog in the system at the beginning of time slot $t,~\forall t\geq0$.
Let the space of all admissible policies be denoted by $\Pi$. 
%The terms "Non-idling", "Greedy" and "Exhaustive" are used exactly as defined in 
%\cite{liu-nain92optimal-polling}, and the corresponding subsets of $\Pi$ are denoted by $\Upsilon$, $\Pi_g$
%and $\Pi_e$ respectively.
%Following \cite{liu-nain92optimal-polling}, a 
A policy $\pi\in\Pi$ is said to be
\begin{itemize}[leftmargin=0.1cm]
 %\item \emph{non-idling} if the server never idles at an empty queue. 
 %The set of all such policies is denoted by $\Upsilon\subset\Pi$.
\item {\color{blue}\emph{greedy} or \emph{non-idling}} if the server never idles at a nonempty queue, i.e., at $t,$  if the incumbent queue is nonempty, it must be served again, if the decision is not to switch away. The set of all such policies is denoted by $\Pi_g\subset\Pi,$ so  \\$\Pi_g=\left\lbrace\pi\in\Pi\mid Q_{\pi^q_{t-1}}(t)>0\Rightarrow \mathcal{A}^\pi_t \in \left\lbrace 1,s_0,s_1 \right\rbrace\times I,~\forall t\geq0 \right\rbrace$.
\item {\color{blue}\emph{exhaustive}} if the server never switches away from a nonempty queue. This set is denoted by $\Pi_e\subset\Pi$, i.e., \\$\Pi_e=\left\lbrace\pi\in\Pi\mid Q_{\pi^q_{t-1}}(t)>0\Rightarrow \mathcal{A}^\pi_t\in\left\lbrace0,1\right\rbrace\times \left\lbrace\pi^q_{t-1} \right\rbrace,~\forall t\geq0\right\rbrace$.
%\bigg|
 %The set of all such policies is denoted by $\Pi_e\subset\Pi$. 
\end{itemize}
%The description of the scheduling policies might lead one to conclude that the policies require the
%history of the entire system up to the scheduling instant rendering them impossible to implement in a distributed
%manner. We later will show that the optimal policies require only the information that is known to all 
%queues to choose the transmitter in every slot. The i.i.d nature of the arrivals (across time and queues) 
%also allows the system to use only the latest queue lengths for scheduling. This allows distributed 
%implementation.
%\section{Optimality in Polling and Contention Mechanisms}
%\label{secOptPollCont}
%\indent 
We will now present several results with regards to scheduling in this system that will aid our design process.
\vspace{-4.5mm}
\subsection{Optimality of Non-idling and Exhaustive Policies}\label{secOptPollCont}
On applying the next proposition to the centralized problem, we need only restrict attention to policies that let an incumbent continue to transmit its packets until a time $t$ at which its queue is empty. As long as the incumbent has packets, it is optimal not to idle (i.e., transmit in every slot).
%We now present the necessary results.
%\begin{tcolorbox}
\begin{prop}\label{propNonIdlingAndExhOpt}
For the system defined in Sec.~\ref{secRefineModel} and for any policy $\pi\in\Pi$, 
$\exists~\xi\in\Pi_g\cap\Pi_e$, such that 
\begin{equation}
 Q^\xi(t)\stackrel{st}{\leq} Q^\pi(t), ~\forall~t\geq0,
\end{equation}
\label{secExhOpt}
where \enquote{$st$} denotes stochastic ordering.
\end{prop}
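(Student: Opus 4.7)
The plan is to prove the stronger sample-path statement: on a common probability space carrying the IID arrival sequences $\{A_j(t)\}_{t \geq 0,\, j \in I}$, I will construct a greedy-and-exhaustive policy $\xi$ coupled to $\pi$ such that $Q^\xi(t) \leq Q^\pi(t)$ almost surely for every $t \geq 0$; stochastic dominance follows. The construction is inductive, via local interchanges in the spirit of \cite{liu-nain92optimal-polling}, but adapted to our switch-structured action space $\{0, 1, s_0, s_1\} \times I$. I proceed in two stages: first eliminate idling at non-empty queues (reach $\Pi_g$), then eliminate switches away from non-empty queues (reach $\Pi_g \cap \Pi_e$).

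Stage 1 (greedification): scan $\pi$'s trajectory from $t = 0$ and let $t^\star$ be the first slot at which $\pi$ takes action $0$ or $s_0$ while the targeted queue is non-empty. Define $\xi'$ to agree with $\pi$ up to $t^\star - 1$ and, at $t^\star$, replace $0$ by $1$ (respectively, $s_0$ by $s_1$) on the same target. This forces one departure at slot $t^\star$ that $\pi$ does not make, so $\xi'$ leads $\pi$ by one unit in total queue length. From $t^\star + 1$ onward, follow $\pi$'s actions verbatim until the first slot $\tau$ at which $\pi$ transmits from a queue that $\xi'$ has already emptied; at that slot let $\xi'$ idle, which re-equalises total departures while keeping $Q^{\xi'}(t) \leq Q^\pi(t)$ throughout $[t^\star, \tau]$. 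Induction over violations yields a greedy coupled policy $\pi_g$ with $Q^{\pi_g}(t) \leq Q^\pi(t)$ for all $t$.

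Stage 2 (exhaustification) starts from $\pi_g$. Let $\tau$ be the first slot at which $\pi_g$ picks an action in $\{s_0, s_1\} \times I$ while the incumbent $j := \pi_g^q{}_{\tau - 1}$ is non-empty, and let $\sigma > \tau$ be the earliest subsequent slot at which $\pi_g$ actually transmits from $j$ (greediness of $\pi_g$ guarantees such a $\sigma$ almost surely on the coupled sample path, modulo a trivial case in which $j$ is drained by a single later service). Define $\xi''$ by swapping the action at slot $\tau$ with the action at slot $\sigma$: $\xi''$ serves $j$ at slot $\tau$, and at slot $\sigma$ it executes whatever switch-action $\pi_g$ originally performed at $\tau$. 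A direct accounting shows that in $(\tau, \sigma)$ the cumulative departure count from $j$ under $\xi''$ leads that under $\pi_g$ by one at every intermediate slot, while all other queues see identical service, and past $\sigma$ the two systems share identical cumulative service per queue, so $Q^{\xi''}(t) \leq Q^{\pi_g}(t)$ for every $t$. Iterating over exhaustive violations produces the desired $\xi \in \Pi_g \cap \Pi_e$.

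The main obstacle is the state-dependent nature of switch actions: an $s_1$ action yields a departure only if the target queue is non-empty, so $\{s_0, s_1\}$ cannot be treated as a fixed deterministic switchover cost the way classical polling interchanges treat server travel. I handle this by restricting each Stage~2 swap to slots $\sigma$ at which $\pi_g$ genuinely produces a departure on the common sample path, thereby preserving total work done in every window of the coupling. A secondary concern is that $\xi$ must itself be a valid element of $\Pi$, i.e., a measurable function of its own history $H^\xi_t$ rather than of $H^\pi_t$. Since every modification above is triggered by events (queue-length observations and past actions) that are already present in the common information $H^\xi_t$ available to $\xi$, admissibility follows by explicit bookkeeping at each interchange. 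I anticipate this admissibility verification, rather than the interchange arithmetic, to be the delicate point of the full proof.
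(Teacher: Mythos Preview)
Your two-stage interchange is sound in spirit and genuinely different from the paper's construction: the paper does \emph{not} swap two isolated slots but instead uses a one-slot \emph{time shift}. At the first exhaustive violation $n^*$ the paper's modified policy $\gamma$ serves the non-empty incumbent, and then on the entire interval $[n^*+1,m^*]$ it simply executes $\pi$'s actions with a one-slot lag, $(\gamma^a_k,\gamma^q_k)=(\pi^a_{k-1},\pi^q_{k-1})$; at $m^*$ (the first time $\pi$ successfully serves the queue it wastefully switched to) the lag is absorbed and the two policies re-merge. The time-shift device avoids having to reason about the state of the switch target at a later epoch, which is exactly where your swap accounting breaks down.

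Concretely, two of your Stage~2 bookkeeping claims are false as stated. First, ``all other queues see identical service on $(\tau,\sigma)$'' fails whenever $\pi_g$'s switch at $\tau$ was an $s_1$ to a \emph{non-empty} $k$: then $\pi_g$ served $k$ at $\tau$ while $\xi''$ did not, so on $(\tau,\sigma)$ queue $k$ under $\xi''$ carries one extra packet (the total still matches because $j$ carries one fewer). Second, ``past $\sigma$ the two systems share identical cumulative service per queue'' fails in the complementary sub-case: if $\pi_g$ did $s_1$ to an \emph{empty} $k$ at $\tau$, and $k$ happens to be non-empty under $\xi''$ at $\sigma$, then replaying that $s_1$ at $\sigma$ gives $\xi''$ an extra departure from $k$ that $\pi_g$ never gets. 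Also, greediness of $\pi_g$ does \emph{not} guarantee $\sigma<\infty$; a greedy policy may simply never revisit $j$. None of these breaks the desired inequality $Q^{\xi''}(t)\le Q^{\pi_g}(t)$ --- in every sub-case $\xi''$ is weakly ahead in total departures, and any later slot where $\xi''$ cannot replicate a $\pi_g$-departure merely re-equalises them --- but you must replace the per-queue equality claims by the weaker total-departure dominance and continue the ``follow, idle when blocked'' coupling beyond $\sigma$, exactly as you already do in Stage~1.
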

%\end{tcolorbox}

%\prop\label{propNonIdlingAndExhOpt}For the system defined in Sec.~\ref{secRefineModel} and for any policy $\pi\in\Pi$, 
%$\exists~\xi\in\Pi_g\cap\Pi_e$, such that 
%\begin{equation}
% Q^\xi(t)\stackrel{st}{\leq} Q^\pi(t), ~\forall~t\geq0,
%\end{equation}
%\label{secExhOpt}
%where \enquote{$st$} denotes stochastic ordering.

\begin{IEEEproof}[Proof Sketch] The structure of the model is similar to the one in \cite{liu-nain92optimal-polling}, and the result we seek is the same as \cite[Prop.~4.2]{liu-nain92optimal-polling}. There, the model is a centrally scheduled system of queues, with i.i.d. service times, \emph{nonzero} i.i.d. queue switchover times% \footnote{In polling literature, switching/switchover times refer to the time wasted while the server switches from one queue to another.}
, and knowledge of $\left\lbrace Q_i(t-V_{i,t}^\pi),~1\leq i\leq N\right\rbrace$. Since switching times are nonzero, by exhaustively serving a queue the per packet switching time is reduced, rather than switching away from a nonempty queue and then switching back to it in order to complete the service of the remaining packets. However, a {\color{blue}crucial assumption} in \cite{liu-nain92optimal-polling} is that the scheduler instantaneously obtains the queue length of every queue to which it switches. This is important since this limits their study to schedulers that never waste time attempting to serve empty queues, since, upon finding a queue empty, the scheduler is assumed to immediately switch away to another queue. %switch away from it to another queue without wasting any time.

In our system, all queues can observe the empty or nonempty status of a transmitter whenever it is allowed to transmit. However, while there are \emph{no explicit switching times}, due to decentralized scheduling, at switching instants, time could be wasted by switching to empty queues, although other nonempty queues exist. In \cite[Sec.~4.2]{mohan-etal21low-delay-iot}, we extend the coupling argument in \cite{liu-nain92optimal-polling} to establish this result.
%In our system, all queues can observe the empty or non empty status of a transmitter whenever it is allowed to transmit. However, while there are \emph{no explicit switching times}, recall that we will eventually implement a decentralized scheduler which, due to limited information, could waste time by switching to empty queues, although other nonempty queues exist. Hence, In our \emph{unpublished} techreport \cite{mohan-etal17hybrid-protocols} we extend the stochastic coupling argument in \cite{liu-nain92optimal-polling} to establish the same result as in \cite{liu-nain92optimal-polling}, Prop.~4.2.
%This proof is similar to that of Propn.~4.2 in \cite{liu-nain92optimal-polling}. 
%However, from the preamble at the beginning of Sec.~\ref{secRefineModel}, we know that the system wastes a slot every time
%the incumbent \emph{and} the polled queue are found empty. Otherwise, the slot can be used for transmission. 
%The new information structure requires an extension of the proof in \cite{liu-nain92optimal-polling}. 
%Additional cases arise during switchover and slot wastage may or may not happen. Each of these cases needs to
%be considered and dealt with.
% This new information structure and the resulting conditional slot wastage complicates the proof
% considerably. 
%The complete proof can be found in \cite{mohan-etal17hybrid-protocols}, an \textbf{unpublished} technical report.
\end{IEEEproof}
\textbf{Implication for a distributed scheduling policy.} Prop.~\ref{propNonIdlingAndExhOpt} leads to the following simplification for the distributed scheduling problem. Since it is optimal for an incumbent to serve its queue in a non-idling and exhaustive manner, a switching decision needs to be made only when the incumbent's queue becomes empty. Thus, once an incumbent is chosen, we only need a distributed mechanism for all nodes to determine the time $t$ at which the incumbent's queue becomes empty. At such a $t$, based on the common information $V^{\pi}_j(t), 1 \leq j \leq  N$ (where, of course, the element corresponding to the just past incumbent would be 0) the next incumbent needs to be selected in a distributed manner.
We answer this question in Sec.~\ref{secDelayOptimalityInSymmetricSystems} and Prop.~\ref{propSLEQunequal}. 
%\begin{note}
\begin{rem}
We are now working with \emph{truly common} information unlike \eqref{eqn:historyHybridInformation} where $Q^\pi_{\pi^q_{t-1}}(t),$ was known only to the incumbent.
\end{rem}
\vspace{-5mm}
\subsection{Equal Arrival Rates: Mean Delay Optimality of SLQ}\label{secDelayOptimalityInSymmetricSystems}

In this section we show that, when the arrival rates at all queues are equal, SLQ is mean delay optimal. The stability region with equal arrival rates is obviously the interval $\lambda=[0,1/N)$. When the incumbent queue becomes empty at some $t$, the next incumbent has to be chosen in a distributed way, and if the chosen queue turns out to be empty as well then a slot is wasted.
%In other words, it is possible for the server to hit empty queues and waste slots by idling at them. 
%                  mohan-etal16hybrid-macsMASSversion
%Thm.~3.3 in \cite{mohan-etal16hybrid-macsMASSversion} showed that cyclic exhaustive service minimizes \emph{mean} sum backlog and hence mean delay, even with such potential slot wastage.  We now considerably strengthen the result and show that the policy actually gives rise to the smallest system occupancy in a \emph{stochastic ordering} sense in every slot. 

%$V_i(t)$ is the last time Queue $i$\textquotesingle s backlog was known to the system before time slot $t$. 
%With any policy $\pi\in\Pi$, we associate the relations $\{R_t\}_{t=0}^\infty$, where, a pair of queues $(j,k)\in R_n$, if  $F^\pi_{j,n}<F^\pi_{k,n}$ and $Q_k(F^\pi_{k,n})\geq Q_j(F^\pi_{j,n})$, i.e., queue $k$'s backlog was known earlier than queue $j$'s and at $F^\pi_{k,n}$, queue $k$ was at least as large as queue $j$ was at $F^\pi_{j,n}$.
%
We call a policy $\pi\in\Pi$ an \textit{SLQ} (stochastically largest queue) policy if, at every instant $t$, in which a switchover takes place, the new queue chosen is not stochastically smaller than any other queue in the system. Formally, this means that whenever $\pi_t^q\neq\pi_{t-1}^q$, $V_{\pi_t^q}(t)\geq V_k(t)$ for every $k\in I$. The name SLQ arises from the fact that if  $V_l>V_k$, and the two queues were served exhaustively, their current queue lengths, with Bernoulli arrivals, are Binomial($V_l,\lambda$) and Binomial($V_k,\lambda$) random variables 
respectively. Under these conditions, we know that Binomial($V_l,\lambda$) is \emph{stochastically larger} than Binomial($V_k,\lambda$). Hence, the name. \\
 
%
%\prop\label{propSLQstochasticallySmaller} 
%In any symmetric system, for any $\pi\in\Pi$, there exists $\xi\in\Pi_s$, such that 
%\begin{equation}
%Q^\xi(t)\stackrel{st}{\leq}Q^\pi(t),~\forall t\geq0.
%\label{eqnSLQstochasticallySmallest}
%\end{equation}
%
%\begin{IEEEproof} Refer Prop.~III.3. in \cite{mohan-etal17hybrid-protocols}. 
%\end{IEEEproof}
%
\vspace{-3mm}
\begin{rem} Recall that all scheduling decisions are made at the beginnings of slots and the only information available then is 
%whether the incumbent queue is empty/nonempty and the number of slots since each queue in the system was last allowed to transmit. 
%\begin{equation}
% \lambda_i=\lambda~\forall i\in I.
%\end{equation}
%The stability region for such a system is obviously the interval $\lambda=[0,\frac{1}{N})$. 
%All scheduling decisions are made at the beginnings of slots and the common information available to all queues is 
%only information available then is 
whether the incumbent queue is empty and the number of slots since each queue in the system was last allowed to transmit 
\footnote{Note that, by Prop.~\ref{secExhOpt}, the additional knowledge that the incumbent queue has of its own queue length is not useful, since it is optimal for the incumbent to continue to transmit until its own queue is empty.}.
It is this slot-wastage, 
%conditioned on the chosen queue being empty that complicates matters,
 which happens if the chosen queue is empty, that prevents us from using Prop.~5.1 of \cite{liu-nain92optimal-polling} by simply reinterpreting such wasted slots as a part of \emph{switchover} times.
% and also prevents us from simply reinterpreting such slots as \emph{switchover} times.
\end{rem}
%Furthermore, Prop.~\ref{propSLQstochasticallySmaller} only tells us that scheduling stochastically dominated queues is not delay optimal. 
We now explicitly derive \emph{the} delay optimal policy
 for equal arrival rate into all the queues, by formulating the problem as a Markov decision process (MDP).
\subsubsection{Mean Delay Optimality of SLQ}\label{secMeanDelayOptimalityOfCyclicExhaustiveService}
Motivated by Prop.~\ref{secExhOpt}, we restrict ourselves to policies in which the channel is 
%allowed to be 
acquired by another queue only if the queue under service is empty at the beginning of a slot. 
%Denote by $\{V_i(t)\in \{0, 1, 2, \cdots\},t\in\mathbb{N},i\in I\}$, the number of time slots elapsed since the server
%was last allotted to queue $i$ and let\footnote{If $\mathbf{x}$ is an $(N\times1)$ vector, $\mathbf{x}^T$ is its transpose and is a 
%$(1\times N)$ vector.}  $\mathbf{V}(t):=[V_i(t),\cdots,V_N(t)]^T$. 
%Clearly, $V_i(t)=0$ if and only if queue $i$ is the incumbent.
%The server at time $t$, at the end of the first minislot, knows $\mathbf{V}(t)$ and whether the incumbent is empty/nonempty. 
Suppose the queue scheduled in slot $t$ is denoted by $\{u_t, t\geq0\}$. We seek a policy, say $\pi$, to choose 
$u_0, u_1, \cdots$ so as to minimize the long term average cost function

{
\small
\begin{equation}
 \limsup_{T\rightarrow\infty}\frac{1}{T}\mathbb{E}_{s(0)}\sum_{t=0}^{T-1}\left(\sum_{i=1}^NQ_i(t)\right),
 \label{eqnTACost}
\end{equation}
}

where $s(0)=[Q_1(0),\cdots,Q_N(0)]$ is the initial state of the system. This is the mean
long-term backlog in the system and, as described at the end of Sec.~\ref{secSysMod} can be used as a proxy
for mean delay. 
% In order to solve this problem, we first cast it 
We first cast the problem as an $\alpha$-discounted cost minimization problem which
involves minimizing the total discounted cost %cost function
\begin{equation}
 \mathbb{E}_{s(0)}\sum_{t=0}^\infty\alpha^t\left(\sum_{i=1}^NQ_i(t)\right),
\end{equation}
where $\alpha\in(0,1)$ is the discount factor. 
We will later arrive at the solution to the long-term time averaged cost, (\ref{eqnTACost}), using a limiting
procedure with a sequence of optimal policies $\pi_k$ corresponding to a sequence of discount
factors $\alpha_k\uparrow1$. 
% Clearly, $\sum_{i=1}^NQ_i(t)$ is the holding cost of customers assuming unit
% cost per step per customer. 
%In what follows $u_{t}$ shall denote the queue chosen for service in slot $t$.
%It should be noted that Propn.~5.1 of \cite{liu-nain92optimal-polling} cannot be directly used here, 
%since the optimal policies derived there have the property that the server never dwells at an
%empty queue. 
%% But, as the definition of $n^*$ in the proof of Proposition.\ref{secExhOpt} shows, this can be 
%% violated in the system considered here. Specifically, 
%However, as mentioned at the end of Sec.~\ref{secMslot}, the 
%queues only know if a queue was non-empty the last time it was polled. This always leaves open the 
%possibility of scheduling an empty queue and wasting the entire slot. It is this slot-wastage 
%conditioned on the chosen queue being empty that complicates matters, and also prevents us from 
%simply reinterpreting such slots as \enquote{\emph{switchover}} 
%times\footnote{In polling literature, switchover times refer to the time wasted while the 
%server switches from one queue to another.}.
%it requires the optimal policy to never idle at empty queues.
% We will now present the POMDP and COMDP formulations and prove their equivalence. We will then solve the
% COMDP and finally, provide a solution to the long-term time averaged cost problem.
In what follows, we invoke the results of Prop.~\ref{secExhOpt}, and focus only on $\Pi_g\cap\Pi_e$. 
%%We call a policy $\pi$ an \textbf{SLQ} (\textit{Stochastically Largest Queue}) policy if, whenever the incumbent is found empty the policy never chooses queue $k$ over queue $l$ if $V_l>V_k$. The name SLQ arises 
%from the fact that if  $V_l>V_k$, and the two queues were served exhaustively, their current queue lengths, 
%with Bernoulli arrivals, are Binomial($V_l,\lambda$) and Binomial($V_k,\lambda$) random variables 
%respectively. Under these conditions, we know that Binomial($V_l,\lambda$) is \emph{stochastically larger} 
%than Binomial($V_k,\lambda$).

\subsubsection{Formulating The Discounted Cost MDP}%: Solving the MDPs}
\label{secOptSlq}
% In order to derive the optimal scheduling policy, we will slightly generalize the information structure as follows. 
We will consider an information structure that includes the common information, and some local information. 
At the beginning of slot $t$ all the queues know (i) the index of the incumbent (ii) the number of slots since every queue was last served $\{V_k(t),k\in I\}$ (iii) the queue-length of the incumbent (at $t$), and (iv) residual known queue lengths in the queues (if any). Although we begin with this elaborate information structure, which has common information (known at all nodes), and local information (known only to each node), it will turn out that the optimal policy is such that (a) the action it takes at $t$ is a function only of $\{V_k(t), k \in [N]\}$ and the empty-nonempty status of the incumbent, and (b) it serves queues exhaustively, so the residual known queue lengths in the queues are all zero. Thus, the information structure we have in our problem (plus, an item of information that easily shared with all the nodes by using a simple, known technique (Sec.~\ref{secMechanismsForDecentralizedScheduling})) is sufficient to implement the optimal policy we derive and the \emph{generalization}  of the information structure (above) is only required for the proof of delay optimality to go through. 

%Details of the %POMDP formulation and the equivalent 
%MDP formulation can be found in Sec.III.C.1. of \cite{mohan-etal17hybrid-protocols}. 
%We give a brief description of the MDP formulation below\footnote{For details see \cite{mohan-etal17hybrid-protocols}.}.%of the COMDP. 
%It is quite natural to directly formulate the problem as a COMDP and we proceed to do so here. 
The MDP, under the information structure wherein the queues know the backlog of the incumbent and $\mathbf{V}(t)$, has $s(t)=\left[Q_{u_{t-1}}(t);\mathbf{V}(t);\mathbf{r}(t);u_{t-1}\right]^T$ as its state. 
The new coordinate $\mathbf{r}(t)$ is explained as follows. At the beginning of slot $t$ (just after arrivals occur), the server knows that Queue~$j$ ($j\neq u_{t-1}$) has length $r_j(t)+U_j,$ where $U_j$ is a Binomial$(V_j(t),\lambda)$ random variable.
% The new coordinate $\mathbf{r}(t)$ is explained as follows. At the beginning of slot $t$ (just after arrivals occur), the server knows that queue ($j\neq u_{t-1}$) has length $r_j(t)+U_j$ and $U_j$ is a Binomial$(V_j(t),\lambda)$ random variable. 
To accommodate the new information $\mathbf{r}(t)$, we expand our focus to the class of policies that always serve any queue $j$ such that $r_j(t)>0$ in time slot $t$. This means that if some queue $j$ has $r_j(t)>0$ at the beginning of time slot $t$, it is served in that slot and not the incumbent ($u_{t-1}$), even if $Q_{u_{t-1}}(t)>0.$ The underlying principle is that we can serve any queue that is \emph{known to be nonempty} at the beginning of the slot.

But these policies are still greedy and exhaustive, in the (restricted) sense that whenever $\mathbf{r}(t)=\mathbf{0}$ and $Q_{u_{t-1}}(t)>0$, $u_t=u_{t-1}$, i.e., if no other queue is known to have packets and the incumbent is nonempty, it \emph{is} served. This is a purely technical construction required for the proof of Thm.~\ref{thmCycExhDelOpt} and, as we will show later, the optimal policy (i.e., cyclic exhaustive service) will not require the knowledge of $\mathbf{r}(t)$ at all. %Furthermore, since the optimal policy is exhaustive, no queue is left non empty once service to it begins. This means that states with $\mathbf{r}(t)\neq\mathbf{0}$ are all \emph{transient}. 
%Clearly, $s'(t)\in\mathbb{N}^N\times\mathbb{N}^N\times\mathbb{N}^{N-1}$. 
The action in every slot $t$ with an empty incumbent, involves choosing a queue $u_t\in I$ and the single-step cost is the expected sum of the current queue lengths conditioned on the
current state, given by
\vspace{-0.2cm}
% {
% \small
% \begin{eqnarray}
%  c(s(t),u_t)=\mathbb{E}\left[\sum_{i=1}^N Q_i(t)|s(t)\right]\hspace{2.35cm}\nonumber\\
%  =Q_{u_{t-1}}(t)+\lambda\sum_{j\neq u_{t-1}}V_j(t)+\sum_{i\in I}r_i(t).\hspace{-0.5cm}
%  \label{eqnCOMDPssCost}
% \end{eqnarray}
% }
{
\small
\begin{eqnarray}
 c(s(t),u_t)=\mathbb{E}\left[\sum_{i=1}^N Q_i(t)|s(t)\right]
 =Q_{u_{t-1}}(t)+\lambda\sum_{j\neq u_{t-1}}V_j(t)+\sum_{i\in I}r_i(t).
 %\hspace{-7mm}
 \label{eqnCOMDPssCost}
\end{eqnarray}
}

Define the state space as $\mathbb{S}:=\mathbb{N}\times\mathbb{N}^N\times\mathbb{N}^N\times I$ and optimal discounted cost $J^*:\mathbb{S}\longmapsto\mathbb{R}_+$ starting in state $s(0)\in\mathbb{S}$ as 
\begin{equation}
 J^*(s(0))=min_{\xi}~\mathbb{E}^\xi_{s(0)}\sum_{t=0}^\infty\alpha^t\mathbb{E}\left[\sum_{i=1}^NQ_i(t)|s(t)\right],
\label{eqnDiscountedCostOfCOMDP}
\end{equation}
and let $q$ represent the backlog of Queue~$i$, the incumbent. 
% Note once again, that we have restricted our focus to policies within $\Pi_g\cap\Pi_e.$
The Bellman Optimality equations \cite{bertsekas95mdp-control} associated with the MDP formulation as
described above are as given in (\ref{eqnBellmanOpt})
(we denote $J^*(q; \mathbf{V}; \mathbf{r}; i)$ by $J_i^*(q; \mathbf{V}; \mathbf{r})$).
%\begin{strip}%[!ht]
%\begin{eqnarray*}
\vspace{-0.2cm}
{
\small
\begin{align}
%\footnotesize
J^*_i(q>0; \mathbf{V}; \mathbf{r}=\mathbf{0})&=q+\lambda \sum_{k \neq i} V_k +\alpha \mathbb{E} J^*_i\bigg(q-1+A;\nonumber\\%\hspace{4cm}
V_i=0,\mathbf{V}_{-i}+\mathbf{1}; \mathbf{r}=\mathbf{0}\bigg),\hspace{-5.5cm}\nonumber
\end{align}
\vspace{-0.2cm}
\begin{align}
J^*_i(q=0; \mathbf{V}; \mathbf{r}=\mathbf{0})=\lambda \sum_{k \neq i} V_k +\alpha\min_{j\neq i}\mathbb{E} J^*_j\bigg( B(V_j);\hspace{2cm}\nonumber\\%
V_j=0,\mathbf{V}_{-j}+\mathbf{1}; \mathbf{r}=\mathbf{0} \bigg),\hspace{2cm}
\label{eqnBellmanOpt}
\end{align} 
}
% {
% \small
% \begin{align}
% %\footnotesize
% \hspace{-7mm}
% J^*_i(q>0; \mathbf{V}; \mathbf{r}=\mathbf{0})&=q+\lambda \sum_{k \neq i} V_k +\alpha \mathbb{E} J^*_i\bigg(q-1+A; V_i=0,\mathbf{V}_{-i}+\mathbf{1}; \mathbf{r}=\mathbf{0}\bigg),\nonumber
% \end{align}

% \begin{align}
% \hspace{-7mm}
% J^*_i(q=0; \mathbf{V}; \mathbf{r}=\mathbf{0})=\lambda \sum_{k \neq i} V_k +\alpha\min_{j\neq i}\mathbb{E} J^*_j\bigg( B(V_j); V_j=0,\mathbf{V}_{-j}+\mathbf{1}; \mathbf{r}=\mathbf{0} \bigg),
% \label{eqnBellmanOpt}
% \end{align} 
% }

%\normalsize
%\end{eqnarray} 
%\end{strip}
In \eqref{eqnBellmanOpt}, $A$ is a generic Bernoulli($\lambda$) random variable, $\mathbf{V}_{-i}=[V_1,\cdots,V_{i-1},V_{i+1},\cdots,V_N]$ and $\mathbf{1}\in\mathbb{R}^{(N-1)}$ is the vector with 1's at all coordinates. Finally, if random variable $C$ is distributed Binomial($V_j,\lambda$), $B(V_j)$ is a random variable whose distribution is the same as that of $(C-1)^++A$. %If $r_j\geq1$ for some $j\in I-\{i\}$, the server first serves the $\sum_{j\neq i}r_j$ packets and then returns to queue $i$. %This also means that states with some $r_j\geq1$ are all transient under such policies.
%In fact, this case is only required for the proof of theorem \ref{thmCycExhDelOpt}, and 
The detailed formulation of the MDP and the Bellman Optimality equations for the case where $r_j\geq1$ for some $j\in I-\{i\}$ can be found in \cite[Sec.~11.6]{mohan-etal21low-delay-iot}. 
\subsubsection{Solution to the Discounted Cost MDP}\label{secPlcyStr}
Clearly, no decision needs to be taken when either $q>0$ or (from Prop.~\ref{propNonIdlingAndExhOpt},)
 when $\mathbf{r}\neq\mathbf{0}$. Since the policies
we seek are non-idling and exhaustive, they will simply continue to serve the incumbent when $q>0$ and therefore, states with $\mathbf{r}\neq\mathbf{0}$ never actually arise. %In keeping with the definition of SLQ policies, 
We will now show that if $q=0$ and $\mathbf{r}=\mathbf{0}$, the system must choose 
% $\arg \min_{j \neq i} \mathbb{E} J^*_j \bigg( B(V_j);\mathbf{V}_{-j}+\mathbf{1};\mathbf{r}_{-j}=\mathbf{0} \bigg)$. 
% Let  
$n:=\arg \max_j V_j$. In what follows, let $V_m < V_n$ for some $m\in I,~\text{and}~n\neq m$. 
We now prove that choosing $n$ results in the least cost.%We need to prove that
%\begin{tcolorbox}
\begin{thm}\label{thmCycExhDelOpt}
When\footnote{In what follows, by \enquote{$V'_m=0, \mathbf{V}_{-m}+\mathbf{1}$,} we mean the $m$\textsuperscript{th} coordinate of $\mathbf{V}$ is $0$ and the other $N-1$ coordinates increase by $1.$} $V_m<V_n$ and $\lambda_i=\lambda,~\forall i\in[N],$
{\small
\begin{eqnarray}
\mathbb{E} J^*_n\bigg( B(V_n); V'_n=0, \mathbf{V}_{-n}+\mathbf{1};\mathbf{r}_{-n}=\mathbf{0}\bigg)\leq\nonumber\hspace{1cm}\\
\mathbb{E} J^*_m \bigg( B(V_m); V'_m=0, \mathbf{V}_{-m}+\mathbf{1};\mathbf{r}_{-m}=\mathbf{0}\bigg).
\label{eqn:cyclicExhaustiveOptimal}
\end{eqnarray}
%Hence, $\pi^*$ solves the discounted cost MDP. 
}
\end{thm}%\end{thm}
%\end{tcolorbox}

%\begin{IEEEproof}
\begin{proof}The expectation in \eqref{eqn:cyclicExhaustiveOptimal} is with respect to arrivals.
Intuitively, with equal arrival rates, the queue that has not received service the longest is also the most likely to be non-empty. Hence, attempting to serve this queue will result in the slot being wasted with the least probability. The proof of the theorem can be found in \cite[Sec.~11.7]{mohan-etal21low-delay-iot}. 
\end{proof}
\vspace{-2mm}
\textbf{Implication for a Distributed Scheduler.} The policy above, denoted $\pi^*$ in what follows, chooses $\arg \max_j V_j(t)$ in every slot $t$ with an empty incumbent; in addition, by exhaustive service $\mathbf{r}(t)=\mathbf{0}$ for all $t$. %and $\mathbf{r}(t)=\mathbf{0}$. 
In other words, $\pi^*$ is a stationary Markov policy that, at the beginning of each slot, chooses the incumbent, if nonempty, and the queue that has not been served the longest until that epoch, if the incumbent is found empty.
\\
\indent Now, before moving on to our original average cost problem, we make an important observation about the LEQ scheduling policy. Recall that the LEQ policy chooses $\argmax{1\leq i\leq N}\lambda_iV_i(t)$ at every decision instant.
%\\
% However, with potentially unequal arrival rates, it is natural to expect scheduling decisions to take the rate itself into consideration. We call the policy that, at every scheduling instant, chooses $\argmax{1\leq i\leq N}\lambda_iV_i(t)$, the {\color{blue}Longest Expected Queue} (LEQ) policy.
\prop\label{propSLEQunequal}
The LEQ policy stabilizes all arrival rate vectors in the set

{
\small
\begin{equation}
\Lambda_{LEQ}:=\bigg\{\boldsymbol{\lambda}\in\mathbb{R}^N_+\bigg|\sum_{i=1}^N\lambda_i<1~\text{and}~\min_{1\leq i\leq N}\lambda_i>0\bigg\}.
\label{eqnCapacityOfFullyConnected}
\end{equation}
}
%\end{tcolorbox}

\begin{IEEEproof} The proof essentially involves showing that the expected time to find the next nonempty queue (after serving some queue exhaustively) is bounded. Refer to \cite[Sec.~11.3]{mohan-etal21low-delay-iot} for details of the proof.%\footnote{Please note, once again that the technical report is \textbf{unpublished.}}
%\cite{mohan-etal17hybrid-protocols} for details. 
%Note that the result holds with a slight modification even when service times of packets entering queue $i$ are IID (across time and queues) and distributed with positive mean, say $B_i, 1\leq i\leq N$. In this case, the new capacity region is given by 
%
%\begin{equation}
%\Lambda:=\bigg\{\boldsymbol{\lambda}\in\mathbb{R}^N_+\bigg|\sum_{i=1}^N\lambda_iB_i<1~\text{and}~\min_{1\leq i\leq N}\lambda_i>0\bigg\},
%\label{eqnCapacityOfFullyConnectedWithDifferentServiceTimes}
%\end{equation}
%but the same LEQ algorithm stabilizes this system as well, \emph{without} requiring any information about the service times of the arrival processes. 
\end{IEEEproof}
%
%%\begin{prop}
%\begin{thm}\label{thmCycExhDelOpt}
%When $V_m<V_n$,
%\begin{eqnarray}
%\mathbb{E} J^*_n\bigg( B(V_n); V'_n=0, \mathbf{V}_{-n}+\mathbf{1};\mathbf{r}_{-n}=\mathbf{0}\bigg)\leq\nonumber\hspace{1cm}\\
%\mathbb{E} J^*_m \bigg( B(V_m); V'_m=0, \mathbf{V}_{-m}+\mathbf{1};\mathbf{r}_{-m}=\mathbf{0}\bigg)
%\end{eqnarray}
%\end{thm}%\end{thm}
%%\end{prop}
\begin{rem}
% \begin{enumerate}
%     \item 
%     \emph{Comparing the definition of the region $\Lambda_{LEQ}$ with that of the region $\Lambda$ in Eqn.~\ref{eqnBasicCapacityRegion} shows quite clearly, that the LEQ policy is throughput optimal.}
%     \item 
    \emph{With $\lambda_i = \lambda, \forall i$, we have and the policy is easily seen to be a \emph{cyclic} exhaustive policy, i.e., queues are served in a fixed cycle, with each queue being served to exhaustion (for a quick example, see \cite[Sec.~11.4]{mohan-etal21low-delay-iot}).
    $\argmax{1\leq i\leq N}\lambda_iV_i(t)=\argmax{1\leq i\leq N}V_i(t)$ and the policy reduces to a {\color{blue}\emph{cyclic} exhaustive service} policy.}
% \end{enumerate}
\end{rem}

In the literature, $V_j(t)$ is also called Time-Since-Last-Service (TSLS). The TSLS-based algorithm proposed in \cite{li-etal17emulating-round-robin} schedules queues in a round robin fashion and requires knowledge of the empty-non empty status of every queue in the network in every time slot. However our theoretical results and the QZMAC algorithm, both of which \emph{preceded}  \cite{li-etal17emulating-round-robin}, require this information of queues only at the last time they were served. Our algorithms, therefore, actually are amenable to decentralized implementation.\\
\indent In Prop.~\ref{prop:classTOStatMarkovPolicies} in the Appendix, we show that the class of stabilizing, stationary Markov policies is \emph{not} a singleton, and the policies therein have different mean delays. Thus, $\pi^*$ is not a trivial solution to our MDP.
\subsubsection{The Average-Cost Criterion}\label{secTheAverageCostCriterion}
We use the technique described in \cite{sennott1989average-cost-mdps} to show that $\pi^*$ is
optimal for the long term time-averaged cost criterion as well. We consider a sequence of discount
factors\footnote{Given a sequence $\{x_n\}_{n\geq1},\{x_n\}\uparrow1$ means $x_n$ approaches $1$ from below, in an increasing manner.} $\{\alpha_n\}\uparrow1$ and the corresponding optimal policies and show that under certain conditions, the limit point of this sequence solves the Average-Cost problem, i.e., \eqref{eqnTACost}. For details, refer \cite[Sec.~4.5]{mohan-etal21low-delay-iot}. 
Furthermore, in \cite[Thm.~3]{mohan-etal21low-delay-iot}, we show another desirable property of $\pi^*$, namely, that it is \emph{\color{blue}throughput optimal.}
% Recall that we began this section with the aim of solving two centralized scheduling problems.  Sec.~\ref{propNonIdlingAndExhOpt} solved the first, and we have now solved the second centralized, i.e., the optimal action the system needs to take when the incumbent is empty. 
\vspace{-4mm}
\subsection{The Distributed SLQ Policy}\label{sec:theDistributedSLQPolicy}

From Prop.~\ref{propNonIdlingAndExhOpt} we concluded that we can limit our scope to greedy and exhaustive policies. This left us with the problem of determining a switching rule when the scheduled queue is exhausted. From the MDP in Sec.~\ref{secDelayOptimalityInSymmetricSystems}, for the case of equal arrival rates,  we have just learnt that at such a switching point, say, $t$, assuming that all nodes know the number of slots back that each node was last served (i.e., $V_k(t), k \in [N]$), it is optimal to switch to a queue $\argmax{k \in [N]} V_k(t)$.  For equal arrival rates, this is the same as the SLQ or the LEQ policy.

The following issues remain in the design of a distributed policy. \textbf{\color{blue}(1)} The design of a distributed mechanism by which all nodes can determine that an incumbent node  has an empty queue at a scheduling instant $t$ (Sec.~\ref{secMechanismsForDecentralizedScheduling}). %\\%We adopt a mechanism from
\textbf{\color{blue}(2)} Greedy and exhaustive service locks out all the other queues for a long time. This leads to short-term unfairness. At the end of Sec.~\ref{sec:handling-CCA-errors-subroutine}, we introduce a technique to address this concern.

\vspace{-3.5mm}
\section{Mechanisms for Decentralized Scheduling}\label{secMechanismsForDecentralizedScheduling} 
%Section IV Mechanisms for Decentralized Scheduling
% We concluded Sec.~\ref{secMslot} by providing a mechanism by which a node can keep track of the backlog of every other node the last time the latter was allowed to transmit. Prop.~\ref{propNonIdlingAndExhOpt}, shows that delay optimal policies with a \emph{centralized} scheduler are non-idling and exhaustive in nature. Motivated by this result we will propose (in Sec.~\ref{secProtocolDesign}) a \emph{decentralized} protocol that switches to a new queue only when the current one is empty. This immediately  obviates the need to know anything more than the empty-nonempty status of the queues.\looseness=-1 
In this section, we describe mechanisms by which transmission and lack of transmission in a slot, by the incumbent node can be inferred by all other nodes in the system, \emph{\color{blue}without any explicit exchange} of information. The lack of transmission by the incumbent node is the trigger for switching to another incumbent.
\\
% this information can be obtained by the nodes in the system \emph{without} any explicit exchange of information. 
\textbf{Transmission sensing:} We assume that all nodes transmit at the same fixed power, and the maximum internode distance is such that every other node can sense the power from a transmitting node. Suppose a node has been scheduled to transmit in a slot. Then, whether or not the node actually transmits can be determined by the other nodes by averaging the received power over a small interval, akin to the \emph{\color{blue}CCA mechanism} (Clear Channel Assessment) %, e.g., IEEE 802.15.4
\cite{kinney01cca-method-wpan-working-group}. For reliable assessment, the slot will need to be of a certain length, and the distance between the nodes will need to be limited. Let such an activity sensing slot be called a \emph{\color{blue}minislot} (Fig.~\ref{figMinislotStructure}). See Sec.~\ref{secFrameStructureTimeSlotsAndMinislots} for further details.%\cite{rhee08zmac}.
%Refer here to the paper where this minislot idea was first introduced.
%%------------------------------------------------------------------------
\begin{figure}[tb]
%\hspace{-0.35cm}%\hspace{0.15cm}%
\centering%[height=5.25cm, width=17cm]
\includegraphics[scale = 0.23]{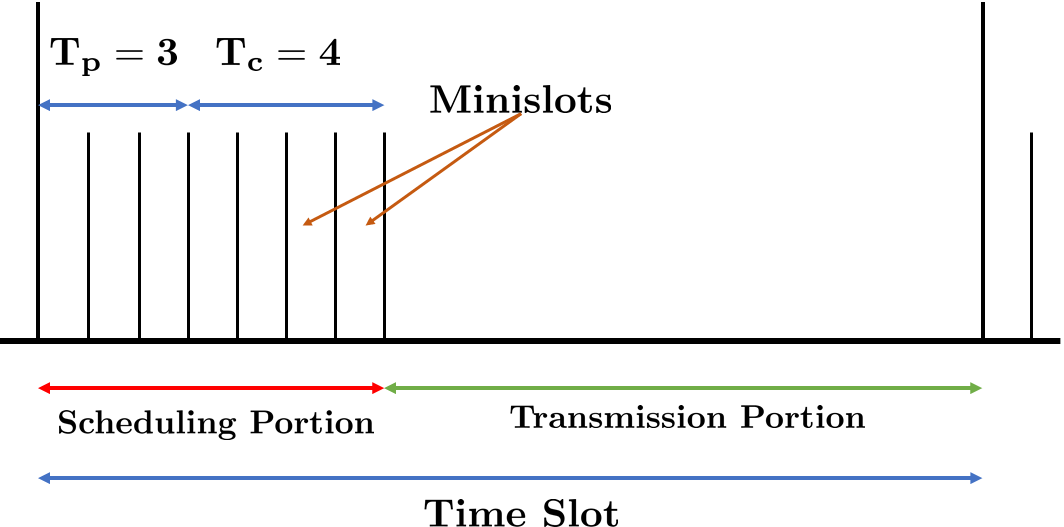}%figure_minislot_structure_NEW.png}
\caption{Illustrating the slot and minislot structures. Since $T_p=3$, \emph{three} poll-and-test procedures can be performed by the system (incumbent plus two other queues). If none of these results in identifying a nonempty queue, the system goes into contention over the next $T_c=4$ minislots.}
\label{figMinislotStructure}
\vspace{-5mm}
\end{figure}
%%------------------------------------------------------------------------
\begin{figure}[tb]
 %\hspace{-1.0cm}
 \centering%[height=5.25cm, width=17cm]
 \includegraphics[scale = 0.23]{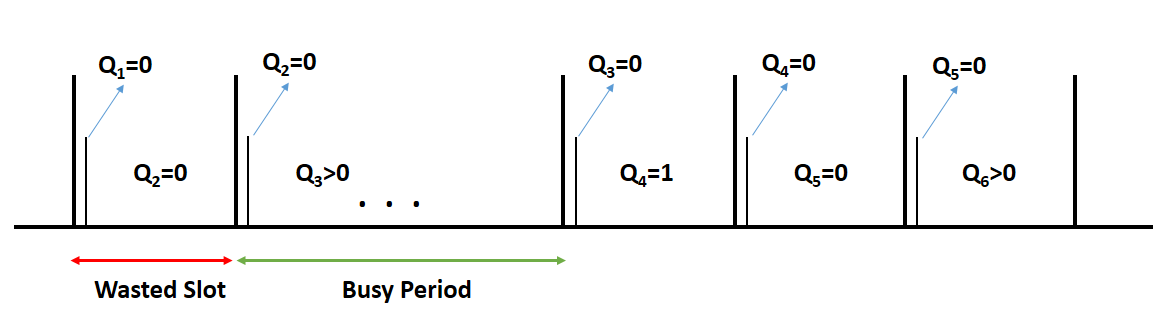}
 \caption{Sample path illustrating how $\pi^*$ is implemented in a system with $T_p=1$ and $T_c=0.$ Of particular note is the initially \emph{wasted} slot. Since the poll-and-test did not identify any nonempty queue and since the system cannot go into contention ($T_c=0$), the slot is wasted.
\vspace{-7mm}
 } %scheduling in the system described.}% in Sec.~\ref{secRefineModel}.}
% \textbf{$T_c$ has not been shown.}}
 \label{figSys2_omega}
 \end{figure} 
\\
\textbf{Further optimization.} \textbf{\color{blue}(1)} When an incumbent queue becomes empty (at, say $t$), an application of the policy results in switching to another queue (say, $k$). There is now a positive probability that $Q_k(t) = 0$, which will result in the slot $t$ being wasted. The above approach of determining the emptying out of the incumbent's queue can be used immediately in the same slot, if there is a provision of a \emph{second} sensing slot. In this manner, if there are $T_p$ sensing slots, up to $T_p$ successive, wasteful switchings can be avoided. Each of these sensing actions can be viewed as a \enquote{poll-and-test} operation.
\\
\textbf{\color{blue}(2)} In a light traffic setting, all the poll-and-test procedures might fail, with high probability. Having a large value of $T_p$ reduces this probability but increases the overhead. So, the next mechanism is \emph{contention,} for which we have $T_c (\geq 0)$ minislots following the poll-and-test slots. The nodes that have not been polled can contend over these $T_c$ minislots, hopefully leading to the identification of one nonempty queue. 
\\
\textbf{\color{blue}(3)} Fig.~\ref{figSys2_omega} illustrates this mechanism for a simple system with $T_p=1,T_c=0$. In the first slot in the figure, the absence of power indicates, to all nodes in the network, that Node~1 is now empty and Node~2 is allowed to transmit. Since Node~2 is also empty, and since no more poll-and-test procedures are possible (since $T_p=1$), the slot is \emph{wasted.} If $T_p=2$, and $T_c\geq1$, sensing that Node~2 is empty, the other nonempty nodes in the network will contend for the slot.
%%-------------------------------------------------------------------------
%  \begin{figure}[tb]
%  %\hspace{-1.0cm}
%  \centering%[height=5.25cm, width=17cm]
%  \includegraphics[scale = 0.23]{figures/sample_path_To_1.png}
%  \caption{Sample path illustrating how $\pi^*$ is implemented in a system with $T_p=1$ and $T_c=0.$ Of particular note is the initially \emph{wasted} slot. Since the poll-and-test did not identify any nonempty queue and since the system cannot go into contention ($T_c=0$), the slot is wasted.
% \vspace{-5mm}
%  } %scheduling in the system described.}% in Sec.~\ref{secRefineModel}.}
% % \textbf{$T_c$ has not been shown.}}
%  \label{figSys2_omega}
%  \end{figure} 
% %\vspace{-4mm}
%%-------------------------------------------------------------------------
\vspace{-2mm}
\section{Protocol Design}\label{secProtocolDesign}
%With the results provided, we are in a position to design the MAC protocols. 
In this section, use the results in Sec.~\ref{secRefineModel} to motivate the practical design of our protocol, \textbf{QZMAC}. 
We will clearly describe how the optimal polling schemes in Section \ref{secOptPollCont} are implemented in a distributed manner which will establish the self-organizing nature of our protocols.
\vspace{-4mm}
\subsection{The EZMAC Protocol}
\label{secEz}
% We explain the limitations of ZMAC and propose another protocol that we call EZMAC
% to alleviate them. 
% % As in ZMAC, we first assume that every time slot is divided into contention
% % and data transmission portions. The contention portion is further divided into two parts, each 
% % consisting of several minislots. The first portion has $T_p$ minislots and is much smaller than the
% %next, which has $T_c$ minislots.\\
% \subsubsection{The ZMAC protocol \cite{rhee08zmac}}\label{secProtocolDesignZMACDetails}
We begin by explaining the limitations of ZMAC. 
As mentioned before, ZMAC first sets up a TDMA schedule allotting a slot to every queue in a
cyclically-repeating frame, so $N$ nodes in the system means a frame with $N$ slots. Slot $j,~1\leq j\leq N,$ in every frame is assigned to Node $j$ which is called the \emph{Primary User} (PU) in this slot. The other $(N-1)$ nodes, i.e., nodes $1, 2,\cdots,j-1,j+1,\cdots,N,$ are called \emph{Secondary Users} (SUs) in this slot. These SUs are allowed to contend for transmission rights in every TDMA slot with an empty PU. This is 
accomplished as follows. At the \emph{beginning} of each time slot, each queue checks if it has a packet. If it does, and this is its
TDMA slot (which means it is the PU), it proceeds to transmit the packet. If the 
current slot is \textit{NOT} its TDMA slot, the queue first checks if the PU is transmitting (for a period 
of 1 minislot). If it hears nothing, it backs off over a duration chosen uniformly randomly over 
$\{T_p+1, \cdots, T_p+T_c\}$, and if the channel is clear, starts transmitting. Collisions are assumed to be
detected instantaneously and that slot is assumed wasted. Note that a mechanism is needed to set up the TDMA schedule before transmissions can begin (e.g., DRAND \cite{rhee06drand}).
The pseudocode for ZMAC is in our Tech Report \cite[Sec.~6.1]{mohan-etal21low-delay-iot}.
\begin{figure}[b]%{.5\textwidth}
\centering%[height=5.25cm, width=17cm]
\includegraphics[scale = 0.3]{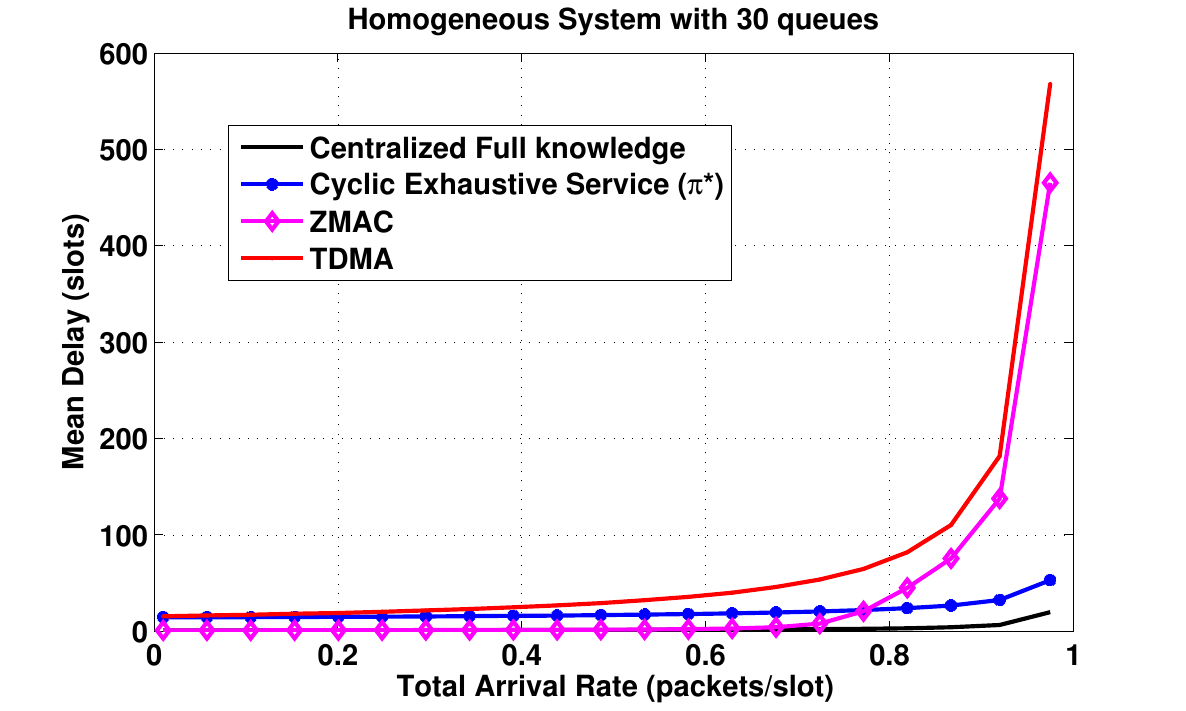}
%{figures_for_full_proof/jfi_tdma_exh_poll_lam_003qs_30.eps}
\caption{Comparing $\pi^*$ and ZMAC in a \emph{symmetric} system with 30 queues. The total arrival rate to the system = $N\lambda\in[0,1)$.}
\label{figCrossoverSLQandZMAC}
\end{figure}
%--------------------------------------------------------------We are now in a position to demonstrate the importance of the results obtained in Sec.~\ref{secDelayOptimalityInSymmetricSystems}. ZMAC uses a simple TDMA system that schedules queues only once every $N$ slots. While TDMA is certainly stabilizing systems with \emph{equal} arrival rates, it is far from delay optimal. Fig.~\ref{figCrossoverSLQandZMAC} shows how Cyclic Exhaustive Service ($\pi^*$) shows enormous improvement over ZMAC at moderate and high loads.  The two curves crossover near $\lambda=0.775$. At these arrival rates, since many queues are nonempty, hitting empty queues proves very costly to the system (in terms of delay). This is because many queues contend during the contention portion of ZMAC (the $T_c$ minislots) diminishing chances of a winner ever emerging. Since TDMA polls a new queue in every slot, it ends up scheduling many empty queues incurring high delay that cannot be mitigated by contention.

As opposed to ZMAC, $\pi^*$ shows larger mean delay at lower arrival rates because the scheduled queue is, with high probability, going to be empty. It is for this reason that the mean delay curves of ZMAC and $\pi^*$ crossover near $\lambda = 0.75$ in Fig.~\ref{figCrossoverSLQandZMAC}. In Sec.~\ref{secQz}, we solve this problem as well, and in doing so, arrive at the design of our protocol, {\bf QZMAC}. 
% \indent Getting back to the development of EZMAC, consider the contention mechanism \emph{in isolation}. If the contention winner can retain rights to access the channel for exactly one slot (which is what happens in ZMAC), each time the PU is empty, contention is needed to decide the next winner. We shall, henceforth, refer to the contention mechanism that allows a contention winner to transmit only in one slot as \enquote{ALOHA.} That ALOHA wastes a lot of slots in contention is clearly demonstrated in the large difference in delay in Fig.~\ref{figAlRal10}. This problem is solved as follows.

% % \begin{figure}[ht]
% % \centering%[height=5.25cm, width=17cm]
% % \includegraphics[height=9.5cm, width=17cm]{figures_for_full_proof/exhaustive_srv_smpl_pth.pdf}
% % \caption{A sample path illustrating how $\gamma$ is derived from $\pi$.}
% % \label{sys2_omega}
% % \end{figure}

% %\indent Before presenting the actual EZMAC protocol 
  
Our first protocol, EZMAC, differs from ZMAC in the contention resolution (CR) portion. Here, once the winner of a contention is determined, it is allowed to transmit in all slots without a PU until it empties. It is assumed that at the end of the winner's transmission, the packet contains an end of transmission message (a bit in the header, perhaps) that can be decoded by the other users. We relegate details to \cite[Sec.~6]{mohan-etal21low-delay-iot}, but simulations results clearly show the improvement this change brings, to both mean delay and the delay CDF. 
\vspace{-5mm}
\subsection{The QZMAC protocol}\label{secQz}
In Sec.~\ref{secOptSlq} we proved that cyclic exhaustive service is delay optimal in systems with equal arrival rates. 
% Furthermore, in Sec.~\ref{secMechanismsForDecentralizedScheduling}, we discussed how 
% In section \ref{secOptSlq} we proved that in systems with equal arrival rates, scheduling the queue that has not been served the longest and serving it exhaustively is delay optimal. 
Recall that Thm.~\ref{thmCycExhDelOpt} therein, assumed $T_p=1.$ However, we use this result to propose a scheduling protocol, QZMAC, for general $T_p\geq1,$ and show through simulations that violating the $T_p=1$ assumption does not hurt the performance of QZMAC. Solving the MDP for general $T_p$ turns out to be a hard problem, and as simulation results in Figures \ref{figDelAll4Qs10} and \ref{figDelAll4Qs30} show, \emph{cannot} result in any dramatic improvement in delay.
We now describe \emph{QZMAC} for $T_p=3$ minislots. Every queue maintains its own copy of a vector $\mathbf{V}(t)$ that is used to render the scheduling process fully distributed. The protocol proceeds as in Protocol.~\ref{protocol:QZMAC}.
 %% --------------------------  QZMAC Protocol begins -------------------------- 
 
  %\begin{figure*}
  \begin{algorithm}[tbh]
    \small 
    \floatname{algorithm}{Protocol}
       %\twocolumn
       \allowdisplaybreaks
       	\caption{{\bf QZMAC} (with $T_p=3$ minislots),} \label{protocol:QZMAC}%{\bf No CCA Errors}
       	\begin{algorithmic}[1]
       %	\hspace{-2.00cm}
       	\State \textbf{Input:} $N,T_c${\color{blue}\Comment{This runs at every Queue~$j,~j\in[N]$ in the network}}
       	
       	\State \textbf{Init:} $t\leftarrow 0,\tau\leftarrow1,PU\leftarrow1,SU\leftarrow2$  {\color{blue}\Comment{$\tau$ keeps track of the minislot number}}
       	\State \textbf{Init:} $\forall~k\in[N], V_k(0)\leftarrow k$ 
       	\While{$t\geq0$}
       	
       	$\tau\leftarrow1$ {\color{blue}\Comment{Keeps track of minislot number.}}
       	 \If{$PU==j$ \&\& $Q_{j}(t)>0$}{\color{blue}\Comment{If you are the PU}}
       	    \State $PU$ transmits packet.
       	    \For{$k=1:N$} 
       	        \State $V_k(t)\leftarrow \left(V_k(t)+1\right)\mathbb{I}_{\{k\neq PU\}}$%+0\times\mathbb{I}_{\{k = PU\}}$
       	    \EndFor
       	    \State GOTO \ref{QZMAC:incrementTime}
       	\Else 
       	    \State $\tau\leftarrow\tau+1$ {\color{blue}\Comment{The 2\textsuperscript{nd} of $T_p$ minislots}}
       	    \State $i^*\leftarrow\argmax{i\in[N]}V_i(t)$ \label{qzmacProtocol:ChooseLEQ}
   	        \State $PU\leftarrow i^*$
       	    \If{$i^*==j$ \&\& $Q_{i^*}(t)>0$} 
       	        \State $PU$ transmits packet.
       	        \For{$k=1:N$}
       	        \State $V_k(t)\leftarrow \left(V_k(t)+1\right)\mathbb{I}_{\{k\neq PU\}}$%+0\times\mathbb{I}_{\{k = PU\}}$
       	        \EndFor
       	        \State GOTO \ref{QZMAC:incrementTime}
       	        \ElsIf{$SU==j$ \&\& $Q_{SU}(t)>0$} 
       	            \State $\tau\leftarrow\tau+1$ {\color{blue}\Comment{The last $T_p$ minislot}}
       	            \State $SU$ transmits packet.
       	            \State GOTO \ref{QZMAC:incrementTime}
       	        \Else {\color{blue}\Comment{SU empty $\Rightarrow$ network enters contention}}
       	            \label{qzmacProtocol:incont}
       	            \State {\bf Over} $\tau\in\{4,5,\cdots,T_c+3\}$ {\bf do}
       	            %\For{$j=1:N$}
       	                \If{$Q_j(t)>0$} {\color{blue}\Comment{Queue~$j$ knows its own backlog}}
       	                    \State $U_j\sim Unif\left(\lbrace4,5,\cdots,T_c+3\rbrace\right)$
       	                    \State Wait for $U_j-4$ minislots. 
       	                    \State If \texttt{CCA == SUCCESS,} Tx pkt.
       	                    \State If no collisions detected, $SU\leftarrow j$
       	                    %\State $SU$ transmits packet.
       	                \EndIf
       	            %\EndFor
       	  \EndIf
       	  \label{qzmacProtocol:outcont}
       	 \EndIf 
       	 \State\label{QZMAC:incrementTime} $t\leftarrow t+1$ {\color{blue}\Comment{Next time slot}}
       	\EndWhile
       	\end{algorithmic} 
 \end{algorithm}
There are three points to note here. \textbf{\color{blue}(1)} QZMAC has clearly been obtained by \emph{separately} optimizing polling and contention protocols. Jointly optimizing both turns out to be intractable and even separate optimization ultimately shows excellent delay performance. 
\textbf{\color{blue}(2)} Note that depending on the value of $T_p$, QZMAC can show a range of behavior. 
When $T_p=1$, the system never enters contention since one minislot is spent in ascertaining that the incumbent is empty and thereafter, $i^*$ (see Step~\ref{qzmacProtocol:ChooseLEQ}) is allowed to transmit. Determining that $i^*$ is empty requires another minislot which is not available since $T_p=1$, and the system can enter contention only when $i^*$ and the SU are empty. 
%So, this is known to the system only at the beginning of the next slot. 
So, in order for the system to enter contention, QZMAC needs $T_p=3$ (one each for the PU, $i^*$ and the SU). Similarly, ZMAC requires $T_p=1$. % and EZMAC requires $T_p=2$.
\textbf{\color{blue}(3)} As mentioned in Sec.~\ref{secPlcyStr}, the use of the $\mathbf{V}(t)$ vector automatically induces a cyclic schedule. 
%and, as (\ref{eqnContV}) shows, works when contention access is additionally present. 
This is a much simpler technique than DRAND, used in ZMAC \cite{rhee06drand}, which involves several rounds of communication among the nodes to converge to a TDMA schedule, even for fully connected interference graphs.
\vspace{-2mm}
\section{Extending QZMAC}\label{secQzmacDesign}
% QZMAC was developed in Sec.~\ref{secQz} %\cite{mohan-etal16hybrid-macsMASSversion} %Sec.~\ref{secProtocolDesign} 
% as a hybrid of certain polled and random access protocols for applications that generate data packets at the same rate. 
In this section, we show how QZMAC can be modified to handle a variety of different applications. 
% We begin by lifting this assumption on equal arrival rates.
\vspace{-4mm}
\subsection{Handling Unequal Arrival Rates}
Prop.~\ref{propSLEQunequal} helps generalize QZMAC very easily to accommodate unequal arrival rates. The only modification necessary here is to Step~\ref{qzmacProtocol:ChooseLEQ} where system now needs to choose $\argmax{1\leq i\leq N}{\color{blue}\lambda_i} V_i(t)$. %The protocol is described in detail in Sec.V.A. 
% In Sec.~\ref{secDistributedLEQ} we discuss distributed implementation and effects of distributed arrival rate estimation.
% In the LEQ policy, keeping track of the vector $\mathbf{V}(t)$ is simple. Each queue simply needs to maintain a count of the number of departures before the queue in service empties. But the arrival rates are a different issue altogether. 
Clearly, finding $\argmax{1\leq i\leq N}\lambda_iV_i(t)$ in every scheduling slot requires knowledge of the entire arrival rate vector $\boldsymbol{\lambda}$. Arrival rate estimation might not be feasible in several sensor networks of today that are expected to begin performing as soon as they are installed. For more information about this \enquote{peel and stick} paradigm refer to \cite{dujovne-etal14ip-enabled-industrial-iot}. 

We resolve this problem by first observing that under stability, the time average of the total number of packets transmitted by any queue tends to the arrival rate. Let $D_i(t)$ be the random variable that denotes whether or not a departure occurred from queue $i$ at the end of slot $t$. Then, under stability,
$\lim_{T\rightarrow\infty}(1/T)\sum_{t=0}^TD_i(t)=\lambda_i,~a.s.$
% \begin{equation}
% \lim_{T\rightarrow\infty}(1/T)\sum_{t=0}^TD_i(t)=\lambda_i,~a.s.
% \end{equation}
So, let $\hat{\lambda}_i(T):=(1/T)\sum_{t=0}^TD_i(t).$
% \begin{equation}
% \hat{\lambda}_i(T):=\frac{1}{T}\sum_{t=0}^TD_i(t).
% \end{equation}
The policy that chooses $\argmax{1\leq i\leq N}\hat{\lambda}_i(t)V_i(t)$ in every scheduling slot also shows the same mean delay performance as the LEQ policy, as evidenced by simulation results presented in Sec.~\ref{secDistributedLEQ}. This estimate can be maintained independently at each queue and the resulting decisions are still consistent.

\vspace{-4mm}
\subsection{Handling CCA errors}\label{sec:handling-CCA-errors-subroutine}
We have, hitherto, assumed that when a network queue tests for channel activity (or lack thereof), the test always succeeds. In real wireless sensor networks this operation, called a \emph{Clear Channel Assessment} (CCA), involves a hypothesis test based on \emph{noisy} samples of channel activity and hence, is susceptible to error. This makes a 100\% success rate a strong assumption and we relax it in this section. Notice that crucially, CCA errors affect the fidelity of the $\mathbf{V}(t)$ vector across nodes and we now have a matrix $V_{N\times N}(t)=[\mathbf{V}_1(t),\mathbf{V}_2(t),\cdots,\mathbf{V}_N(t)]$, where $\mathbf{V}_i(t)$ is the local copy at Queue~$i.$%to keep track of the

% In what follows, we assume that the transmit power used by sensors is large enough (or the diameter of the network is small enough) that presence of activity on the channel can be detected without error. 
Extensive experimentation (reported in Sec.~\ref{sec:Experiments}) reveals that \emph{absence} of activity on the channel can be detected without error. This means that whenever a queue that is supposed to transmit (incumbent, $i^*=\argmax{i\in[N]}V_i(t)$ or the SU) is empty, all CCAs across the network in that minislot report a \enquote{clear channel.} In detection theoretic parlance, the probability of a \emph{false alarm} is zero, i.e., $p_{FA} = 0$. 
% a False Alarm  is negligible $\left(p_{\text{miss}}\approx0\right),$ while the probability of a CCA \enquote{Miss} is not $\left(p_{FA}>0\right)$, and we deal with the latter. %Hence, all CCA errors arise in attempting to find channel activity, i.e., if a queue is nonempty. \\%We do not assume that multiple simultaneous transmissions, i.e., collisions, can be detected. 
On the other hand, our observation is that the probability of CCA declaring an \enquote{active} channel as \enquote{clear} is not zero, i.e., $p_{miss} > 0$. However, our experiments show that $p_{miss}\approx3\times10^{-6}$, i.e., CCA miss is a \emph{rare} event. 
\\
%\indent 
\indent We begin by listing the different types of misalignment such errors can produce across the columns of the aforementioned $V(t)$ matrix in \cite[Sec.~11.14]{mohan-etal21low-delay-iot}, of which only the category termed M2 therein, necessitates modifications to QZMAC. We explain therein, how QZMAC automatically resolves the other types of misalignment. A misalignment of type M2 occurs when the $i^*$ Node assumes that the incumbent is empty and, being \emph{nonempty,} begins transmitting. Naturally, the transmissions from the two nodes collide \emph{persistently,} and further provisions are now required within QZMAC to extricate the network from this state. 
%\\
 Due to paucity of space, we refer the reader to \cite[Sec.~7.2]{mohan-etal21low-delay-iot} for the \texttt{\color{blue}RESET} subroutine designed to do precisely this. % that extracts the network from misaligned state information.
\textbf{Short term unfairness.} Given that QZMAC is based on exhaustive service, nodes might end up being starved for service in the short term. We analyze the short term fairness performance of QZMAC in \cite[Sec.~4.6]{mohan-etal21low-delay-iot}. To alleviate this issue, we modify QZMAC using a $K$-limited polling scheduler which we prove is \textit{\color{blue}throughput optimal} (Prop.~6).
\vspace{-3mm}
\section{Simulation and Experimental Results}\label{secNumRes}
We now report the results of our extensive implementation and simulation studies of the various algorithms proposed in the earlier sections. 
% We now provide simulation results and numerically study the performance of the various algorithms proposed in the earlier sections. 
We also deal with practical issues, such as arrival rate estimation and channel utilization. We begin with the performance of the LEQ scheduler analyzed in Prop.~\ref{propSLEQunequal}. 
Protocols such as ZMAC \cite{rhee08zmac} and subsequent developments such as \cite{doerr2005multimac}, \cite{ahn06funneling} and \cite{sitanayah2010er-mac} have not been designed to address the issue of unequal arrival rates. These protocols are not capable of stabilizing queues in this general setting, let alone provide low delay. Furthermore, the rigid cyclic TDMA structure imposed by the slot assignment protocol prevents extension to unequal arrival rates. 

To the best of our knowledge, this is the first work that deals with hybrid MAC scheduling for systems with unequal arrival rates. Consequently, we do not have any hybrid algorithms against which to compare the LEQ policy. The ZMAC protocol's cyclic TDMA schedule assigns a single slot to each queue in one TDMA frame and hence is not Throughput Optimal. Furthermore, computing even an approximately delay optimal TDMA schedule requires complete knowledge of the arrival rate vector and even then requires prohibitively high message passing between nodes. One such procedure is given in \cite{hofri-rosberg87packet-delay-golden-ratio}. It can be seen that the lower bound on delay even within the class of cyclic TDMA policies cannot always be achieved by this method; one example is when the arrival rates are irrational. Moreover, the general problem of finding the periodic TDMA schedule of minimum length is NP-complete \cite{ahmad-etal2008efficient-algorithm-broadcast-schedule-tdma}.
We, therefore, provide comparisons only with the lowest delay that can be achieved in this system, i.e., the one with a centralized scheduler that schedules a nonempty queue in every slot. %(Fig.~\ref{figFullKnowledgeScheduler}).
\vspace{-4mm}
\subsection{Distributed Implementation of the LEQ policy}\label{secDistributedLEQ}
In Fig.~\ref{figQzmac10queues} and Fig.~\ref{figQzmac30queues}, we show the delay performance of QZMAC which takes decisions based on  exact knowledge of arrival rates and the distributed version that only uses estimates of arrival rates. The green and red curves, corresponding to scheduling with estimated and exact arrival rates respectively, overlap significantly, showing that arrival rate estimation does not degrade performance. Moreover, in small systems (Fig.~\ref{figQzmac10queues}), the delay is almost the lowest that can be achieved, while with larger systems, the difference with the centralized scheduler's delay becomes nonnegligible only near saturation. 

% With QZMAC, the nodes need to first determine that the incumbent is empty, followed by detecting if the LEQ is empty and finally, if the current contention winner is empty before entering contention mode. Since each of these requires 1 minislot, QZMAC uses $T_p=3$ minislots and $T_c=7$ minislots and $T_p+T_c=10$. We will need to keep this total number of minislots the same for all protocols whose performances we compare (in Sec.~\ref{secSymmetricSystemsSimulations}) to ensure %fairness across protocols 
% that the fraction of time per slot allocated to scheduling decisions (and not packet transmission) is the same in all protocols being studied. Simulations are performed with $N=10~\text{and}~30$ nodes. In either case, we see that both protocols show the same delay performance. While in small systems (10 queues), the protocol very nearly hits the delay lower bound (with less than 1 slot delay difference near saturation), in large systems (30 queues) near saturation, the difference is larger (approximately 32$\%$ increase). 
%----------------------------------------------------------------------------------------------------
\begin{figure*}[ht]
\begin{subfigure}{.5\textwidth}
\centering%[height=5.25cm, width=17cm]
\includegraphics[scale = 0.3]{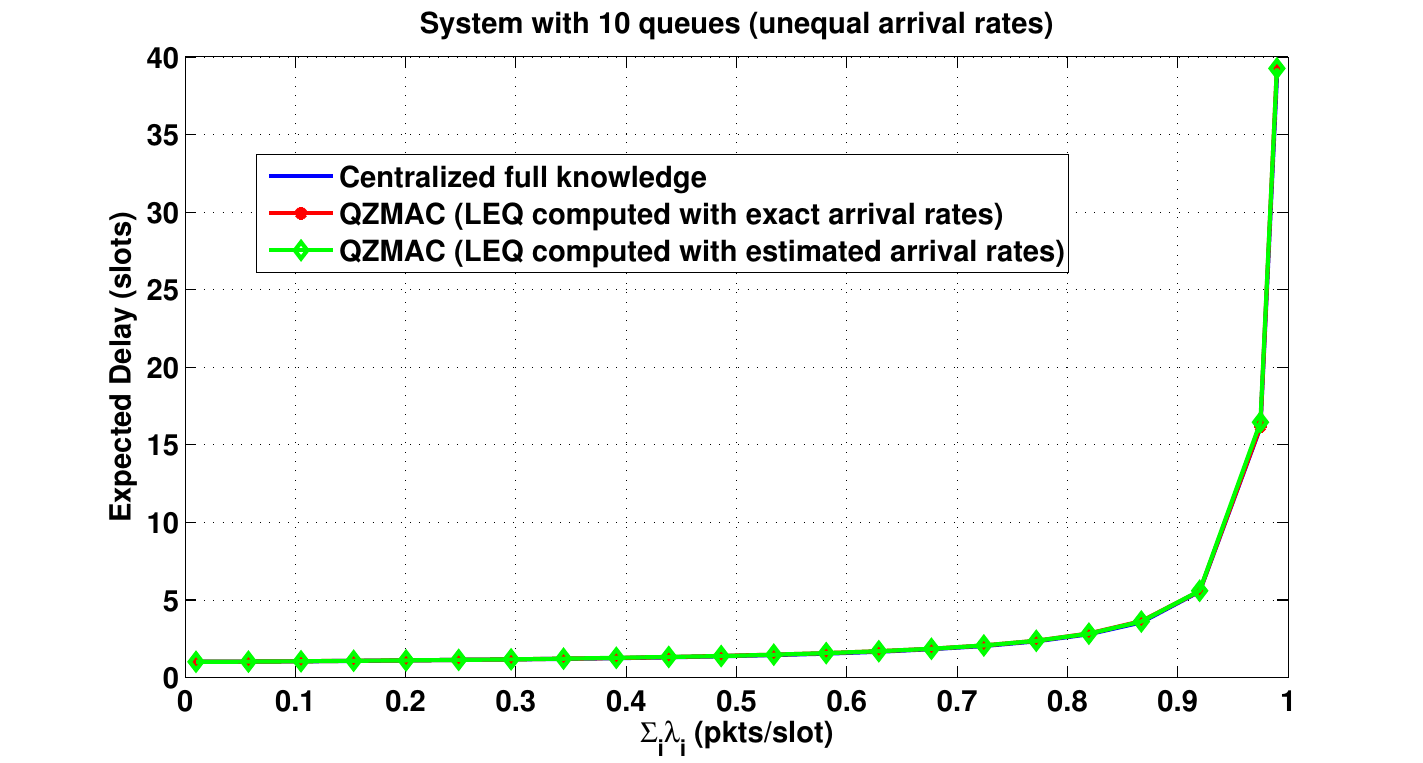}
%{figures_for_full_proof/jfi_tdma_exh_poll_lam_003qs_30.eps}
\caption{System with 10 queues.}
\label{figQzmac10queues}
%\end{figure}
\end{subfigure}
%\begin{figure}[tb]
\begin{subfigure}{.5\textwidth}
\centering%[height=5.25cm, width=17cm]
\includegraphics[scale = 0.3]{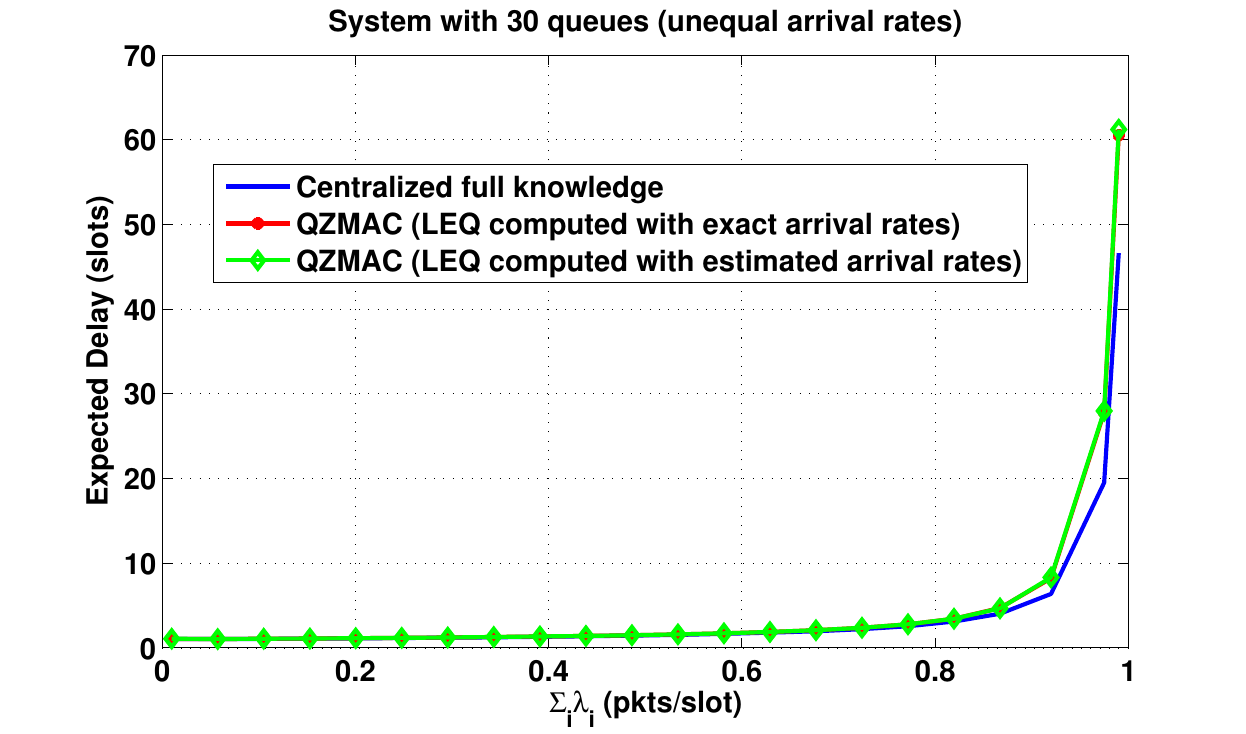}
%{figures_for_full_proof/jfi_tdma_exh_poll_lam_00317qs_30.eps}
\caption{System with 30 queues.}
\label{figQzmac30queues}
\end{subfigure}
\caption{Performance of QZMAC with exact and estimated arrival rates ($T_p=3$ and $T_c=7$).}
\vspace{-4mm}
\end{figure*}
%----------------------------------------------------------------------------------------------

%In this section, we compare the protocols proposed based on criteria such as delay performance, 
%short-term fairness, etc.
\vspace{-4mm}
\subsection{Mean Delay Performance of  QZMAC} \label{secSymmetricSystemsSimulations}
%\label{secSymmetricSystemsSimulations}%\label{secPezmac}
%In a system with $N$ queues and equal arrival rates, the lowest delay possible is achieved using a protocol that simply schedules a non-empty queue whenever the system is non-empty. This protocol requires knowledge of the set of nonempty queues in every slot, and cannot be implemented in a distributed manner.
With Bernoulli arrivals to each queue, the centralized full knowledge scheduler converts the system into a $Geo^{[x]}/D/1$ queue. In every slot, the arrivals to this queue are distributed according to a Binomial$(N,\lambda)$ distribution, where $\lambda\in[0,1/N)$ is the common arrival rate to all queues. This queue can be analyzed for expected delay, and we get
%$(W(\lambda)=2-(N+1)\lambda)/(2(1-N\lambda))$.
$W(\lambda)=\frac{2-(N+1)\lambda}{2(1-N\lambda)}$. 
%\cite{bruneel93discrete-time-models}
% \begin{equation}
%  W(\lambda)=\frac{2-(N+1)\lambda}{2(1-N\lambda)}.
%  \label{gxd1_q}
% \end{equation}
% We compare all protocols against this lower bound.
%All comparisons will be against this lower bound.
% Synchronization errors and propagation delays in small sensor networks are not of much
% consequence.
% We assume that nodes can determine if the channel is busy in \textit{1 minislot} as opposed to
% \cite{rhee08zmac} where 4 minislots are used. %Figures \ref{figEzAndZ30} \ref{figEzAndZ40} below show the comparison over systems with $N=30$ and $N=40$ queues, and
 %it can be seen that EZMAC shows lesser delay than ZMAC.
 
% \subsection{Performance of QZMAC}
% \label{secSymmetricSystemsSimulations}
We first consider a system with equal arrival rates and compare the performance of QZMAC and EZMAC with that
of ZMAC and the $Geo^{[X]}/D/1$ queue ($W(\lambda)$). We see from 
figures \ref{figDelAll4Qs10} and \ref{figDelAll4Qs30}, that the delay achieved by QZMAC is very close
to optimal as is that achieved by EZMAC (although slightly worse than QZMAC). In Fig.~\ref{figDelAll4Qs10}, the $Geo^{[x]}/D/1$ queue
delay cannot be seen explicitly since QZMAC performs almost exactly like it. This is quite 
encouraging since QZMAC only needs knowledge of the time since a queue was served last, while the optimal
algorithm needs full queue-length knowledge at all times. With $30$ queues, in fact, QZMAC 
shows a decrease in delay over ZMAC of more than $60\%$ and EZMAC of more than $40\%$.
%----------------------------------------------------------------------------------------------
%\lipsum
 \begin{figure*}[htb]
 \begin{subfigure}{.5\textwidth}
 \centering%[height=4.50cm, width=7.75cm]
 \includegraphics[scale = 0.35]{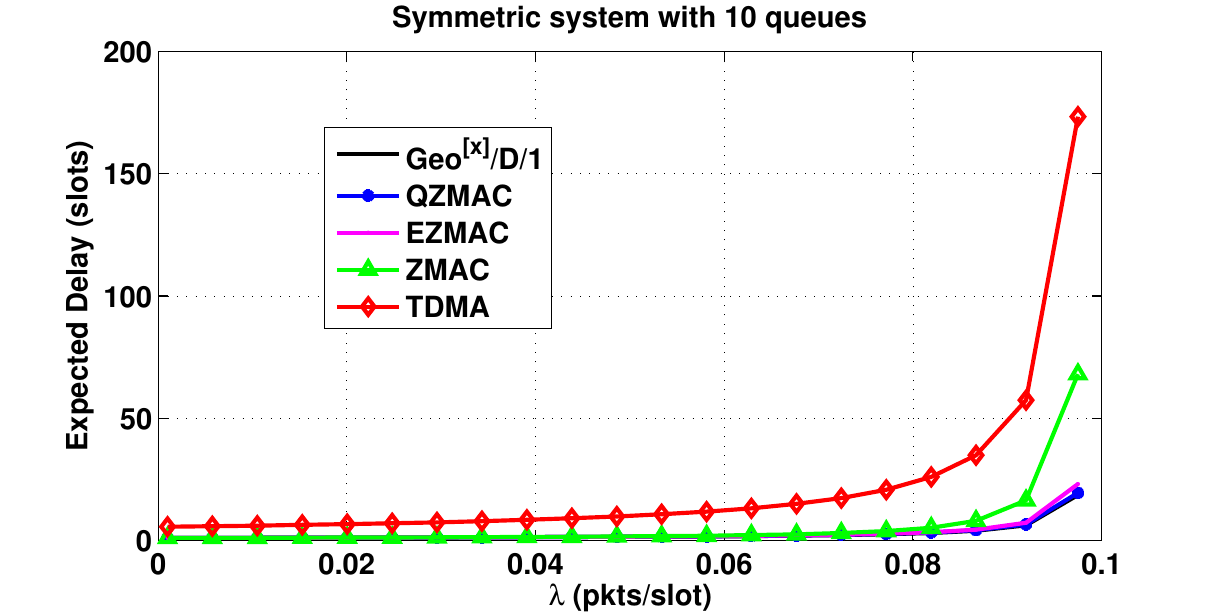}
 %\caption{Expected delay with QZMAC, ZMAC and EZMAC. The system has 10 queues and $T_{c}=9$ for ZMAC 
% and EZMAC. For QZMAC, $T_{c}=8$.}%  $8$ for 
\caption{System with 10 queues.}
 \label{figDelAll4Qs10} 
 \end{subfigure} 
% \begin{subfigure}{.5\textwidth}%[ht]
% \centering%[height=5.25cm, width=17cm]
% %\hspace{-1.5cm}%\centering
%\includegraphics[height=4.50cm, width=7cm]{figures_for_full_proof/all4_protocols_10qs_equal_LARGE_lam_june23_2017.eps}
%\caption{Delays near saturation ($N=10$ and $\lambda\approx\frac{1}{N}$).}%Figure expanded near $\lambda=\frac{1}{N}$ to show d
%\end{subfigure}
%\end{figure*}
%\begin{figure*}[tb]
\begin{subfigure}{.5\textwidth}
 \centering%[height=4.150cm, width=7.75cm]
 \includegraphics[scale = 0.35]{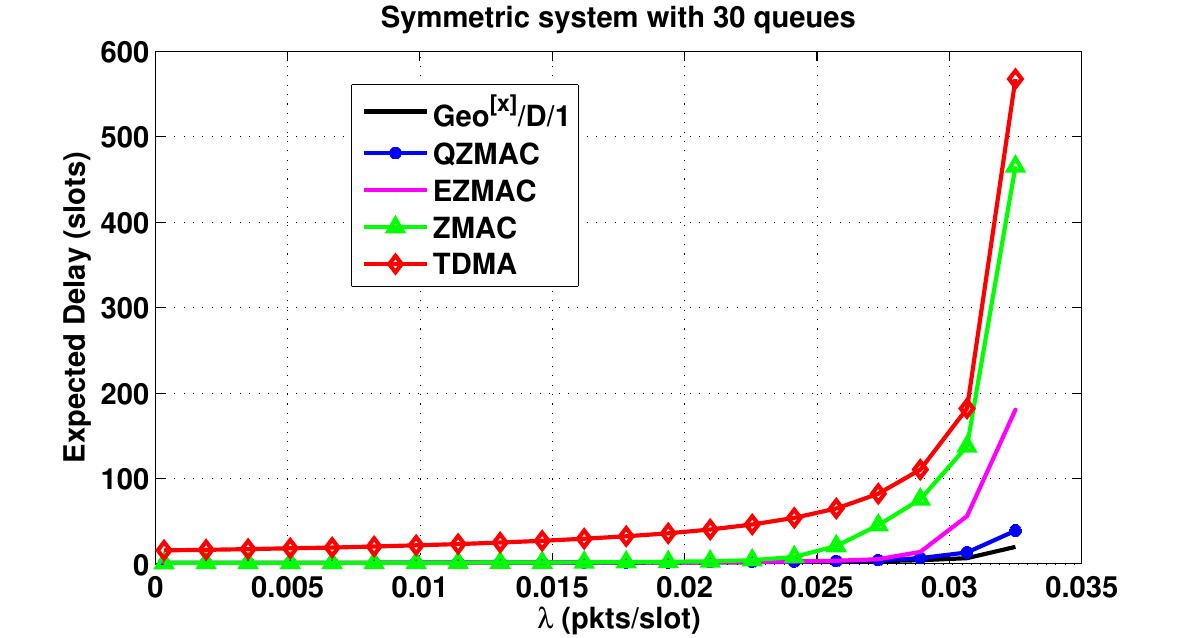}
 %\subcaption{Expected delay with QZMAC, ZMAC and EZMAC. The system has 30 queues}
 %and $T_{c}=9$ for ZMAC and EZMAC. For QZMAC, $T_{c}=8$.}
 \caption{System with 30 queues.}
 \label{figDelAll4Qs30}
 \end{subfigure}
% \begin{subfigure}{.5\textwidth}
% \centering%[height=5.25cm, width=17cm]
% %\hspace{-1.5cm}%\centering
%\includegraphics[height=4.150cm, width=7cm]{figures_for_full_proof/all4_protocols_30qs_equal_LARGE_lam_june23_2017.eps}
%\caption{Delays near saturation ($N=30$).}
%\end{subfigure}
\caption{Expected delay with QZMAC, ZMAC and EZMAC. The systems have 10 queues \ref{figDelAll4Qs10} and 30 queues \ref{figDelAll4Qs30}. $T_{c}=9$ for ZMAC and $8$ for EZMAC. For QZMAC, $T_{c}=7$.}
\vspace{-7mm}
\end{figure*}

Note that since QZMAC requires $T_p=3$, in order to keep the scheduling portion's length $T_c+T_p$ constant
so as to maintain the scheduling portion to transmission portion 
(see Fig.~\ref{figMinislotStructure}) fraction uniform across protocols, we have reduced the 
contention window size by three minislots while simulating QZMAC. Also, as the first figure shows, 
for low and moderate 
arrival rates, in small WSNs \emph{one can use EZMAC and still achieve good delay performance.}\\
\indent We computed the empirical cumulative distribution functions (CDFs) of system backlog with all three protocols (and basic TDMA as an upper bound). %The protocols were run for $1\times 10^6$ slots and queue lengths were collected for $550\times 10^3$ slots thereafter. 
As Fig.~\ref{figQueueLengthCDFs} clearly shows, QZMAC indeed provides the \emph{stochastically} smallest sum queue lengths. 
% , followed by EZMAC (owing to its superior contention mechanism) and finally ZMAC. TDMA, as expected, performs the worst. 
With QZMAC, the CDF hits $1$ at $12$ packets, while its closest competitor, EZMAC's CDF has a support that extends until $69$ packets and that of ZMAC extends to $220$ packets. 
% Finally, in Sec.~\ref{appendix:exptsChannelUAndDelay} we formally define another performance metric called \emph{\color{blue}Channel Utilization}; our experiments on the test bed described in Sec.~\ref{sec:Experiments} show that QZMAC outperforms ZMAC on this metric as well, for different values of control overhead ($T_p+T_c$).
\begin{figure}[tb]
\centering%[height=4.75cm, width=7.5cm]
\includegraphics[scale = 0.3]{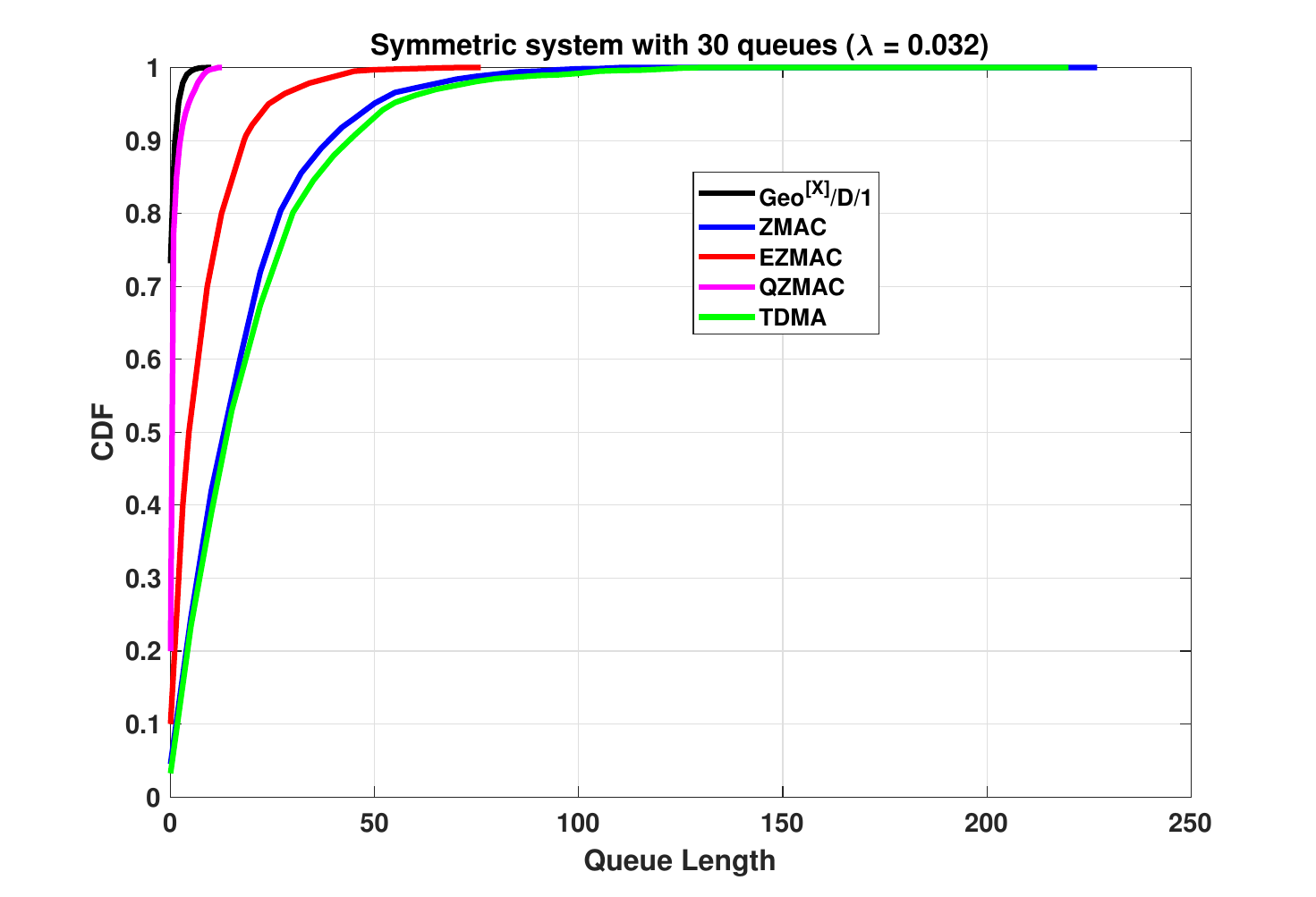}
\caption{The CDFs of total system backlog with QZMAC, EZMAC, ZMAC, TDMA and the $Geo^{[x]}/D/1$ queue. }
\label{figQueueLengthCDFs}
\vspace{-7mm}
\end{figure}
\vspace{-3mm}
\subsection{Tuning QZMAC}% and Service-time Distributions}
%{\bf Tuning QZMAC: }
One can tune QZMAC by changing $T_p$ and $T_c$ keeping $T_p+T_c=c$, a constant. As the number of slots 
reserved for polling ($T_p$) increases,
Fig.~\ref{figDelQzTune} shows that QZMAC with small $T_p$ initially performs better, but at high 
loads, is overtaken by QZMAC with small $T_c.$
At arrival rates close to $0$ (i.e., $\lambda\approx0$), it is an SU that transmits during most slots. 
This is because, with high probability, the PU's are empty and the protocol enters the contention portion.
Clearly, higher the value of $T_c$, greater the probability of resolving this contention. But at high
loads ($\lambda\approx\frac{1}{N}$) the PU's are nonempty with high probability and hence, the polling
portion is more likely to yield a nonempty queue. With a higher value of $T_p$, a nonempty queue can be 
found and the system does not have to go into contention.
This trend is expected and opens up avenues for further research about protocols that automatically choose 
the \emph{right} $T_c$ and $T_p$.
% given that the fraction of time
% occupied by the scheduling portion must not exceed some predetermined threshold.
\\
\begin{figure}[tb]
\centering%[height=5.0cm, width=7.1cm]
\includegraphics[scale = 0.3]{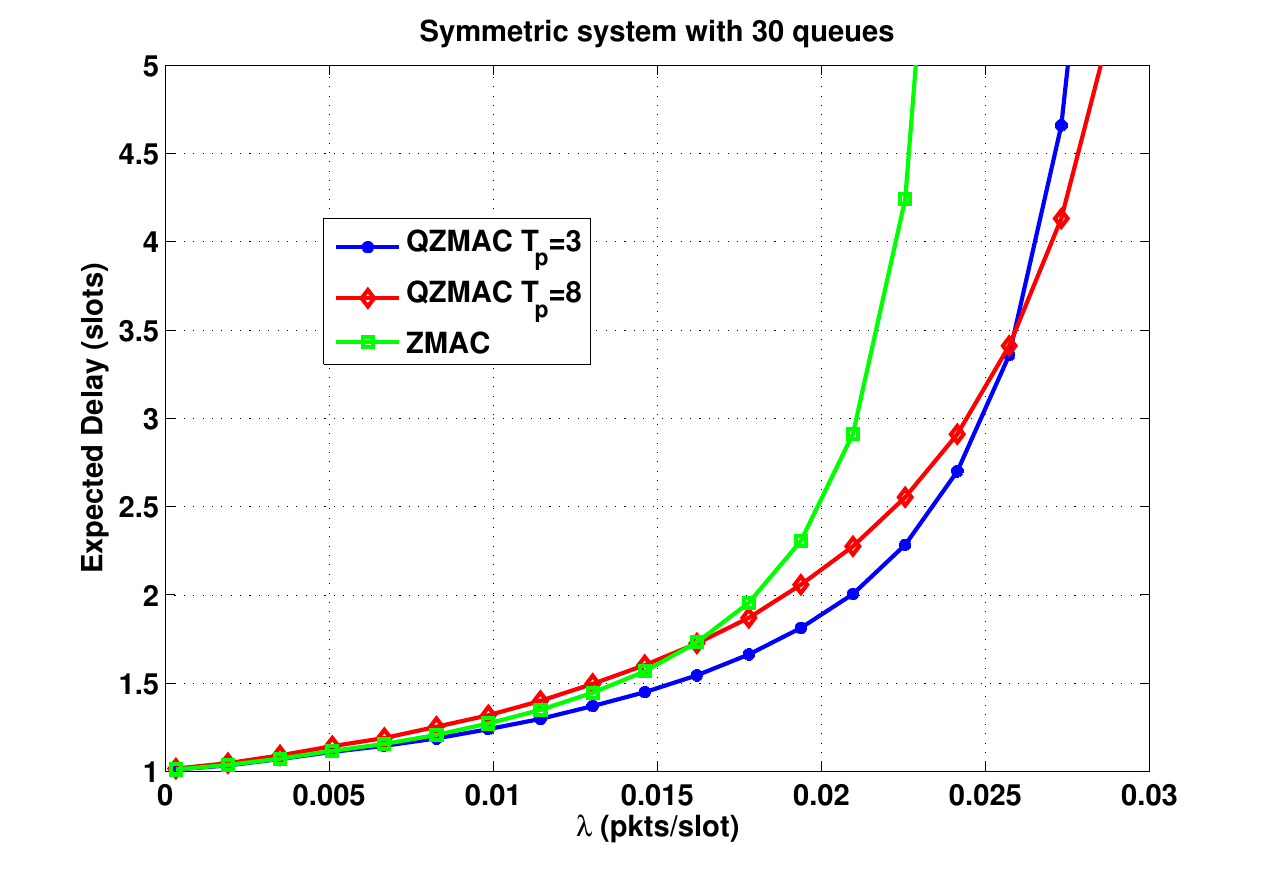}
\caption{Figure showing how QZMAC can be tuned. Keeping $T_p+T_c$ fixed at $10$ minislots, increasing
$T_p$ causes delay to increase at light loads and decrease at heavy loads. }
\label{figDelQzTune}
\vspace{-7mm}
\end{figure}
\vspace{-7mm}
\subsection{Channel Utilization}\label{appendix:exptsChannelUAndDelay}
Recall from our earlier discussion (see Prop.~\ref{propNonIdlingAndExhOpt}) that the optimal scheduling policy under the information structure assumed in this paper lies in the class of \emph{non-idling, exhaustive} policies. Consequently, the more efficiently a MAC algorithm finds nonempty queues in the network (assuming there are packets in the network), the better its mean delay performance is likely to be. In the literature, this probability of a nonempty queue being served when one exists in the network is termed \emph{\color{blue}Channel Utilization,} ($\zeta$) and is a performance metric commonly used to compare the efficiency of MAC algorithms \cite{rhee08zmac,warrier2005stochastic}. Formally, given a scheduling policy $\pi,$ 
% let $S_j(t):=\math$
\vspace{-5mm}

{
\small
\begin{eqnarray*}
%\hspace{-0.250cm}
 \zeta^{\pi}(\mathbf{Q}(0)) &:=& \lim_{t\rightarrow\infty}\mathbb{P}^{\pi}_{\mathbf{Q}(0)} \left(\sum_{j=1}^N \mathbb{I}_{\{Q_j(t)>0\}}D_j(t)>0\bigg|\sum_{j=1}^N Q_j(t)>0\right)\nonumber\\
 %&\stackrel{(*)}{=}& \lim_{t\rightarrow\infty}\frac{1}{t} \left(\sum_{j=1}^N \mathbb{I}_{\{Q_j(t)>0\}}D_j(t)>0\bigg|\sum_{j=1}^N Q_j(t)>0\right)\\
 &\stackrel{(*)}{=}& \lim_{t\rightarrow\infty} \frac{\sum_{s=0}^{t-1}\sum_{j=1}^N \mathbb{I}_{\{Q_j(s)>0\}}\mathbb{I}_{\{D_j(s)>0\}}}{\sum_{s=0}^{t-1}\mathbb{I}_{\{\sum_{j=1}^N Q_j(s)>0\}}},
\end{eqnarray*}
}

where the $(*)$ is under the assumption that the queue length process is ergodic.
% Note that channel utilization in itself is \emph{not} necessarily a metric \emph{guaranteed} to minimize delay
%
% \begin{figure}[tbh]
%     \centering
%     \includegraphics[height=6.00cm, width=7.25cm]{}
%     \caption{Evaluating the channel utilization of QZMAC and ZMAC. The network comprised 7 collocated nodes transmitting to a base station. Packet arrivals to the nodes followed IID Bernoulli processes with rates $\boldsymbol{\lambda}=[0.17,0.20,0.04,0.17,0.17,0.02,0.07]$. This rate vector is clearly within the network capacity region.}
%     \label{fig:7nodes-experiments}
% \end{figure}
%
We used the setup described in Sec.~\ref{sec:Experiments}, to compare the channel utilization of QZMAC with that of ZMAC. %As Fig.\ref{fig:7nodes-experiments} illustrates, 
The network comprised 7 collocated nodes transmitting to a base station. Packet arrivals to the nodes followed i.i.d Bernoulli processes with rates $\boldsymbol{\lambda}=[0.17,0.20,0.04,0.17,0.17,0.02,0.07]$. This rate vector is clearly within the network capacity region, because $\sum_{i=1}^7\lambda_i=0.84<1$. The results of the experiment are shown in Table.~\ref{table:channel-utilization-comparison}. The experiment was repeated for different values of control overhead, i.e., the portion of the time slot wasted in scheduling a queue ($T_p+T_c$). The value of $T_p$ was kept constant, $T_p=1$ for ZMAC and $T_p=3$ for QZMAC, and the number of contention minislots was varied. 

One obvious trend is that channel utilization increases as control overhead increases, since it becomes easier to resolve contention whenever it occurs with more $T_c$ minislots. However, the other point to note is that regardless of contention overhead, QZMAC outperforms ZMAC. This is due to the fact that both the polling and the contention mechanisms of the former are designed better than the latter to find nonempty queues with greater probability.

%%%----------------------------------- Channel Utilization table ---------------------------------
\begin{table}[t]
  \centering
  %\begin{tabular}{!{\VRule[2pt]}c!{\VRule}c!{\color{red}\VRule[3pt]}c!{\VRule}}
  \begin{tabular}{c|c|c|c}
  Algorithm & $T_p+T_c=7$ & $T_p+T_c=8$ & $T_p+T_c=9$\\
  \specialrule{2pt}{2pt}{0pt}
    ZMAC ($T_p=1$) & 0.88968 & 0.90379 & 0.91356 \\
    \hline
    QZMAC ($T_p=3$) & 0.96312 & 0.9706 & 0.97486 \\
    \specialrule{2pt}{2pt}{0pt}
  \end{tabular}
  \caption{Comparing the channel utilization of QZMAC with that of ZMAC with varying control overhead ($T_p+T_c$).
  %, for the network set up in Fig.~\ref{fig:7nodes-experiments}. 
  Clearly, $\zeta^{QZMAC}>\zeta^{ZMAC}$, showing that the former wastes the channel less often, thereby clearing the network of packets more efficiently.}
  \label{table:channel-utilization-comparison}
  \vspace{-0.500cm}
\end{table}
%%%-----------------------------------%%%-----------------------------------%%%--------------------

\section{Conclusion and Future work}
% In this paper, we studied the design of Hybrid MAC protocols with the intention of reducing queueing delay. We extended 
% Our objective has been to develop distributed MAC protocols, which require minimal information sharing, for data collection from resource-challenged wireless sensors. 
In this paper, we first derived optimal polling policies for systems with limited information structures and proved the delay-optimality of exhaustive service and cyclic polling in a symmetric version of our scheduling problem. Leveraging these results, we proposed two distributed protocols QZMAC and EZMAC that perform much better than those available in the literature, both with respect to mean delay and system backlog distributions. 
%\\
% \indent We then extended this analysis to design hybrid MACs for systems with unequal arrival rates as well. 
% Further, we showed that delay optimality comes at the price of excessive short-term unfairness and proposed modifications to QZMAC to reduce unfairness. 
% Further, we partially removed the assumption on fully-connected interference graphs, and modified QZMAC to handle the hidden node problem. \\
%We also studied how to extend QZMAC to handle situations with CCA errors, alarm traffic and the presence of hidden nodes. \\
%\indent 
We then implemented QZMAC on a test bed comprising telosB motes and demonstrated the operation of several salient aspects of the protocol. 
% In the process, we also obtained a clear understanding of the CCA performance of the motes which, in turn, allowed us to extend QZMAC to handle situations with unequal arrival rates, CCA errors and alarm traffic.\\
% and the presence of hidden nodes.\\
\indent Extensions to this work can include the inclusion of Markovian arrivals, channel fading, extension to noncollocated nodes, and sleep-wake cycling sensor nodes.
\vspace{-2mm}
\section{Appendix}
\subsection{Experiments on an Implementation of QZMAC}\label{sec:Experiments}
%\avim{Need experimental comparison with ZMAC here or in next section!}\\
We implement the QZMAC algorithm as an additional module in the MAC layer of the 6TiSCH communication stack under the Contiki operating system \cite{contiki-ref}. With the aid of in-built Contiki-based APIs, we implemented different wireless transceiver functionalities and carried out our experiments over Channel~15 of the 2.4 GHz ISM band. We deployed CC2420 based telosB motes placed equidistant from the receiver node on a circular table Fig.~\ref{fig:qzmac-testbed}. Here, the node placed in the center acts as a \enquote{Border Router} (BR), more commonly known as a \enquote{Sink.} The BR is always connected to a PC (host) through a USB cable; the BR collects the data packets from the sensor nodes and sends them to the host, which can further route them to the data processing computer over the Internet. Nodes labeled $1$ to $4$ in Fig.~\ref{fig:qzmac-testbed} are the sensor nodes upon which QZMAC runs.
%%=--------------------------=--------------------------
\begin{figure}[tb]
    \centering %[height = 4.7cm, width = 5 cm]
    \includegraphics[scale = 0.03]{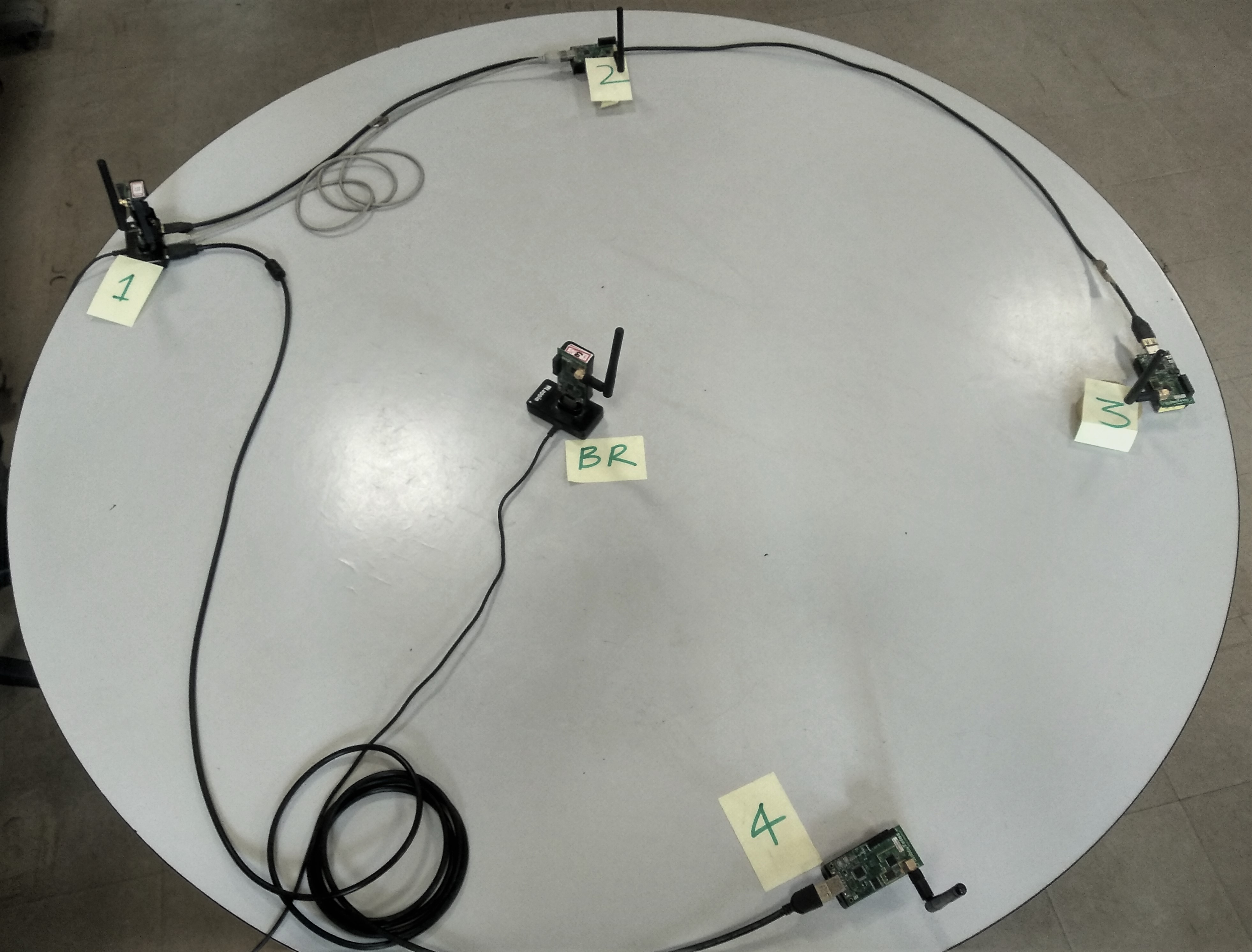}
    \caption{The center node is the Border Router (sink) and the other nodes, numbered 1 to 4, are the sensor nodes. The sensors are programmed to run QZMAC and transmit packets to the sink.}
    \label{fig:qzmac-testbed}
    \vspace{-4mm}
\end{figure}
%%=--------------------------=--------------------------
Our experiments aim to study the following aspects:
%\begin{itemize}
  %\item 
  \textbf{\color{blue}(a)} The network infers and maintain the status of $\mathbf{V}$ vector based on empty and nonempty status of sensor nodes,
  %\item 
  \textbf{\color{blue}(b)} Verify the protocol working in contention mode as described between step~\ref{qzmacProtocol:incont} and step~\ref{qzmacProtocol:outcont} of the QZMAC,
%   , specifically, QZMAC switching from a predominantly contention MAC to a predominantly polling MAC as load 
  %\item 
  \textbf{\color{blue}(c)} CCA status inference across the slots, and
 % \item 
  \textbf{\color{blue}(d)} Synchronization within the network. 
%\end{itemize}
%----------------------------------------------------------------------
\vspace{-4mm}
\subsection{Frame Structure: Time Slots and Mini-Slots}\label{secFrameStructureTimeSlotsAndMinislots}
Time Slotted Channel Hopping (TSCH) is a MAC layer specified in the IEEE 802.15.4-2015 \cite{ieee-tsch-std} standard, with a design inherited from WirelessHART and ISA 100.11a standards. In our implementation, we used the slots defined as part of TSCH within Contiki (sans the channel hopping utility). The duration of a time slot is $15ms$ which is sufficiently long for the transmitter to transmit the longest possible packet and for the receiver to return an acknowledgment.\\
\indent The slots are further divided into polling minislots and contention minislots as shown in Fig.~\ref{figMinislotStructure}. The minislots are of two standard Clear Channel Assessment (CCA) duration, where $1$ CCA duration is $128$ microseconds. 
%\footnote{ In contention mode, the non winning nodes were able to detect the channel busy during the subsequent contention minislots.}.
In our experiments, we used %$3$ polling and $9$ contention minislots i.e. 
$T_p=3$, and $T_c=9$.
%%----------------------------------------------------------------------------------
\vspace{-4mm}
\subsection{Time Synchronization}
Maintaining network-wide time synchronization at the level of the minislots is a nontrivial problem. For our experiments, we used the Adaptive Synchronization Technique \cite{chang2015adaptive}. The Border Router periodically broadcasts Enhanced Beacons (EBs) containing a field indicating the current slot number also known as Absolute Slot Number (ASN). The other nodes store the ASN value and increment it every slot to keep the time slot number aligned. To align the time slot boundaries, the first mini-slot begins after a guard time offset of {\tt TsTxOffset} from the leading edge of every slot. Every node timestamps the instant it starts receiving the EB, and then aligns its internal timers so that its slot starts exactly {\tt TsTxOffset} before the reception of EB. In our implementation we used {\tt TsTxOffset}=$1.8$ms. Through extensive experimentation, we verified that clock drifts were not affecting the protocol's working across the minislots.
The nodes form a $1$-hop fully connected network. We used the \enquote{Routing Protocol for Low-Power and Lossy Links} (RPL) to form the routes within the network \cite{winter-etal12RPL} and verified the working of our firmware on the COOJA simulator \cite{cooja-ref}, before compiling it on to real target motes. We now describe our methodology in detail.

\vspace{-4mm}
\subsection{CCA Errors : Inference and Handling}\label{sec:ccaErrorsInferenceAndHandling}
To study the effect of CCA errors on our protocol, we set up two testbeds (similar to the one in Fig.~\ref{fig:qzmac-testbed}) with network diameter $6$ meters and $8$ meters respectively, each running QZMAC for a period of 12 hours (i.e., $3\times10^6$ time slots). The sensor nodes were programmed to generate packet at constant rate ensuring the queue at sensor nodes are always nonempty and transmit as per the protocol. One of the sensor nodes was connected to a terminal and designated to monitor the CCA status during the experiment. Over the course of our experiment, we observed $1$ CCA \enquote{Miss} error on the testbed with $6$ m network diameter and, on the network with $8$ m diameter, we recorded $3$ CCA \enquote{Miss} errors, both calculated over a period of 12 hours. No False Alarms were observed. Further, in any given time-slot {\color{blue}at most one} CCA error occurred network-wide. %The CCA errors resulted in inconsistency in the $\mathbf{V}$ vector across the motes, as described in Sec.~\ref{appendix:CCAinQZMAC}.
\vspace{-5mm}

%-------------*******************************----------------- Bibliography ---------------------***********************---------------------
\bibliographystyle{IEEEtran}
\bibliography{IEEEabrv,techreport}
%\end{thebibliography}

%%----------------------------------------------------------- Biographies ------------------------------------------------------------------
%-------------------------------------------------------------------------------------------------------
% biography section
% 
% If you have an EPS/PDF photo (graphicx package needed) extra braces are
% needed around the contents of the optional argument to biography to prevent
% the LaTeX parser from getting confused when it sees the complicated
% \includegraphics command within an optional argument. (You could create
% your own custom macro containing the \includegraphics command to make things
% simpler here.)

%%------------------------------------------------------------------------------------------------------------------------------------------------
\vspace{-1.50cm}
\begin{IEEEbiographynophoto}{Avinash Mohan (S.M. '16, M '17)} 
obtained his MTech from the Indian Institute of Technology (IIT) Madras, and PhD from and the Indian Institute of Science (IISc) Bangalore, in 2010 and 2018 respectively. He was a postdoctoral fellow with the Reinforcement Learning Research Labs ($(RL)^2$) at the Technion, Israel Institute of Technology, Haifa, Israel and is currently with The Boston University, Massachusetts, USA. His research interests include reinforcement learning, stochastic control, analysis of deregulated energy markets and resource allocation in wireless communication networks.
\end{IEEEbiographynophoto}
\vspace{-1.00cm}
%\begin{IEEEbiography}%[{\includegraphics[width=1in,height=1.25in,clip,keepaspectratio]{figures_for_full_proof/arpan_pic_for_biography.png}}]
%{Arpan Chattopadhyay}
% or if you just want to reserve a space for a photo:
\begin{IEEEbiographynophoto}{Arpan Chattopadhyay}
 obtained his B.E. in Electronics and Telecommunication Engineering from Jadavpur University, India in 2008, and M.E. and Ph.D in Telecommunication Engineering from Indian Institute of Science, Bangalore, India in  2010 and 2015, respectively. 
%  He was a postdoctoral researcher in the Electrical Engineering department and is currently an assistant professor at the Indian Institute of Technology, Delhi, India. His research interests include design, resource allocation, control and learning in wireless networks and cyber-physical systems.
%received his PhD. in telecommunication from the Indian Institute of Science, Bengaluru, India, in 2015.
He is currently working as an Assistant Professor with the Electrical Engineering Department, IIT Delhi. Previously, he held postdoctoral positions with the Electrical Engineering Department, University of Southern California, and INRIA/ENS Paris. His research interests include wireless communication and networks, cyber-physical systems, networked estimation and control, and reinforcement learning.

\end{IEEEbiographynophoto}
\vspace{-1.00cm}
\begin{IEEEbiographynophoto}{Shivam Vinayak Vatsa (B Tech. '16)} 
obtained Bachelor of Technology from NIIT University, India in Computer Science. He is currently a Project Associate II at Center for Networked Intelligence, Indian Institute of Science, Bangalore, India. In past, he worked as Software Engineer in Common Algorithm Development Group at ABB India. His research interests include Internet of Things, Cyber Physical Systems and  Data analysis of wireless communication networks. 
\end{IEEEbiographynophoto}
\vspace{-1.00cm}
%\begin{IEEEbiography}%[{\includegraphics[width=1in,height=1.25in,clip,keepaspectratio]{figures_for_full_proof/anurag-kumar_white-background_Aug2015.jpg}}]{Anurag Kumar}
% or if you just want to reserve a space for a photo:
\begin{IEEEbiographynophoto}{Anurag Kumar}
 (B.Tech., Indian Institute of Technology (IIT)
Kanpur; PhD, Cornell University, both in Electrical Engineering) was
with Bell Labs, Holmdel, N.J., for over 6 years.  Since then he has
been on the faculty of the ECE Department at the Indian Institute of
Science (IISc), Bangalore; he was the Director of the Institute during
2014-2020, and now holds an emeritus position.  His area of research
has been communication networking, and he has recently focused
primarily on wireless networking. He is a Fellow of the IEEE, the
Indian National Science Academy (INSA), the Indian National Academy of
Engineering (INAE), the Indian Academy of Sciences (IASc), and The
World Academy of Sciences (TWAS).  He was an associate editor of IEEE
Transactions on Networking, and of IEEE Communications Surveys and
Tutorials.
\end{IEEEbiographynophoto}

\newpage
%\onecolumn
\section{Supplementary Material}
%\tableofcontents
\subsection{Glossary of Notation and Acronyms}
\begin{enumerate}
    \item $A_j(t)$: number of packets arriving to Queue~$j$ in time slot $t.$
    \item $B(V_j)$: If $A$ is a generic Bernoulli($\lambda$) random variable, and $C$ is distributed Binomial($V_j,\lambda$), then $B(V_j)$ is a random variable whose distribution is the same as that of $(C-1)^++A$.
    \item: Carrier Sense Multiple Access
    \item CCA: Clear Channel Assessment.
    \item $D_j(t)$: the number of departures from Queue~$j$ in time slot $t.$
    % \item $V^\pi_j(t)$: is the number of slots prior to slot $t$ in which Queue~$j$ was allowed to transmit under a generic scheduling policy $\pi$.
    \item DRAND: Distributed RAND (distributed implementation of the RAND protocol)
    \item $Geo^{[x]}/D/1$ queue: has arrivals that are sums of Bernoulli random variables (here, $N$), and has a single server (hence, the \enquote{$1$}), whose service times are \emph{deterministic} (hence, the \enquote{$D$}).
    \item $H^\pi_t$: the history of policy $\pi$ up to time $t$ as defined in \eqref{eqn:DefnHistoryOfPolicyPi}.
    \item $\mathbf{Q}(t)$: backlog of Queue~$j$ at the beginning of time slot $t.$
    \item $Q^{\pi}_j(t)$: backlog of Queue~$j$ at the beginning of time slot $t$ under scheduling policy $\pi.$
    \item $K_{thr}$: the threshold that triggers the RESET subroutine. It is denoted by {\tt THRSLD} in the subroutine description.
    \item $\boldsymbol{\lambda}$: the arrival rate vector.
    \item $\boldsymbol{\Lambda}$: the capacity region of the queueing network.
    \item $\boldsymbol{\Lambda}_{LEQ}$: the set of arrival rates that the LEQ policy can stabilize.
    \item $[N]$: the set of integers $\{1,2,\cdots,N\}$.
    \item {\tt NDST:} is short for \enquote{Node State.} Attains values {\tt COLL} (meaning \enquote{in collision}) or {\tt NOCOLL} (meaning \enquote{not in collision}).
    \item PU: Primary User.
    \item $\Pi$: the space of all admissible policies.
    \item $\Pi_e$: the subset of all exhaustive service policies. 
    \item $\Pi_g$: the subset of all non-idling service policies.    
    \item {\tt RSTBCN}: Reset Beacon.
    \item SU: Secondary User. In ZMAC this refers to any queue which is not the current PU. In EZMAC and QZMAC, this refers to the queue that won the latest contention.
    \item $T_c$: number of minislots reserved for contention.
    \item TDMA: Time Division Multiple Access
    \item $T_p$: number of minislots reserved for polling.
    \item $V^\pi_j(t)$: is the number of slots prior to slot $t$ in which Queue~$j$ was allowed to transmit under a generic scheduling policy $\pi$.
    %\item $\mathbf{V}(t)$: 
    
\end{enumerate}
%-----------------------***********************---------------------

\subsection{Terminology Related to MAC Protocol Analysis}
We provide brief explanation for some standard terminology related to the analysis of MAC protocols. For further  details, the reader can refer to \enquote{Multiple access protocols: performance and analysis,} by R. Rom and M. Sidi, Springer-Verlag New York, 1990.
%\cite{rom-sidi90multiple-access-performance-analysis} .
\begin{itemize}
    \item Backlog: at time $t+,$ i.e., the beginning of slot $t,$ the number of packets in Queue~$j$ is called its \enquote{backlog,} and is denoted by $Q_j(t).$
    \item Stochastic dominance: given two random variables $X$ and $Y,$ we say that $X$ \enquote{stochastically dominates} $Y$ if, $\forall q\in(-\infty,\infty),$ $P(X\leq q)\leq P(Y\leq q).$
    % \begin{equation*}
        
    % \end{equation*}
    \item Stochastic ordering: two random variables $X$ and $Y$ are said to be \enquote{stochastically ordered}, if either $X$ stochastically dominates $Y,$ or vice versa.
    \item Switchover delay: informally, this is the amount of time the server takes to stop service at one queue and begin service at a different queue.
    \item Overhearing: when wireless devices communicate using a shared medium, such as in a wireless network, due to the nature of wireless communication, it is possible for nearby devices to receive signals that are intended for other devices. This unintentional reception of signals is known as overhearing. 
    \\
    Wireless transmissions are typically encrypted to prevent decoding through overhearing. In our paper, however, we take advantage of overhearing to simply detect the presence of an ongoing transmission (the eavesdropping device does not attempt to decode anything).
\end{itemize}

%-----------------------***********************---------------------

\subsection{Proof of Prop.~\ref{propNonIdlingAndExhOpt}}\label{AppendixProofOfPropNonIdlingAndExhOpt}
This proof proceeds along the lines of the proof in \cite[Prop.~4.2]{liu-nain92optimal-polling}. It is easy to show that we can restrict 
attention to non-idling policies and so, we begin our proof with some policy $\pi\in\Pi_g$, where, as mentioned in Sec.~\ref{secTheCentralizedScheduler}, $\Pi_g$ is the set of all non-idling policies. Consider a sample path $\omega$ $\left(\right.$a realization of the input sequences
$\left.\left\lbrace A_{i}(n)\right\rbrace_{n=1}^\infty,i\in I\right)$ and on that path, let
\begin{equation}
 n^*:=\inf\left\lbrace n\geq1: Q^\pi_{\pi^q_{n-1}}(n)>0,~Q^\pi_{\pi^q_{n}}(n)=0,~\text{and}~\pi^a_n=2\right\rbrace.
 \label{defineNStar}
\end{equation}
% \[
% \hspace{-2.0cm}   n^*:=\text{min}
% \begin{cases}
%     \inf\{n\geq1: Q^\pi_{\pi^q_{n-1}}(\pi^e_n)>0,Q^\pi_{\pi^q_{n}}(\pi^e_n)=0,~\text{and}~\pi^a_n=2\},
% ~\text{if}~\pi^e_n~\text{is the end of the first minislot of a time slot, and}\\
%     \inf\{n\geq1: Q^\pi_{\pi^q_{n-1}}(\pi^e_n)=0,Q^\pi_{\pi^q_{n}}(\pi^e_n)=0,Q^\pi_{\pi^q_{n+1}}(\pi^e_n)>0~\text{and}~\pi^a_n=2\},
% ~\text{if}~\pi^e_n~\text{is the beginning of a time slot.}
% \label{defineNStar}
% \end{cases}
% \]
This is the first instant when a time slot is wasted by $\pi$ on account of switching to an empty queue
though the previous one was non-empty. %\footnote{The complete notation is actually $n^*(\omega)$ and this 
%random variable takes values in $\mathbb{N}\cup\{\infty\}$}.
Further define 
\begin{equation}
m^*=\inf\left\lbrace n>n^*: Q^\pi_{\pi^q_{n-1}}(n)>0,\pi^q_{n-1}=\pi^q_{n^*},~\text{and}~\pi^a_n=1 \right\rbrace.  
%\pi^e_
\end{equation}
This is the first time since $n^*$ that the server continues to serve the (same) incumbent finding it to be 
non-empty. 
\begin{figure*}[tb]
\centering%[height=5.25cm, width=17cm]
\includegraphics[height=8.0cm, width=15cm]{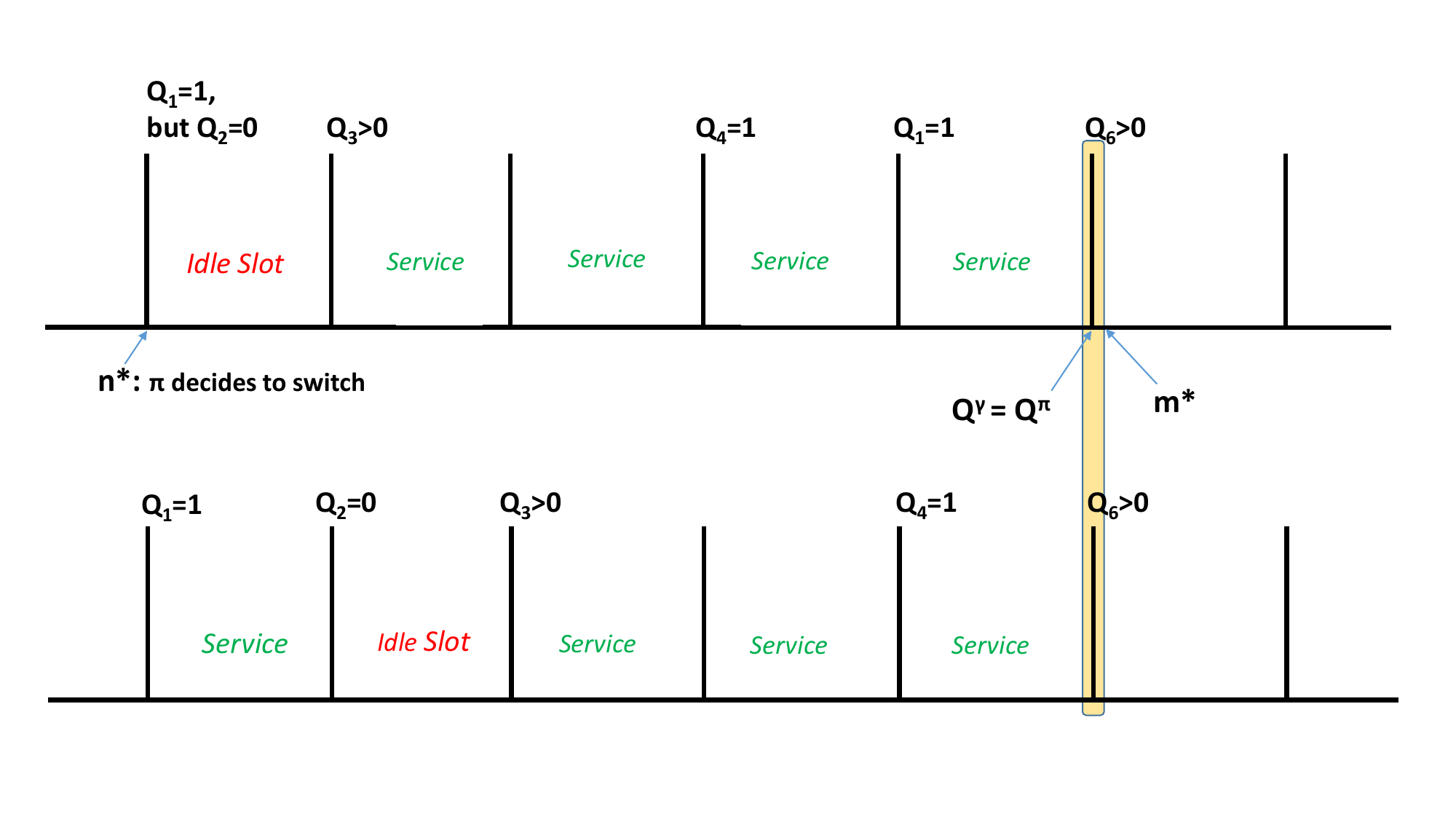}
\caption{A sample path illustrating how $\gamma$ is derived from $\pi$.}
\label{figPiGamSmplPath}
\end{figure*}
Construct another policy $\gamma$ as follows (refer to Fig.\ref{figPiGamSmplPath} for an illustration).
% \begin{itemize}
%  \item On $[1,n^*)$ and $[m^*+1,\infty)$, $(\gamma^a_k,\gamma^q_k)=(\pi^a_k,\pi^q_k)$. 
%  \item $(\gamma^a_{n^*},\gamma^q_{n^*})=(1,\pi^q_{(n^*-1)})$, and 
%  \item for $\pi^e_{n^*+1}\leq k\leq \pi^e_{m^*}$, $(\gamma^a_k,\gamma^q_k)=(\pi^a_{k-1},\pi^q_{k-1})$. 
% \end{itemize}
\begin{itemize}
 \item On $[1,n^*)$ and $[m^*+1,\infty)$, $(\gamma^a_k,\gamma^q_k)=(\pi^a_k,\pi^q_k)$. 
 \item $(\gamma^a_{n^*},\gamma^q_{n^*})=(1,\pi^q_{(n^*-1)})$, and 
 \item for $n^*+1\leq k\leq m^*$, $(\gamma^a_k,\gamma^q_k)=(\pi^a_{k-1},\pi^q_{k-1})$. 
\end{itemize}
So, $\gamma$ follows $\pi$ over $[1,n^*)$ and $[m^*+1,\infty)$ and at $n^*$, $\gamma$ deviates
from $\pi$ and serves the incumbent (possible, since that queue is non-empty). Between $n^*+1$ and $m^*$, 
$\gamma$ follows $\pi$ albeit with one slot delay, i.e., it does whatever $\pi$ did one slot ago.
Clearly,
\begin{eqnarray}
 Q^\gamma(t)=Q^\pi(t),~\forall~t\in[1,n^*)\cup[m^*+1,\infty),~\text{and}\nonumber\\
 Q^\gamma(t)\leq Q^\pi(t),~\forall~t\in[n^*+1,\leq m^*],\nonumber\hspace{1cm}
\end{eqnarray}
whenever $m^*<\infty$. This relation is true for every sample path $\omega$. 
$m^*$ is set equal to $\infty$ whenever either $n^*=\infty$, \emph{or}, $\gamma$ is able to
serve a packet in some slot $l$ ($\geq n^*+1$) while $\pi$ could not serve any packet
in slot $l-1$. 
%In the latter case, $\gamma$ has a \emph{two} packet lead on $\pi$ which $\pi$ will never be able to make up.

To explain the latter case further, observe that after $n^*$, $\gamma$ does whatever $\pi$ did
one slot ago. If $\pi$ switched to some empty queue in $l-1$, $\gamma$ switches to the same 
queue in $l$. But if $\pi^q_{l-1}$ received a packet in $l-1$, $\gamma$ will
be able to serve that packet increasing $Q^\pi(t)-Q^\xi(t)$ to 2 (one at $n^*$ and the second 
at $l$). Since $\gamma$ follows $\pi$ with exactly one step delay, $\pi$ will never be able to 
make up the difference! Hence, $m^*=\infty$.

To complete the proof, one needs to consider epochs $l^*$ such as:
% \[
% \hspace{-1.0cm}   l^*:=\text{min}
% \begin{cases}
%     inf\{n\geq1: Q^\pi_{\pi^q_{n-1}}(\pi^e_n)>0,Q^\pi_{\pi^q_{n}}(\pi^e_n)>0,~\text{and}~\pi^a_n=2\},
% ~\text{if}~\pi^e_n~\text{is the end of the first minislot of a time slot, and}\\
%     inf\{n\geq1: Q^\pi_{\pi^q_{n-1}}(\pi^e_n)>0,Q^\pi_{\pi^q_{n}}(\pi^e_n)=0,Q^\pi_{\pi^q_{n+1}}(\pi^e_n)>0~\text{and}~\pi^a_n=2\},
% ~\text{if}~\pi^e_n~\text{is the beginning of a time slot.}
% \end{cases}
% \]
\begin{eqnarray}
l^*&:=&\inf\{n\geq1: Q^\pi_{\pi^q_{n-1}}(n)>0,Q^\pi_{\pi^q_{n}}(n)>0,\nonumber\\
&&\text{and}~\pi^a_n=2\}, 
\end{eqnarray}
wherein the queue which $\pi$ decides to serve, unlike the case with $n^*$, is \emph{non-empty} and hence,
slot $l^*$ is not wasted. Even in this case defining $\gamma$ as before produces a policy that does no worse 
than $\pi$ in terms of total system backlog.  

In the same manner, find such an $n^*$ (or $l^*$ as the case may be) for $\gamma$ and 
refine this policy using the procedure described. Iterating this procedure, one ends up with a policy
say $\xi$, for which such an instant never exists, i.e., which never switches to a different queue when the incumbent is nonempty. This is by definition exhaustive. We have, thus, defined both queueing processes on some common probability space $\left(\Omega,\mathcal{F},\mathbb{P}\right)$ and for
every $\omega\in\Omega$, we have shown that 
\begin{equation}
Q^\xi(t,\omega)\leq Q^\pi(t,\omega),~\forall~t\geq1,\nonumber 
\end{equation}
which means that $\forall x\in\mathbb{R}$,
\begin{equation}
 \mathbb{P}\{\omega:Q^\pi(t,\omega)>x\}\geq\mathbb{P}\{\omega:Q^\xi(t,\omega)>x\}.
\end{equation}
This coupling argument shows that 
%and by removing the conditioning on $\omega$, one obtains:
\begin{equation}
Q^\xi(t)\stackrel{st}{\leq} Q^\pi(t),~\forall~t\geq1.%\nonumber 
\end{equation}

\IEEEQED
%----------------------------*****************************************************---------------------------------------------------------------
% \subsection{Proof of Prop.~\ref{propSLEQunequal}}\label{AppendixProofOfPropSLEQunequal}

\subsection{Longest Expected Queue (LEQ) and Stochastically Largest Queue (SLQ) Policies}
\label{secLEQunequal}
\label{AppendixProofOfPropSLEQunequal}
%In the light of Prop.~\ref{propNonIdlingAndExhOpt} we will restrict our attention to policies that serve queues exhaustively, i.e., $\Pi_e\cap\Pi_g$. Hence, the potential queue switching instants are those instants at which the queue under service is found to be empty. 
%At such instants, which queue should acquire the channel? 
%Note that the result proved in the previous section did not impose any restrictions on the arrival rates to the different queues, except that they lie within the capacity region of the system (Eqn~\eqref{eqnBasicCapacityRegion}). In what follows, we will use Prop.~\ref{propNonIdlingAndExhOpt} to develop a heuristic algorithm that will yield low delays with the information structure assumed and also show that this heuristic 
%%actually minimizes delay under certain conditions. 
%provably minimizes delay when %$\lambda_1=\cdots=\lambda_N=\lambda,~\lambda\in[0,1/N)$.
Let the arrival rate vector be denoted by $\boldsymbol{\lambda}$ ($\in\mathbb{R}^N_+$). 
% For the time being, this information will be assumed to be known to all queues, but we will drop this assumption in Sec.~\ref{secDistributedLEQ}. 
For notational convenience, in the sequel, we denote the number of slots since Queue~$i$ was last served, formerly denoted $V_{i}^\pi(t)$, by $V_i(t)$ (omitting the policy superscript). Since we consider exhaustive service, $\lambda_iV_i(t)$ is the expected backlog of queue $i$ at $t$, and $\argmax{1\leq i\leq N}\lambda_iV_i(t)$ is the longest expected backlog. We call the policy that, at every scheduling instant, schedules this queue the Longest Expected Queue (LEQ) policy.
%We say that the system is \enquote{symmetric} or \enquote{homogeneous,} when
%\begin{equation}
%\lambda_i=\lambda~\forall i\in I.
%\label{eqnSymmetricSystemDefiningEquation}
%\end{equation}
%Under symmetry (Eqn.~\eqref{eqnSymmetricSystemDefiningEquation}), $\argmax{1\leq i\leq N}\lambda_iV_i(t)=\argmax{1\leq i\leq N}V_i(t)$ and the policy reduces to the cyclic exhaustive service policy discussed in detail in Sec.~\ref{secDelayOptimalityInSymmetricSystems}. % \cite{mohan-etal17hybrid-protocols}. 
%This policy will, in fact, shown to be \emph{mean delay optimal} which helps bolster our confidence in the LEQ policy.
%In what follows, we first 
%We begin by proving that this policy is throughput optimal and discuss some extensions. 
%Finally, we prove a stronger result about the optimality of the cyclic exhaustive service policy proposed in \cite{mohan-etal17hybrid-protocols}. 
\begin{tcolorbox}
\prop
The LEQ policy stabilizes all arrival rate vectors in the set
\begin{equation}
\Lambda_{LEQ}:=\bigg\{\boldsymbol{\lambda}\in\mathbb{R}^N_+\bigg|\sum_{i=1}^N\lambda_i<1~\text{and}~\min_{1\leq i\leq N}\lambda_i>0\bigg\}.
%\label{eqnCapacityOfFullyConnected}
\end{equation}
\end{tcolorbox}

\begin{rem}
\begin{enumerate}
    \item \emph{Comparing the definition of the region $\Lambda_{LEQ}$ with that of the region $\Lambda$ in Eqn.~\ref{eqnBasicCapacityRegion} shows quite clearly, that the LEQ policy is throughput optimal.}
    \item \emph{With $\lambda_i = \lambda, \forall i$, we have and the policy is easily seen to be a \emph{cyclic} exhaustive policy, i.e., queues are served in a fixed cycle, with each queue being served to exhaustion.
    $\argmax{1\leq i\leq N}\lambda_iV_i(t)=\argmax{1\leq i\leq N}V_i(t)$ and the policy reduces to a {\color{blue}\emph{cyclic} exhaustive service} policy.}
\end{enumerate}

\textbf{LEQ vs SLQ. }Note that when $V_l\neq V_k$ and $\lambda_l\neq\lambda_k,$ it might not even be possible to stochastically order random variables distributed according to Binomial($V_l,\lambda_l$) and Binomial($V_k,\lambda_k$). An example is when $V_l>V_k$, but $\lambda_l<\lambda_k.$ Hence, with unequal arrival rates, the LEQ policy doesn't necessarily choose the stochastically longest queue, but only the queue with the largest \emph{mean} backlog. This is the difference between the LEQ and SLQ policies.
In the sequel, we denote the set of all SLQ policies by $\Pi_s$.
% We say that the system is \enquote{symmetric} or \enquote{homogeneous,} when
% \begin{equation}
% \lambda_i=\lambda~\forall i\in I.
% \label{eqnSymmetricSystemDefiningEquation}
% \end{equation}
% In Sec.~\ref{appendix:leqInSymmetricSystems}, we show that under symmetry (Eqn.~\eqref{eqnSymmetricSystemDefiningEquation}),
\end{rem}
%\vspace{-5mm}
%, as can be seen from the following example.
Recall that under symmetry, the LEQ policy essentially schedules $\argmax{1\leq i\leq N}V_i(t)$ at every scheduling instant and exhaustively serves this queue. 
Why exhaustive service? As an example, 
Consider the scenario with 4 queues in the system and suppose that at the beginning of Queue $1$\textquotesingle s service, $\mathbf{V} = [0, v_2, v_3, v_4]$, with $v_2 > v_3 > v_4$. At the end of Queue $1$\textquotesingle s service, the vector changes to $\mathbf{V}' = [0, v_2', v_3', v_4']$, but the ordering is still preserved, i.e., $v'_2 > v'_3 > v'_4 > v'_1=0$. Hence, Queue 2 is chosen for service next. But at the end of queue 2\textquotesingle s service, once again, we see that $v''_3 > v''_4 > v''_1$, as it was at the beginning of Queue 2\textquotesingle s service, and Queue 3 is chosen, after which queue 4 is chosen followed by queue 1 and this process repeats. We see that under symmetry, LEQ induces a cyclic service system.  This policy is discussed in detail in Sec.~\ref{secDelayOptimalityInSymmetricSystems}. % \cite{mohan-etal17hybrid-protocols}. 
where we will, in fact, show it to be \emph{mean delay optimal} in symmetric systems. This also helps bolster our confidence in the LEQ policy itself. We now explore in greater detail the behavior of the LEQ policy in symmetric systems. 

To prove this we invoke Theorem~(3.1) in \cite{foss-last96stability-exhaustive-state-dependent} that proves the stability of certain exhaustive service policies with state dependent routing. We first require some more notation.

We denote by $S(t)$ the queue in service during slot $t$, and set $S(t)=0$ if the server is idling at some queue (which from Prop.~\ref{propNonIdlingAndExhOpt}, has to be empty) during some slot. We denote by $W_n$, the time taken by the server to begin the $n^{th}$ service. By this we mean the time taken by the server to find a non empty queue after serving some queue. So, if the incumbent is non empty, $W_n=0$ since the server simply begins serving the next packet in the queue served in the previous slot without switching away from it. Otherwise the server begins to search for a non empty queue to begin a busy period. Specifically, if $Q_{S(0)}(0)=0$, then $W_1$ is the time taken to find a non empty queue. During this period, the server may have visited several empty queues. 

Assume the underlying probability space is denoted by $\left(\Omega,\mathcal{F},P\right)$
%. Define $X(t)=\left[ Q_1(t),\cdots,~Q_N(t),~S(t)\right]$ 
and let 
\begin{equation}
\mathcal{F}_n:=\sigma\left(\bigg\{ \left[ Q_1(t),\cdots,~Q_N(t),~S(t)\right],~t\leq n\bigg\}\right),~n\geq0,\nonumber
\end{equation}
be the filtration describing the history of the system. Let $A_i[t]$ be the number of arrivals to queue $i$ until and including time $t$, and $A[t]=\sum_{i=1}^NA_i[t]$, the total number of arrivals to the system over the same duration.

The proof of Theorem~(3.1) in \cite{foss-last96stability-exhaustive-state-dependent} relies on two assumptions\footnote{Eqn.~(2.3) and Eqn.~(2.4) in \cite{foss-last96stability-exhaustive-state-dependent}.} that we show are satisfied in our case. Firstly, that there exists $w>0$ such that
\begin{equation}
\mathbb{E}\left[W_1|\mathcal{F}_0\right]<w,~P-a.s.~\text{on}~\left\lbrace Q_{S(0)}=0\right\rbrace,
\label{eqnSLQunequalAssumption1}
\end{equation}
and secondly that there exists some $p>0$ such that
\begin{equation}
P\left(A(W_1)=0|\mathcal{F}_0\right)>p,~P-a.s.~\text{on}~\left\lbrace Q_{S(0)}=0,\sum_{i=1}^NX_i(0)>0\right\rbrace.
\label{eqnSLQunequalAssumption2}
\end{equation}
In other words, \eqref{eqnSLQunequalAssumption1} means that if the system starts out empty, the time taken to find a non empty queue under the policy being considered, should have a finite mean. \eqref{eqnSLQunequalAssumption2}, on the other hand, refers to the fact that when the system starts off non empty and the server is at an empty queue at time $0$, the probability of $0$ arrivals in $W_1$ is positive. 
 
First note that on $\{Q_{S(0)}=0\}$, $W_1\geq1$, and let $\lambda_{min}=\min_{1\leq i\leq N}\lambda_i$ be the smallest arrival rate which, by assumption is strictly positive. The first assumption is satisfied, since 
\begin{eqnarray}
\mathbb{E}\left[W_1|\mathcal{F}_0\right]&=&\sum_{k=1}^\infty P\{W_1\geq k|\mathcal{F}_0\}\nonumber\\
&\stackrel{(*a)}{\leq}&\sum_{k=1}^\infty (1-\lambda_{S(k)})^k\nonumber\\
&\leq&\sum_{k=1}^\infty (1-\lambda_{min})^k\nonumber\\
&=&\frac{1}{\lambda_{min}}-1<\infty.\nonumber
\end{eqnarray}
Inequality $(*a)$ is true, since, for the walking time to be at least $k$, the $k^{th}$ polled queue, i.e., $S(k)$, must be empty. Since on $\{Q_{S(0)}=0\}$ the system also starts off empty, this probability is $(1-\lambda_{S(k)})^k$. To prove \eqref{eqnSLQunequalAssumption2}, first define  $p_k=P\{W_1=k|\mathcal{F}_0\}$. 
\begin{eqnarray*}
P\{A(W_1)|\mathcal{F}_0\}&=&\sum_{k=1}^\infty P\{A(k)=0|\mathcal{F}_0, W_1=k\} P\{W_1=k|\mathcal{F}_0\}\\
&=&\sum_{k=1}^\infty P\{A(k)=0|\mathcal{F}_0, W_1=k\} p_k.\\
&=&\sum_{k=1}^\infty \left(\sum_{i=1}^N\left(1-\lambda_i\right)^k\right) p_k.
\end{eqnarray*}
But on $\{X_{S(0)}=0\}$, $p_1>(1-\lambda_{min})~a.s.$ So,
\begin{eqnarray*}
P\{A(W_1)|\mathcal{F}_0\}&=&p_1\sum_{i=1}^N\left(1-\lambda_i\right)\\
&&+\sum_{k=2}^\infty \left(\sum_{i=1}^N\left(1-\lambda_i\right)^k\right) p_k\\
&\geq&\sum_{i=1}^N(1-\lambda_i)>0.
\end{eqnarray*}
With both \eqref{eqnSLQunequalAssumption1} and \eqref{eqnSLQunequalAssumption2} satisfied, we note that Theorem~(3.1) in \cite{foss-last96stability-exhaustive-state-dependent} is proved under a much more general model, where the behavior of the system is influenced by another process $U(t),t\geq0$, that takes values in some measurable space $\left(\mathbf{U},\mathcal{U}\right)$. Setting $\mathbf{U}=\mathbb{N}^N$ and $U(t)=V(t)$, the proof is complete.

\IEEEQED
%--------------------------------------------------*******************************************---------------------------------------------------

\subsection{LEQ with Equal Arrival Rates}\label{appendix:leqInSymmetricSystems}

With equal arrival rates, $\argmax{1\leq i\leq N}\lambda_iV_i(t)=\argmax{1\leq i\leq N}V_i(t)$ and the policy reduces to a {\color{blue}\emph{cyclic} exhaustive service} policy, as can be seen from the following example.
%Recall that under symmetry, the LEQ policy essentially schedules $\argmax{1\leq i\leq N}V_i(t)$ at every scheduling instant and exhaustively serves this queue. 
%Why exhaustive service? As an example, 
Consider the scenario with 4 queues in the system and suppose that at the beginning of Queue $1$\textquotesingle s service, $\mathbf{V} = [0, v_2, v_3, v_4]$, with $v_2 > v_3 > v_4$. At the end of Queue $1$\textquotesingle s service, the vector changes to $\mathbf{V}' = [0, v_2', v_3', v_4']$, but the ordering is still preserved, i.e., $v'_2 > v'_3 > v'_4 > v'_1=0$. Hence, Queue 2 is chosen for service next. But at the end of queue 2\textquotesingle s service, once again, we see that $v''_3 > v''_4 > v''_1$, as it was at the beginning of Queue 2\textquotesingle s service, and Queue 3 is chosen, after which queue 4 is chosen followed by queue 1 and this process repeats. We see that under symmetry, LEQ induces a cyclic service system.  This policy is discussed in detail in Sec.~\ref{secDelayOptimalityInSymmetricSystems}. % \cite{mohan-etal17hybrid-protocols}. 
where we will, in fact, show it to be \emph{mean delay optimal} in symmetric systems. This also helps bolster our confidence in the LEQ policy itself. We now explore in greater detail the behavior of the LEQ policy in symmetric systems. 

%--------------------------------------------------*******************************************---------------------------------------------------
 \subsection{Stability of Cyclic Exhaustive Service}\label{secStablePiStar}
 We know that cyclic exhaustive service shows optimal mean delay performance in systems with equal arrival rates. In this section, we make an important observation about its stability in systems with \emph{general} arrival rates, i.e., not necessarily equal.\\
 \indent The cyclic exhaustive service policy is \emph{\color{blue}throughput optimal}. The proof of the stability of the cyclic exhaustive service policy proceeds along the lines of the analysis in \cite{altman92stability-mcity}. Note, however, that the stability of this policy is proved %in \cite{mohan-etal17hybrid-protocols} 
 for a system with general (possibly unequal) arrivals rates $\lambda_1,\cdots,\lambda_N$, and includes, as a special case, the stability of the symmetric system. Formally, 

 \begin{thm}\label{thmCycExhStable} 
 Consider the system capacity region defined in Eqn.~\eqref{eqnBasicCapacityRegion}.
 If $\boldsymbol{\lambda}\in\Lambda$, the system is stable under cyclic exhaustive service.
 %The system is stable under $\pi^*$ for any arrival rate vector $\lambda\in\Lambda$, where $\Lambda$ is defined in Eqn.~\eqref{eqnBasicCapacityRegion}.
 \end{thm}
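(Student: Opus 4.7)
The plan is to interpret the system under cyclic exhaustive service as a standard polling system on $N$ queues with deterministic cyclic visit order and unit-slot "switchover attempts" when the polled queue is found empty, and to apply Foster--Lyapunov drift to an appropriate embedded chain. First, I would identify the regeneration structure. Let $T_n$ denote the slot at which the server begins the $n$-th visit to queue~$1$, and let $C_n := T_{n+1}-T_n$ be the length of the $n$-th cycle. Within any such cycle, each queue is polled exactly once in the order $1,2,\ldots,N$; if the polled queue is non-empty it is served exhaustively, otherwise exactly one polling slot is consumed before the server advances. Let $\mathbf{Q}^{(n)} := \mathbf{Q}(T_n)$ and $Q^{tot,(n)} := \sum_j Q_j^{(n)}$.

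The second step is per-cycle accounting. Of the $C_n$ slots in cycle $n$, let $P_n\in\{0,1,\ldots,N\}$ be the number of empty-polling slots, so that $C_n - P_n$ slots produce an actual departure. The number of arrivals into the system during the cycle, conditional on $C_n$, has mean $\bigl(\sum_j \lambda_j\bigr) C_n$ by the Bernoulli-IID assumption and independence of arrivals from the schedule. Because every queue is exhausted in the cycle, every packet present at $T_n$ is served in cycle $n$, and so is every arrival that lands in a queue before that queue's exhaustive service ends in the cycle. A Wald-style identity then yields
\begin{equation}
E[C_n - P_n \mid \mathbf{Q}^{(n)}] \;\leq\; Q^{tot,(n)} + \Bigl(\textstyle\sum_j \lambda_j\Bigr) E[C_n \mid \mathbf{Q}^{(n)}],
\end{equation}
and combining with $P_n\leq N$ gives the key estimate
\begin{equation}
E[C_n \mid \mathbf{Q}^{(n)}] \;\leq\; \frac{Q^{tot,(n)} + N}{1-\sum_j \lambda_j}.
\end{equation}

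Third, I would apply the Foster--Lyapunov criterion to the embedded chain $\{\mathbf{Q}^{(n)}\}_{n\geq 0}$ using the linear Lyapunov function $V(\mathbf{q}) := \sum_j q_j$. The one-cycle drift is
\begin{equation}
E\bigl[V(\mathbf{Q}^{(n+1)}) - V(\mathbf{Q}^{(n)})\,\big|\,\mathbf{Q}^{(n)}\bigr] \;=\; \Bigl(\textstyle\sum_j \lambda_j\Bigr)\,E[C_n\mid\mathbf{Q}^{(n)}] \;-\; E[C_n - P_n \mid \mathbf{Q}^{(n)}],
\end{equation}
which, after substituting the accounting inequality from Step~2, is bounded above by $-(1-\sum_j\lambda_j)\,E[C_n\mid\mathbf{Q}^{(n)}]+N$. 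Since $E[C_n\mid\mathbf{Q}^{(n)}] \geq Q^{tot,(n)}$ (at least one slot per pre-existing packet), this yields $E[V(\mathbf{Q}^{(n+1)})-V(\mathbf{Q}^{(n)})\mid\mathbf{Q}^{(n)}] \leq -\varepsilon\,Q^{tot,(n)} + K$ for constants $\varepsilon := 1-\sum_j\lambda_j >0$ and $K := N$. Irreducibility and aperiodicity of the embedded chain on its reachable state space follow from the Bernoulli-$(0,1)$ arrival assumption; Foster--Lyapunov then gives positive recurrence of $\{\mathbf{Q}^{(n)}\}$. Finally, since the conditional expected return time of the slot-by-slot chain $\{(\mathbf{Q}(t),\pi^q_t)\}$ to any set of the form $\{Q^{tot}\leq M\}\times\{1\}$ is bounded by $E[C_n\mid\mathbf{Q}^{(n)}]$, which is finite on the positive-recurrence set, positive recurrence lifts to the full slot-indexed Markov chain, establishing stability.

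The main obstacle is the accounting in Step~2: tying the random cycle length $C_n$ to the initial workload $Q^{tot,(n)}$ is delicate because, during the exhaustive service of queue~$i$, arrivals to queue~$i$ extend $S_i$ within the cycle, while arrivals to the remaining queues accumulate silently and are only served in the next cycle. The correct bookkeeping either invokes Wald's identity on each queue's service time separately, $E[S_i \mid Q_i \text{ at visit}] = Q_i/(1-\lambda_i)$, and sums, or uses the slot-level work-conservation identity (arrivals $=$ departures $+$ change in backlog) together with the bound $P_n \leq N$ on wasted polling slots per cycle. The argument follows Altman~\cite{altman92stability-mcity} but with the twist that our "switchover times" are state-dependent (one slot precisely when a polled queue happens to be empty), which is controlled uniformly by the bound $P_n\leq N$.
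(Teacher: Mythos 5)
Your proof is correct and takes a route that is related to but cleaner than the paper's. The paper embeds the state $[\mathbf{Q}(n),I(n)]$ at \emph{polling} instants (beginnings of server visits), computes mean busy-period lengths $\mathbb{E}G_n(l)=l/(1-\lambda_{I(n)})$ explicitly, propagates the drift of $\sum_j Q_j(n)$ over $2N$ consecutive polls, and --- because the polling-indexed chain is not time-homogeneous --- must subsample into $N$ homogeneous chains $\{\mathbf{Q}(nN+k)\}$ and argue each is ergodic. You instead embed at \emph{cycle} starts (returns to queue~$1$), so the embedded chain $\{\mathbf{Q}^{(n)}\}$ is already homogeneous, and you replace the busy-period computations with the slot-level conservation identity $Q^{tot,(n+1)}-Q^{tot,(n)}=A_n-(C_n-P_n)$ combined with Wald's identity and the crude bound $P_n\leq N$ on idle polls per cycle. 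Both arguments produce the same negative drift in $Q^{tot,(n)}$ and conclude via Foster--Lyapunov, but yours avoids the non-homogeneity/subsampling bookkeeping at the cost of slightly less granular information (the paper's per-poll analysis exposes per-queue coefficients $h_{I(n)}Q_{I(n)}(n)$ in the drift). Two minor remarks: the upper bound on $\mathbb{E}[C_n\mid\mathbf{Q}^{(n)}]$ derived in your Step~2 is not actually used for the drift estimate, which only needs $P_n\leq N$ and the trivial lower bound $C_n\geq Q^{tot,(n)}$; it is, however, precisely what you need to make the final lifting step from the cycle-indexed chain to the slot-indexed chain rigorous, so you should state explicitly that finiteness of $\mathbb{E}[C_n\mid\mathbf{Q}^{(n)}]$ on the Foster small set is what transfers positive recurrence to $\{\mathbf{Q}(t)\}$. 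Also, for the irreducibility claim, restricting to the reachable state space as you note is necessary since $\lambda_j=0$ is permitted in $\Lambda$.
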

 \begin{IEEEproof} The proof involves showing that the drift of a Lyapunov function computed at the end of $N$ busy periods, i.e., between two successive returns to any queue $i$, is negative. Refer to Sec.~\ref{ApendixProofOfThmCycExhStable} for details.
 \end{IEEEproof}
 \begin{rem}
 Although we motivated cyclic exhaustive service as the LEQ policy specialized to symmetric systems, Thm.~\ref{thmCycExhStable} shows that the former can actually stabilize unequal arrival rates as well. So we now have \emph{two} throughput optimal scheduling policies (LEQ and cyclic exhaustive service). But since the LEQ policy requires estimates of the arrival rate vector and cyclic exhaustive service does not, can we do away with LEQ altogether? As Fig.~\ref{figComparisonOfDelayOfLEQandSLQ} shows, with unequal arrival rates, LEQ can deliver substantially better delay performance and hence, is preferable to cyclic exhaustive service. In Sec.~\ref{secNumRes}, we show how LEQ can be implemented in a \emph{distributed} manner using arrival rate estimation, but where such estimation cannot be performed or is undesirable, cyclic exhaustive service can still be used to stabilize the system albeit at the cost of increased delay. 
 \end{rem}
  \begin{figure}[tb]
\centering%[height=5.25cm, width=17cm]
\includegraphics[height=5.5cm, width=8cm]{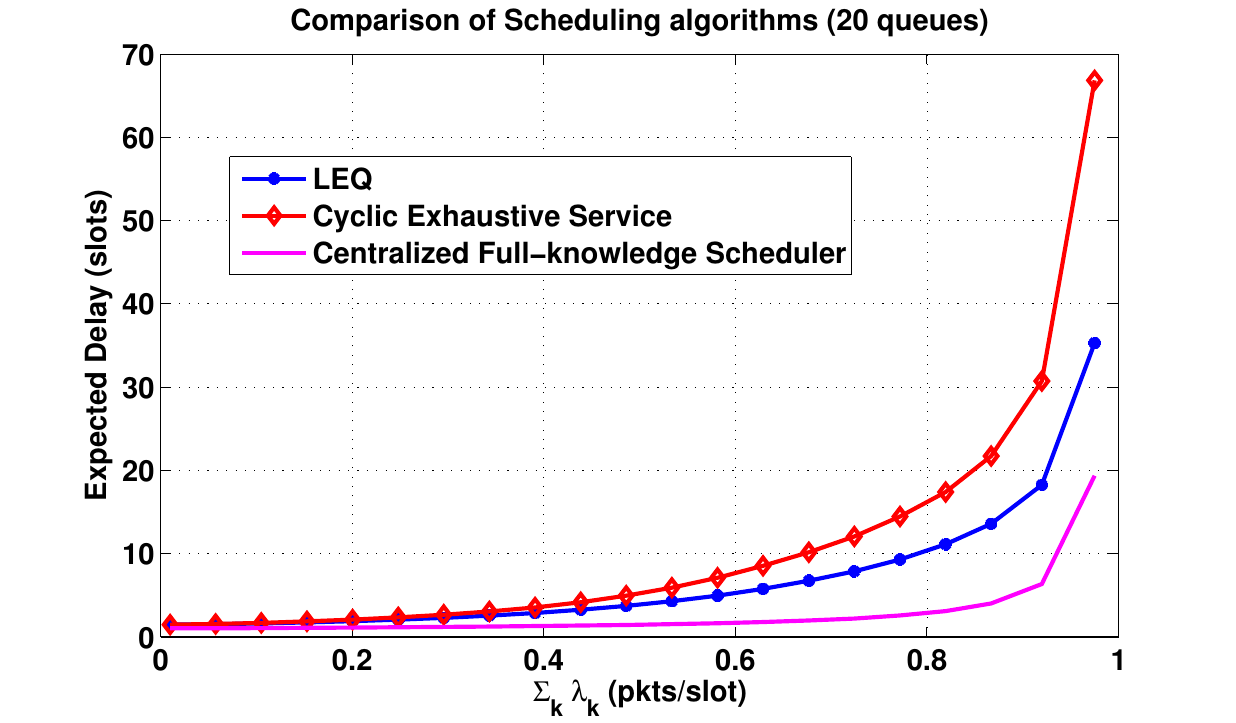}
\caption{Comparison of the performance of the LEQ and Cyclic Exhaustive Service policies. Also shown is the lowest possible delay achievable in this system (black curve). This is discussed in detail in Sec. VI. We assume a fixed ratio of packet rates that determines the arrival to each queue. Specifically, $r_i\lambda$ is the arrival rate to Queue $i$ with $\sum_{i=1}^N r_i = 1$. In the plot, $\lambda \in [0,1)$.}
%\caption{A sample path illustrating how $\gamma$ is derived from $\pi$.}
\label{figComparisonOfDelayOfLEQandSLQ}
\end{figure}
\subsection{Proof of Thm.~\ref{thmCycExhStable}}\label{ApendixProofOfThmCycExhStable}
In this section, we return to the system discussed Sec.\ref{secSysMod}, i.e., one that is not necessarily symmetric\footnote{Proving stability for this system obviously proves stability in the symmetric system.}. The analysis proceeds along the lines of the proof of the theorem in \cite{altman92stability-mcity}. We will be looking at the system's state only at the epochs at which the server \emph{begins} its visit to a queue. 
% Note that determining whether a queue is empty in our model requires a duration of one minislot
% and there are two minislots in every slot ($T_p=2$). The first to determine whether the incumbent is 
% empty and if so, the second to determine if the queue chosen next is empty. If that queue also turns 
% out to be empty, the slot is wasted and the same queue is queried in the first minislot of the next
% time slot and the process repeats. The term \emph{Polling Instant} is defined as the instant at which
% the system gets to know whether the queried queue is empty/nonempty. A sample path has been presented
% in Fig.\ref{sys2_omega}.

With a slight abuse of notation, we define this system's state is by
$\mathbf{s}(n)=\left[\mathbf{Q}(n),I(n)\right]^T$, %,M(n)
where, 
$\mathbf{Q}(n)=\left[Q_1(n),~Q_2(n),\dots,Q_N(n)\right]^T,n\geq 1$ is the vector of queue lengths
at the $n^{th}$ switching instant (at which the server visits the $n^{th}$ queue), $I(n)$ is the identity
of the $n^{th}$ queue polled. % and $M(n)=i$ if the queue is queried in the $i^{th}$ minislot ($M(n)\in\{1,2\}$).
Clearly, $\mathbf{s}(n)\in\mathbb{N}^N\times I$, and %\times\{1,2\}
this process is embedded at the instants at which the server \emph{arrives} at a queue. 

Further, the number of arrivals to queue $k$ in $t$ time-slots is denoted by $A_k(t)$ and 
the length of the $n^{th}$ busy period starting with $l$ packets by $G_n(l)$. Also, 
 \[
   I(n+1)=
\begin{cases}
    I(n)+1~~\text{if}~1\leq n\leq N-1\\
    1\hspace{1.4cm}\text{if}~n=N.
\end{cases}
\]
%and $M(n)$ toggles between 1 and 2. 
Consequently, when 
\begin{enumerate}

%  \item $j\neq I(n),~S(n)=1$
%  \[
%    Q_j(n+1)=
% \begin{cases}
%     Q_j(n)+A_j\left(G_n\left(Q_{I(n)}(n)\right)\right),~\text{if}~Q_{I(n)}(n)\geq1\\
%     Q_j(n),\text{if}~Q_{I(n)}(n)=0\\
%     %A_j\left(1\right),j=I(n), 
% \end{cases}
% \]

\item $j\neq I(n)$%,~ S(n)=2$
 \[
   Q_j(n+1)=
\begin{cases}
    Q_j(n)+A_j\left(G_n\left(Q_{I(n)}(n)\right)\right),\\
    ~~~\text{if}~Q_{I(n)}(n)\geq1\\
    Q_j(n)+A_j(1),\text{if}~Q_{I(n)}(n)=0\\
    %A_j\left(1\right),j=I(n), 
\end{cases}
\]

% \item $j=I(n),~ S(n)=1$
%  \begin{equation}  
%    Q_j(n+1)=0,\nonumber
%  \end{equation}
 
\item $j=I(n)$%,~ S(n)=2$
 \begin{equation}  
    Q_j(n+1)=A_j(1).\nonumber
 \end{equation}

\end{enumerate}

The mean duration of a busy period, $\mathbb{E}B(I(n))$ of queue $I(n)$ beginning with \emph{one}
packet is found by observing that
\begin{small}
\begin{eqnarray*}
\hspace{-2.0cm}
 \mathbb{E}B(I(n))&=&\left[1\times\lambda_{I(n)}+\{1+\mathbb{E}G_n(l)\}\{1-\lambda_{I(n)}\}\right].\nonumber\\
\Rightarrow\mathbb{E}B(I(n))&=&\frac{1}{1-\lambda_{I(n)}}.%\hspace{4.5cm}%\text{Therefore,}
\end{eqnarray*}
\end{small}
Hence, since $G_n(l)$ begins with $l$ packets instead of one, 
\begin{eqnarray}
\hspace{-1.0cm}
 \mathbb{E}G_n(l)&=&l\times\left[1\times\lambda_{I(n)}+\{1+\mathbb{E}G_n(l)\}\{1-\lambda_{I(n)}\}\right]\nonumber\\
&=&\frac{l}{1-\lambda_{I(n)}}\hspace{5cm}\nonumber\\%\Rightarrow\mathbb{E}G_n(l)
&\Rightarrow &\mathbb{E}A_j\left(G_n(l)\right)=\frac{\lambda_j}{1-\lambda_{I(n)}}l~\forall j\neq I(n)\nonumber\\%\hspace{3.25cm}
&\Rightarrow &\mathbb{E}\sum_{j\neq I(n)}A_j\left(G_n(l)\right)=\frac{\rho-\lambda_{I(n)}}{1-\lambda_{I(n)}}l.%\hspace{4.75cm}
\label{expBusy}
\end{eqnarray}

We will now show that $\rho=\sum_{i=1}^N\lambda_i<1$ is a sufficient condition for stability. 
% In Sec.\ref{secPezmac}, we will also show why this is a necessary and sufficient condition in 
% symmetric systems.
Clearly, $\{\mathbf{Q}_n, n\geq 1\}$, is an irreducible DTMC. However,
%this DTMC exhibits periodic behavior\footnote{This does not mean the DTMC itself is periodic!}with period $2N$. 
%More precisely, 
it is not time homogeneous since, as (\ref{expBusy}) shows, the transition
probabilities depend on the queue being served, i.e., on $I(n)$. Hence, as defined in 
\cite{altman92stability-mcity}, the system is said to be \emph{stable} if the $N$ irreducible, 
\emph{homogeneous} DTMCs $\{\left[\mathbf{Q}(nN+k),k\right], n\geq 1, 1\leq k\leq N, i\in\{1,2\}\}$
are all ergodic.  

%\thm\label{thmCycExhStable} If $\rho=\sum_{i=1}^N\lambda_i<1$, the system is stable under $\pi^*$.
Now, from the definition of $Q_j(n+1)$, we see that
\begin{small}
\begin{eqnarray*}
\hspace{-0.15cm}
 \mathbb{E}\left[\sum_{j=1}^NQ_j(n+1)|\mathbf{Q}(n)\right]&\leq &\sum_{j=1}^NQ_{j}(n)+\rho\\
&&+\frac{\rho-\lambda_{I(n)}}{1-\lambda_{I(n)}}Q_{I(n)}(n)-Q_{I(n)}(n)\nonumber\\
&=&\sum_{j=1}^NQ_{j}(n)+\rho-(1-\rho)h_{I(n)}Q_{I(n)}(n),\\
&&~\text{where $h_{I(n)}=\frac{1}{1-\lambda_{I(n)}}$},%\hspace{-1.8cm}
\end{eqnarray*}
\end{small}
which means that
\begin{small}
\begin{eqnarray*}
\hspace{-2.0cm}
\mathbb{E}\left[\sum_{j=1}^NQ_{j}(n+2)|\mathbf{Q}(n)\right]&=&\mathbb{E}\left[\mathbb{E}\left[\sum_{j=1}^NQ_{j}(n+2)\mid\mathbf{Q}(n+1)\right]\mid\mathbf{Q}(n)\right]\nonumber\\
&\leq &\mathbb{E}\bigg[\sum_{j=1}^NQ_{j}(n+1)+\rho\\
&&-(1-\rho)h_{I(n+1)}Q_{I(n+1)}(n+1)\mid \mathbf{Q}(n)\bigg] \nonumber \\
&\leq &\mathbb{E}\bigg[\sum_{j=1}^NQ_{j}(n+1)\mid \mathbf{Q}(n)\bigg]+\rho\\
&&-(1-\rho)h_{I(n+1)}Q_{I(n+1)}(n+1)\nonumber\\
&=&\sum_{j=1}^NQ_j(n)+2\rho-(1-\rho)\bigg[h_{I(n)}Q_{I(n)}(n)\\
&&+h_{I(n+1)}Q_{I(n+1)}(n)\bigg].
\end{eqnarray*}
\end{small}
Proceeding similarly, we get
\begin{small}
\begin{eqnarray*}
 \mathbb{E}\left[\sum_{j=1}^NQ_j(n+2N)\mid \mathbf{Q}(n)\right]&\leq&\sum_{j=1}^NQ_j(n)+2N\rho\nonumber\\
 &&-(1-\rho)\sum_{m=0}^{2N-1}h_{I(n+m)}Q_{I(n+m)}(n),\nonumber
 \end{eqnarray*}
 
 \begin{eqnarray} 
 \Rightarrow\mathbb{E}\left[\sum_{j=1}^NQ_j(n+2N)-\sum_{j=1}^NQ_j(n)\mid \mathbf{Q}(n)\right]\leq 2N\rho\nonumber\\
-(1-\rho)\sum_{m=0}^{2N-1}h_{I(n+m)}Q_{I(n+m)}(n).\hspace{-0.4cm}
 \label{NslotDrift}
\end{eqnarray}

\end{small}
Clearly, the RHS of (\ref{NslotDrift}) is at most $2N\rho$, and if for even one $1\leq k\leq 2N$ and
$\epsilon>0$, $Q_{I(n+k)}(n)>\frac{N\rho\epsilon}{(1-\rho)h(I(n+k))}$, then 
$\mathbb{E}\left[\sum_{j=1}^NQ_j(n+2N)-\sum_{j=1}^NQ_j(n)\mid \mathbf{Q}(n)\right]<-\epsilon$. Hence, by the 
Foster-Lyapunov criterion (\cite{fayolle-etal95constructive-theory-markov-chains}, Thm.~2.2.3), we see that 
$\{\left[\mathbf{Q}(nN+k),k,i\right], n\geq 1\}$ is positive recurrent for every 
$1\leq k\leq N$, and $i\in\{1,2\}$. Being irreducible DTMCs as well, it is also ergodic.

\IEEEQED
%--------------------------------------------------*******************************************---------------------------------------------------
\subsection{Formulating the MDP}\label{AppendixSecFormulatingTheMDP}

\begin{enumerate}
 \item State Space: The state of the system at time $n\geq1$ is the vector 
 \begin{equation}
  s(t)=\left[Q_{u_{n-1}}(n),\mathbf{V}(n),\mathbf{r}(n),u_{n-1}\right].%=(y_n,u_{n-1}),
 \end{equation} 
%where $y_t$ is the ``observation'' defined in Sec. \ref{secPoForm}. 
So $s(n)\in\mathbb{N}\times\mathbb{N}^N\times\mathbb{N}^{N-1}\times I$. %

\item Action Space: The action space $\mathbb{A}=I$ for all states. 

\item Initial distribution on the State Space: We assume that the system begins empty and so, the initial
distribution $D$ is simply $\mathbb{I}\{s(0)\}$, where $s(0)=\left[q,\mathbf{V}(0),\mathbf{r}(0),i_0\right],$
is given and hence, known.
%where $i$ is some queue index.

\item History: Since the state is observable, the initial history is nothing but the initial state,
which is supplied as the initial information. 
\begin{equation}
 h_0=s(0).
\end{equation}
At time $t$, the history provides the complete picture of successive observations and control actions
chosen so far; for $n\geq1$:
\begin{equation}
 h_n=\left(s(0), a(0),\dots,s(n-1),a(n-1),s(n)\right),
\end{equation}
where the action taken at time $n$, $a(n)\in\mathbb{I},\forall n\geq0.$
%where $s'(n)$ is the ``state'' defined in the Sec. \ref{secPoForm}.

\item Policies: In general, a policy is a sequence of conditional distributions $\xi=\{\xi_n\}$ such that for each $n$, $\xi_n$ is a distribution on $\mathbb{A}$ given $h_n$. As mentioned before, we restrict attention to policies that serve any queue $j$ with $r_j(t)>0$ first and then return to serve the incumbent. As a consequence of Prop.~\ref{propNonIdlingAndExhOpt}, we also restrict to the case where if $q>0$ then we serve $i$ again.

\item Transition Law: We denote the current action by $u_n$ and distinguish two cases:
\begin{itemize}
 \item $r_i(n)\geq1$ for some $i\in I-{u_{n-1}}$. Then, \small
\begin{eqnarray}
\hspace{-1.00cm}
q_{u_n}(s(n),s(n+1))&=&\mathbb{I}_{\left\lbrace r_i(n+1)=r_i(n)-1\right\rbrace}\times\nonumber\\ 
&&\Pi_{j\neq i}\mathbb{I}_{\left\lbrace r_j(n+1)=r_j(n)\right\rbrace}\mathbb{I}_{\left\lbrace u_{t}=i\right\rbrace}\times\nonumber\\
&&\left(\Pi_{k=1}^N\mathbb{I}_{u_n}\{V_k(n),V_k(n+1)\}\right)\times\nonumber\\
&&\left(\Pi_{k=1}^Nq'_{u_n}(Q_k(n),Q_k(n+1))\right),\hspace{0.9cm}
\end{eqnarray} 
\normalsize where \small 
\[
\hspace{-2.0cm}   {I}_{u_n}\{V_k(n),V_k(n+1)\}:=
\begin{cases}
    1~\text{if}~V_k(n+1)=V_k(n)+1~\text{and}~k\neq u_n,\\
    1~\text{if}~V_k(n+1)=V_k(n)=0~\text{and}~k=u_n,~\text{and }\\
    0~\text{otherwise},
\end{cases}
\]
and 
\[
\hspace{-1.0cm}   q'_{u_n}(Q_k(n),Q_k(n+1)):=
\begin{cases}
    \lambda~~~~~~\text{if}~Q_k(n+1)=Q_k(n)+1\\ 
    ~~~~~~~\text{OR, if}~Q_k(n+1)=(Q_k(n)-1)^++1\\~~~~~~~~\text{and }k=i,\\
    1-\lambda~\text{if}~Q_k(n+1)=Q_k(n)+0\\ 
    ~~~~~~~\text{OR, if}~Q_k(n+1)=(Q_k(n)-1)^++0\\~~~~~~~~\text{and }k=i,\\
    0~~~~~~\text{otherwise},
\end{cases}
\]
\normalsize
\item $\mathbf{r}(n)=\mathbf{0}$. Then, \small
\begin{eqnarray}
q_{u_n}(s(n),s(n+1))&=&\Pi_{j\in I}\mathbb{I}_{\left\lbrace r_j(n+1)=0\right\rbrace}\times\nonumber\\ &&\left(\Pi_{k=1}^N\mathbb{I}_{u_n}\{V_k(n),V_k(n+1)\}\right)\times\nonumber\\
&&\left(\Pi_{k=1}^Nq'_{u_n}(Q_k(n),Q_k(n+1))\right),\hspace{0.9cm}
\end{eqnarray} 
\normalsize where \small
\[
\hspace{-2.0cm}   {I}_{u_n}\{V_k(n),V_k(n+1)\}:=
\begin{cases}
    1~\text{if}~V_k(n+1)=V_k(n)+1~\text{and}~k\neq u_n,\\
    1~\text{if}~V_k(n+1)=V_k(n)=0~\text{and}~k=u_n,~\text{and}\\
    0~\text{otherwise},
\end{cases}
\]
\normalsize and \small
\[
\hspace{-1.0cm}   q'_{u_n}(Q_k(n),Q_k(n+1)):=
\begin{cases}
    \lambda~~~~~~\text{if}~Q_k(n+1)=Q_k(n)+1~\text{and}~k\neq u_n,\\
    ~~~~~~~\text{OR, if}~Q_k(n+1)=(Q_k(n)-1)^++1\\~~~~~~~~\text{and }k=u_n,\\
    1-\lambda~\text{if}~Q_k(n+1)=Q_k(n)+0~\text{and}~k\neq u_n,\\
    ~~~~~~~\text{OR, if}~Q_k(n+1)=(Q_k(n)-1)^++0\\~~~~~~~~\text{and }k=u_n,\\    
    0~~~~~~\text{otherwise},
\end{cases}
\]
\end{itemize} 
\normalsize 
\item Single Stage Cost: This gives the expected cost over the current step when the current state and 
action are known. 
\begin{eqnarray}
\hspace{-1.00cm}
 c(s(n),u_n)&=&\mathbb{E}\left[\sum_{i=1}^N Q_i(n)|s_n\right]\nonumber\\
&=&Q_{u_{n-1}}(n)+\lambda\sum_{j\neq u_{n-1}}V_j(n)+\sum_{i\in I}R_i(n),\hspace{0.9cm}
\end{eqnarray}
which is the expected sum of the current queue lengths conditioned on the current state.
\end{enumerate}
Recall that the optimal cost (Eqn.~\eqref{eqnDiscountedCostOfCOMDP}) is given by
\begin{equation}
 J^*(s(0))=min_{\xi}~\mathbb{E}^\xi_{s(0)}\sum_{n=0}^\infty\alpha^n\mathbb{E}\left[\sum_{i=1}^NQ_i(n)|s_n\right].
%\label{eqnDiscountedCostOfCOMDP}
\end{equation}
It is to be noted that $J^*(s(0))$ on the R.H.S of \eqref{eqnDiscountedCostOfCOMDP} exists for every initial state $s(0)$. Since arrivals are IID Bernoulli variables, the single stage cost can increase at most linearly with time, while the discount factor $\alpha^n$ decreases exponentially and dominates. %This enables one to 

The Bellman Optimality equations \cite{bertsekas95mdp-control} associated with this MDP formulation are as follows (recall from Eqn.~\eqref{eqnBellmanOpt} that we denote $J^*(q; \mathbf{V}; \mathbf{r}; i)$ by $J_i^*(q; \mathbf{V}; \mathbf{r})$). When $\mathbf{r}=\mathbf{0}$, 
\begin{eqnarray*}
%\hspace{-1.00cm}
J^*_i(q>0; \mathbf{V}; \mathbf{r}=\mathbf{0}) &=& q+\lambda \sum_{k \neq i} V_k + \alpha \mathbb{E} J^*_i\left(q-1+A; \right.\\
&& \left. V_i=0, \mathbf{V}_{-i}+\mathbf{1}; \mathbf{r}=\mathbf{0}\right), \\
J^*_i(q=0; \mathbf{V}; \mathbf{r}=\mathbf{0}) &=& \lambda \sum_{k \neq i} V_k +\alpha \min_{j \neq i} \mathbb{E} J^*_j\left(B(V_j);\right.\\ 
&& \left. V_j=0, \mathbf{V}_{-j}+\mathbf{1}; \mathbf{r}=\mathbf{0} \right),%\hspace{1.00cm}
\end{eqnarray*}
where $A$ is a generic Bernoulli($\lambda$) random variable, 
$\mathbf{V}_{-i}=[V_1,\dots,V_{i-1},V_{i+1},\dots,V_N]$ and $\mathbf{1}\in\mathbb{R}^{(N-1)}$ is the
vector with 1's at all coordinates. Finally, if random variable $C$ is distributed
Binomial($V_j,\lambda$), $B(V_j)$ is a random variable whose distribution is the same as that of
$(C-1)^++A$.

If $r_j\geq1$ for some $j\in I-\{i\}$, the server first serves the $\sum_{j\neq i}r_j$ packets and then returns to queue $i$. Let $\tilde{R}=\sum_{j\neq i}r_j$, $L_{-i}=\sum_{j\neq i}V_j$, and $\tilde{L}_{-i}=\sum_{k=1}^{\tilde{R}}\alpha^{(\tilde{R}-k)}(L_{-i}+Nk)$. Clubbing all these $\tilde{R}$ slots, we get:
\begin{eqnarray}
J^*_i(q>0; \mathbf{V}; \mathbf{r}) &=& q\sum_{k=0}^{\tilde{R}-1}\alpha^k+\lambda\tilde{L}_{-i}+ \alpha^{\tilde{R}}\mathbb{E} J^*_i\left(q-1+\tilde{A};\right.\nonumber\\
&& \left. V_i=0, \mathbf{V}_{-i}+\tilde{R}\mathbf{1}; \mathbf{r}=\mathbf{0}\right) \nonumber\\%\hspace{4cm}
J^*_i(q=0;\mathbf{V};\mathbf{r}) &=& \lambda\tilde{L}_{-i}+\alpha^{\tilde{R}}\mathbb{E} J^*_j\left(\tilde{A}; V_j=0, \right.\nonumber\\
&& \left.\mathbf{V}_{-j}+\tilde{R}\mathbf{1};\mathbf{r}=\mathbf{0}\right),\hspace{0.75cm}
\label{eqnBellmanWhenRPositive}
\end{eqnarray}
where $\tilde{A}$ is a Binomial$\left(\tilde{R},\lambda\right)$ random variable and represents the number of arrivals over $\tilde{R}$ slots.
%--------------------------------------------------*******************************************---------------------------------------------------
\subsection{Proof of Thm.~\ref{thmCycExhDelOpt}}\label{AppendixProofOfThmCycExhDelOpt}
Before we prove that cyclic exhaustive service solves the discounted cost MDP described in Sec.~\ref{AppendixSecFormulatingTheMDP}, we require an intermediate result on monotonicity of the optimal cost function. 
\subsubsection{Monotonicity of $J^*(\cdot)$}\label{secMtone}
%We now prove an important property of the optimal cost that will be used subsequently. 
Consider, once again, the Bellman equations. 
\begin{eqnarray}
J^*_i(q>0; \mathbf{v}; \mathbf{r}=\mathbf{0})&=& q+\lambda \sum_{k \neq i} v_k + \alpha \mathbb{E} J^*_i\left( q-1+A; \right.\nonumber\\
&& \left. v_i=0, \mathbf{v}_{-i}+\mathbf{1}; \mathbf{r}=\mathbf{0} \right),\nonumber\\
J^*_i(q=0; \mathbf{v}; \mathbf{r}=\mathbf{0}) &=& \lambda \sum_{k \neq i} v_k + \alpha \min_{j \neq i} \mathbb{E} J^*_j \left(B(v_j); \right.\nonumber\\
&& \left. v_j=0, \mathbf{v}_{-j}+\mathbf{1}; \mathbf{r}=\mathbf{0} \right),
\label{jStarR0}
\end{eqnarray}
where, we recall that if random variable $C$ is distributed Binomial($v_j,\lambda$), $B(v_j)$ is a random variable whose distribution is the same as that of $(C-1)^++A$ and $A$ is a Bernoulli random variable with $\mathbb{E}A=\lambda$. We shall now prove that the optimal cost function is monotonically increasing in the first coordinate %\footnote{$\parallel\mathbf{x}\parallel:=\lambda \sum_{k=1}^N v_k$}
when $\mathbf{r}=\mathbf{0}$ (this is really the only scenario where decisions might have to be taken).

\lem When $\mathbf{r}=\mathbf{0}$, the optimal $\alpha-$discounted cost function, $J^*$ satisfies
\begin{equation}
J^*(q_1,\mathbf{v},\mathbf{0})\geq J^*(q_2,\mathbf{v},\mathbf{0}),~\forall q_1\geq q_2,~\text{and}~\mathbf{v}\in\mathbb{N}^N.
\end{equation}
\label{lemJstarInc}

\begin{IEEEproof} Let $\mathbb{M}$ denote the set of all functions $g:\mathbb{S}\mapsto\mathbb{R}_+$. 
Considering the R.H.S of equations(\ref{jStarR0}), we define the Dynamic Programming Operator $T:\mathbb{M}\mapsto\mathbb{M}$
as in \cite{bertsekas95mdp-control} as follows. For any $g\in\mathbb{M}$,
%\begin{strip}
\begin{eqnarray*}
%\hspace{-2.00cm}
(Tg)_i(q>0,\mathbf{v},\mathbf{r}) &=& q+\lambda \sum_{k \neq i} v_k + \alpha \mathbb{E} g_i\left(q-1+A;\right. \\
&& \left. v_i=0, \mathbf{v}_{-i}+\mathbf{1}; \mathbf{r}=\mathbf{0}\right),\nonumber\\
(Tg)_i(q=0,\mathbf{v},\mathbf{r}) &=& \lambda \sum_{k \neq i} v_k + \alpha \min_{j \neq i} \mathbb{E} g_j\left( B(v_j);\right.\\
&& \left. v_j=0, \mathbf{v}_{-j}+\mathbf{1}; \mathbf{r}=\mathbf{0} \right),
\end{eqnarray*}
%\end{strip}
Consider any $g\in\mathbb{M}$ that is increasing in $q$, i.e., % and $$
$g(q+k,\mathbf{v},\mathbf{r})\geq g(q,\mathbf{v},\mathbf{r}),~\forall k\geq1$.
When
\begin{itemize}
\item $q\geq1$: 
\begin{eqnarray}
\hspace{-1.50cm} 
(Tg)_i(q+k,\mathbf{v},\mathbf{r}) &=& (q+k)+\lambda \sum_{k \neq i} v_k + \alpha \mathbb{E} g_i\bigg(q+k-1+A;\nonumber\\
&& v_i=0, \mathbf{v}_{-i}+\mathbf{1}; \mathbf{r}=\mathbf{0}\bigg),\nonumber\\
&\geq& q+\lambda \sum_{k \neq i} v_k + \alpha \mathbb{E} g_i\bigg(q-1+A; \nonumber\\ 
&& v_i=0, \mathbf{v}_{-i}+\mathbf{1}; \mathbf{r}=\mathbf{0}\bigg),\nonumber\\
&=& (Tg)_i(q,\mathbf{v},\mathbf{r}).\nonumber
 \end{eqnarray}
 
This shows that the operator $T$ preserves monotonicity for $q\geq1$. Proposition 2.1 of \cite{bertsekas95mdp-control}
shows that value iteration, starting from  any $g\in\mathbb{M}$, converges to the 
fixed point of the operator $T$, namely $J^*$, which also resides in
$\mathbb{M}$. Hence, we see that $J^*$ is monotonically increasing in $q$ when $q\geq1$. 

 \item $q=0$:\small
 \begin{eqnarray}
 \hspace{-1.50cm}
 J^*_i(0; \mathbf{v}; \mathbf{r}=\mathbf{0}) &=& \lambda\sum_{k \neq i} v_k+\alpha\min_{j\neq i}\mathbb{E} J^*_j \bigg( B(v_j);v_j=0, \mathbf{v}_{-j}+\mathbf{1}; \mathbf{r}=\mathbf{0} \bigg),\nonumber\\ 
&\stackrel{\dagger}{=}& \lambda\sum_{k \neq i} v_k+\alpha\min_{j\in I}\mathbb{E} J^*_j \bigg( B(v_j);v_j=0, \mathbf{v}_{-j}+\mathbf{1}; \mathbf{r}=\mathbf{0} \bigg),\nonumber\\ 
&=& \lambda\sum_{k \neq i} v_k\nonumber\\ 
&& +\alpha\min\left\lbrace\min_{j\neq i}\mathbb{E} J^*_j \bigg( B(v_j);v_j=0, \mathbf{v}_{-j}+\mathbf{1}; \mathbf{r}=\mathbf{0} \bigg),\right.\nonumber\\
&& \left. \mathbb{E}J^*_i\bigg(A; v_i=0,\mathbf{v}_{-i}+\mathbf{1}; \mathbf{r}=\mathbf{0}\bigg)\right\rbrace,\nonumber\\
&\leq& 1+\lambda\sum_{k \neq i} v_k+\alpha\mathbb{E}J^*_i\bigg(A; v_i=0,\mathbf{v}_{-i}+\mathbf{1}; \mathbf{r}=\mathbf{0}\bigg),\nonumber\\
&=& J^*_i(1; \mathbf{v}; \mathbf{r}=\mathbf{0}).\nonumber
\end{eqnarray}
\normalsize
In equality $\dagger$ minimization over $I\setminus\{i\}$ is the same as minimization over all $I$, since attempting to schedule queue $i$ would result in slot wastage and idling, which, as Prop.~\ref{propNonIdlingAndExhOpt} shows is suboptimal and hence, will not result in a smaller cost.
\end{itemize}
\end{IEEEproof}
%--------------------------------------------------******
\subsubsection{Proof of Thm.~\ref{thmCycExhDelOpt}}
In this section, we use $\mathbf{r}_{-n}$ instead of $\mathbf{r}$ to explicitly show which queues' $r_j$'s are known to the server. So, $\mathbf{r}_{-n}$ includes $r_j~\forall j\neq n$. Let D be a $Bin(v_n-v_m-1,\lambda)$ random variable. Wherever it becomes necessary to clarify the measure with respect to (w.r.t) which the expectation is being computed, we shall use $\mathbb{E}_\mu$ to explicitly denote expectation w.r.t the probability mass function (p.m.f) $\mu$. Let $\nu_m$ denote the p.m.f of $B(v_m)$, $\nu_n$ that of $B(v_m)$ and $\mu$ that of $D$, i.e., $Bin(v_n-v_m-1,\lambda)$. Also, define $\chi\stackrel{\Delta}{=}\nu_m\ast \mu$ (here, $\ast$ denotes convolution) and $\beta\stackrel{\Delta}{=}\nu_m\times \mu$ (the product measure).
Now, with $v_n-v_m>0$, we can write:

{
\small
\begin{eqnarray}
%\hspace{-2.50cm}
 && \mathbb{E}_{\nu_n} \bigg[ J^*_n\bigg( B(v_n);v_n=0, \mathbf{v}_{-n}+\mathbf{1}; \mathbf{r}_{-n}=\mathbf{0} \bigg)\bigg] \nonumber\\
 && ~~~~~ - \mathbb{E}_{\nu_m} \bigg[ J^*_m \bigg( B(v_m);\mathbf{v}_{-m}+\mathbf{1}; \mathbf{r}_{-m}=\mathbf{0} \bigg) \bigg] \nonumber\\
&&\stackrel{a}{\leq}\mathbb{E}_{\chi} \bigg[ J^*_n \bigg( B(v_m)+D;\mathbf{v}_{-n}+\mathbf{1}; \mathbf{r}_{-n}=\mathbf{0} \bigg)\bigg] \nonumber\\
&& ~~~~~ - \mathbb{E}_{\nu_m} \bigg[ J^*_m \bigg( B(v_m);\mathbf{v}_{-m}+\mathbf{1}; \mathbf{r}_{-m}=0 \bigg) \bigg]\\
&&= \mathbb{E}_{\beta} \bigg[ J^*_n \bigg( B(v_m)+D;\mathbf{v}_{-n}+\mathbf{1}; \mathbf{r}_{-n}=\mathbf{0} \bigg) \nonumber\\
&& ~~~~~ - J^*_m \bigg( B(v_m);\mathbf{v}_{-m}+\mathbf{1}; \mathbf{r}_{-m}=0 \bigg) \bigg]\\
&&\stackrel{\dag}{=} \sum_{r=1}^{v_n-v_m-1} P(r)~\mathbb{E} \bigg[ J^*_n \bigg( B(v_m)+r;\mathbf{v}_{-n}+\mathbf{1}; \mathbf{r}_{-n}=\mathbf{0} \bigg)\nonumber\\
&& ~~~~~ - J^*_m \bigg( B(v_m);\mathbf{v}_{-m}+\mathbf{1}; \mathbf{r}_{-m}=\mathbf{0} \bigg) \bigg| D=r \bigg]  \nonumber\\
&& ~~~~~ + P(0)~\mathbb{E} \bigg[ J^*_n \bigg( B(v_m);\mathbf{v}_{-n}+\mathbf{1}; \mathbf{r}_{-n}=\mathbf{0} \bigg) \nonumber\\
&& ~~~~~ - J^*_m \bigg( B(v_m);\mathbf{v}_{-m}+\mathbf{1}; \mathbf{r}_{-m}=\mathbf{0} \bigg) \bigg| D=0 \bigg]  \nonumber\\
&&\stackrel{b}{\leq} \sum_{r=1}^{v_n-v_m-1} P(r)~\mathbb{E} \bigg[ J^*_n \bigg( B(v_m)+r;\mathbf{v}_{-n}+\mathbf{1}; \mathbf{r}_{-n}=\mathbf{0} \bigg)\nonumber\\
&& ~~~~~ -J^*_m\bigg( B(v_m);\mathbf{v}_{-m}+\mathbf{1}; \mathbf{r}_{-m}=\mathbf{0} \bigg) \bigg| D=r \bigg] 
\label{eqn:outline1}
\end{eqnarray}
\normalsize
Before we explain inequality $a$ we will require the following result.
\lem\label{lemStochasticOrdering}
\begin{equation}
B(v_n)\stackrel{st}{\leq}B(v_m)\ast D,
\end{equation}
}

where, as before, $st$ refers to stochastic ordering and $\ast$ denotes convolution.

\pf Refer Sec.~\ref{AppendixProofPfStochasticOrdering}. \IEEEQED \\

Since nondecreasing functions of stochastically ordered random variables are themselves stochastically ordered, inequality $a$ follows from Lem.~\ref{lemStochasticOrdering} and the monotonicity of $J^*$ in the first coordinate as proved in Lemma.\ref{lemJstarInc} in Sec.~\ref{secMtone}. In equality $\dag$, $P(r)= Pr\{D=r\}$, as mentioned before. 

To explain inequality $b$ we use the following observation about the difference between the state in which our controlled Markov chain (CMC) actually is and what the decision maker knows about that state. 
%Notice that \emph{conditioning} on the event $\{D=0\}$ is merely restricting the set of sample paths to a smaller one. The system doesn't know that this event has occurred. But when the system \emph{knows} that $D=0$, the resulting policy (i.e., decisions) will never result in a mean cost greater than while conditioning on the event $\{D=0\}$. This is more generally true, as the following lemma shows.
Note that 
%we are dealing with the equivalent COMDP of a POMDP, as derived in Sec.~\ref{secCoForm} and 
the lemma that follows is true for CMCs \emph{in general}. Let $\mathcal{X}$ (a countable set) be the state space on which the CMC actually evolves. The MDP is then defined on the set of all probability mass functions, or p.m.fs on $\mathcal{X}$ denoted by $\mathcal{P}(\mathcal{X}):= \{\mathbf{p}:\forall x\in \mathcal{X},~0\leq p(x)\leq1, ~\text{and} \sum_{x\in\mathcal{X}} p(x)=1\}$. Let the action space be denoted by $\mathcal{A}$.  

\begin{lem}\label{lemCondition2Know} 
Let $\mu\in\mathcal{P}(\mathcal{X})$ be a probability mass function, or p.m.f, on $\mathcal{X}$ (this p.m.f \emph{can} be degenerate, i.e., $1$ at some $x\in\mathcal{X}$ and $0$ elsewhere) and $\pi^*$ any stationary optimal policy (assuming one exists). Denote the action space when the state is $\mu$ by $A_\mu\subset\mathcal{A}$ and the state of the CMC at time $0$ by $X_0\in\mathcal{X}$. Finally, let $c:\mathcal{P}(\mathcal{X})\times\mathcal{A}\mapsto \mathbb{R}_+$ denote the single stage cost and $\mu_o=\mu$. Then\footnote{The discount factor is denoted by $\alpha$.},
\begin{eqnarray}
\mathbb{E}^{\pi^*}_{\mu}\bigg[\sum_{k=0}^\infty \alpha^k c(\mu_k,A_k)\bigg| X_0=x\bigg]\geq 
\mathbb{E}^{\pi^*}_{\delta_x}\bigg[\sum_{k=0}^\infty  \alpha^k c(\mu_k,A_k)\bigg],
\label{eqnCondition2Know}
\end{eqnarray}
where $\mu_k\in\mathcal{P}(\mathcal{X}),~A_k\in A_{\mu_k},\text{ for all } k\geq0$.
\end{lem}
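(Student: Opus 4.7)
The plan is to invoke the ``more information is better'' principle: starting from the degenerate belief $\delta_x$ furnishes the controller with strictly more information than starting from the coarser belief $\mu$ and then conditioning on $X_0=x$, so an optimal policy applied from $\delta_x$ must achieve at least as low an expected discounted cost as any policy that uses only the information in $\mu$. Concretely, I will bound the LHS of \eqref{eqnCondition2Know} from below by the expected cost of some feasible policy started at $\delta_x$, and then invoke the optimality of $\pi^*$ at $\delta_x$ to conclude.

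First, I would construct an auxiliary admissible policy $\tilde\pi$ to be applied in the chain initialized at belief $\delta_x$ (equivalently, with true initial state $X_0=x$). The policy $\tilde\pi$ maintains a ``shadow'' belief sequence $\tilde\mu_k$, initialized at $\tilde\mu_0=\mu$, updated by the same Bayesian rule that $\pi^*$ would apply if started at $\mu$, and using the same observable action--outcome history; at each epoch $k$ it plays the action $\pi^*(\tilde\mu_k)$. This $\tilde\pi$ is admissible because it is a deterministic function of the observable history, and it deliberately discards the extra information that $X_0=x$. I would then show that the expected discounted cost of $\tilde\pi$ initialized at $\delta_x$ coincides exactly with the LHS of \eqref{eqnCondition2Know}: at each epoch the action equals $\pi^*(\mu_k)$ on the LHS by construction, the state-transition kernel is the same (being a property of the underlying chain and not of the controller's belief), the cost $c(\tilde\mu_k,A_k)=c(\mu_k,A_k)$ is evaluated at the same belief on both sides, and conditioning on $X_0=x$ on the LHS replicates the dynamics of the $\delta_x$-initialized chain on the right. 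The optimality of $\pi^*$ at initial belief $\delta_x$ then gives $\mathbb{E}^{\pi^*}_{\delta_x}[\,\cdot\,]\le\mathbb{E}^{\tilde\pi}_{\delta_x}[\,\cdot\,]$, and chaining the identities yields \eqref{eqnCondition2Know}.

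The hard part will be the rigorous cost-matching: verifying that the joint law of $(\mu_k,A_k,X_k)_{k\ge0}$ under $\pi^*$ initialized at $\mu$ and conditioned on $X_0=x$ agrees with the joint law of $(\tilde\mu_k,A_k,X_k)_{k\ge0}$ under $\tilde\pi$ initialized at $\delta_x$. Intuitively this is immediate because both the transition kernel and the action-selection rule are unaltered by the relabeling of the initial belief; making it precise, however, requires careful bookkeeping of the filtration generated by the observations and of the Bayesian belief-update map, especially at epoch $0$, where the observation $X_0=x$ is treated asymmetrically by the two sides.
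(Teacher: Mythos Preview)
Your proposal is correct and rests on the same idea as the paper's proof: the action $\pi^*(\mu)$ (more generally, the entire action sequence generated from the $\mu$-based beliefs) is a \emph{feasible} choice for a controller that actually knows $X_0=x$, and optimality of $\pi^*$ at $\delta_x$ then gives the desired inequality. The only difference is presentational: the paper carries this out via a one-step Bellman decomposition---writing $J^*(\delta_x)=c(\delta_x,\pi^*(\delta_x))+\alpha\int J^*(\nu)\,p_{\pi^*(\delta_x)}(\delta_x,d\nu)$, bounding it by the same expression with the minimizing action replaced by $a=\pi^*(\mu)$, and then identifying that bound with the left-hand side---whereas you package all time steps at once through the shadow policy $\tilde\pi$. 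Your version is arguably cleaner, since the paper's one-step identification tacitly relies on the same inequality holding for the continuation (i.e., an implicit recursion), while your construction of $\tilde\pi$ handles every epoch simultaneously. In either formulation, the delicate point you flag---that the first-stage cost on the left-hand side must be read as $c(\delta_x,\pi^*(\mu))$ because conditioning on $X_0=x$ fixes the true state while the controller still acts on $\mu$---is exactly the step the paper labels ``$\#2$''.
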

\begin{IEEEproof}
Intuitively speaking, the L.H.S of \eqref{eqnCondition2Know} is an expectation over the subset of sample paths over which $X_0=x$ and when decision-making entity (in our case, the centralized scheduler) begins with knowledge of some p.m.f $\mu$ over the state. Clearly, this does not mean that the entity \emph{knows} that $X_0=x$. But the expectation on the R.H.S indicates that the scheduler begins knowing that $X_0=x$, hence the $\delta_x$ in the subscript. Obviously, the action taken by the decision-making entity when it has complete knowledge of the state (represented by $\delta_x$) cannot be worse than that taken when there is ambiguity about the state (represented by some p.m.f $\mu$ which might not be $\delta_x$). For details, refer Sec.~\ref{AppendixProofOfCondition2Know} in the Appendix.
\end{IEEEproof}

In short, the lemma says that while transitioning from conditioning to knowledge, the optimal cost cannot increase. 
%Notice that \emph{Conditioning} on the event $\{D=0\}$ is merely restricting the set of sample paths to a smaller one. The system doesn't know that this event 
%has ocurred. But when the system \emph{knows} that 
%$D=0$, the resulting policy (i.e., decisions) will never result in a mean cost greater 
%than while conditioning on the event $\{D=0\}$. 
Getting back to proving Eqn.~\eqref{eqn:outline1}, Lem.~\ref{lemCondition2Know} immediately gives us inequality $c$ below, because $D=0$ essentially means that queue $n$'s distribution is $Bin(v_m,\lambda)$. Inequality $c$ is explained in more detail after \eqref{eqnUsingConditionToKnowForDEqualTo0}.
\begin{small}
\begin{eqnarray}
&&\mathbb{E} \bigg[J^*_m \bigg( B(v_m);\mathbf{v}_{-m}+\mathbf{1}; \mathbf{r}_{-m}=\mathbf{0} \bigg) \bigg| D=0 \bigg]\nonumber\\
&\stackrel{c}{\geq}& \mathbb{E} \bigg[ J^*_m \bigg( B(v_m);\mathbf{v}_{-\{m,n\}}+\mathbf{1}, v_n=v_m+1; \mathbf{r}_{-m}=\mathbf{0} \bigg)\bigg]\nonumber\\
&\stackrel{d}{=}& \mathbb{E} \bigg[ J^*_n \bigg( B(v_m);\mathbf{v}_{-n}+\mathbf{1}; \mathbf{r}_{-n}=\mathbf{0} \bigg)\bigg]\nonumber\\
&\stackrel{e}{=}& \mathbb{E} \bigg[ J^*_n \bigg( B(v_m);\mathbf{v}_{-n}+\mathbf{1}; \mathbf{r}_{-n}=\mathbf{0}\bigg)\bigg| D=0\bigg].
\label{eqnUsingConditionToKnowForDEqualTo0}
\end{eqnarray}
\end{small}
\normalsize
When the scheduler \emph{knows} that both queue $m$ and queue $n$ have the same number of packets, viz, $B(v_m)$, serving either queue gives the same expected cost, since both queues have statistically similar arrival processes. So, equality $d$ follows from the fact that two queues of length Bin$(v_m,\lambda)$, with equal arrival rates are stochastically indistinguishable. The R.H.S of equality $e$ is the expected cost of serving queue $n$, when the server knows that queue $m$ is distributed $Bin(v_m,\lambda)$ and the backlog of queue $m$ is \emph{actually} a $Bin(v_m,\lambda)$ random variable, but that is exactly the R.H.S of equality $d$! Contrast this with the L.H.S of inequality $c$, where the server only knows that queue $n$ is distributed $Bin(v_n,\lambda)$, while conditioning on $D=0$ means that queue $n$ is actually distributed $Bin(v_m,\lambda)$. So,
\small
\begin{eqnarray}
\hspace{-1.00cm}
&& \mathbb{E} \bigg[ J_n^* \bigg( B(v_m)+r;\mathbf{v}_{-n}+1; \mathbf{r}_{-n}=0 \bigg)\nonumber\\
&& ~~~~~ - J_m^* \bigg( B(v_m);\mathbf{v}_{-m}+1; \mathbf{r}_{-m}=0 \bigg) \bigg| D=0 \bigg]\leq0.
\label{eqnDEquals0}
\end{eqnarray}
\normalsize
Now, on to the R.H.S of inequality $b$. Invoking Lem.~\ref{lemCondition2Know} and noting that $r_n=r$ means the scheduler \emph{knows} that queue $n$ has at least $r$ packets, we get
\small
\begin{eqnarray}
&& \mathbb{E} \bigg[ J^*_m\bigg( B(v_m);\mathbf{v}_{-m}+\mathbf{1}; \mathbf{r}_{-m}=\mathbf{0} \bigg) \bigg| D=r \bigg] \nonumber\\
&\geq& \mathbb{E} \bigg[ J^*_m\bigg( B(v_m);\mathbf{v}_{-m}+\mathbf{1}; \mathbf{r}_{\{-m,n\}}=\mathbf{0}, r_n=r \bigg) \bigg] \nonumber \\
&\stackrel{f}{=}&\mathbb{E} \bigg[ J^*_n\bigg( B(v_m)+r; \mathbf{v}_{-n}+\mathbf{1}; \mathbf{r}_{-n}=\mathbf{0}\bigg)\bigg] \nonumber\\
&\stackrel{g}{=}&\mathbb{E} \bigg[ J^*_n\bigg( B(v_m)+r; \mathbf{v}_{-m}+\mathbf{1}; \mathbf{r}_{n}=\mathbf{0}\bigg)\bigg|D=r \bigg]. 
\end{eqnarray}
\normalsize
Equality $f$ is explained as follows. In the space of policies we are considering, a server always first serves all queues with positive $r$ and then returns to the incumbent. So, when it is known that $r_n=r$, the server will first serve the $r$ packets in queue $n$ and then return to queue $m$. But that will result in the same mean cost as serving queue $n$ exhaustively, when it is known that queue $n$'s backlog is distributed $\delta_r\ast Bin(v_m,\lambda)$. Equality $g$ uses the same logic as Equality $e$ in \eqref{eqnUsingConditionToKnowForDEqualTo0}.
%let $D_{m,n}$ be a random variable that is the sum of independent random variables $D_n$ and $D_1$, that are respectively $Bin(v_n,\lambda)$ and $Bin(1,\lambda)$ distributed.
%Clearly, the distribution of $D_{m,n}$ is $Bin(v_n,\lambda)*Bin(1,\lambda)$. Also, let $D_m$ be a $Bin(v_n,\lambda)$ distributed random variable. Considering (\ref{eqn:outline1}), for $B(v_m)=q,~q\geq 0$ and any $r>0$, we have:
%
%\footnotesize
%\begin{align}
%J^*\bigg(q; (\mathbf{v}_{-\{m,n\}}+\mathbf{1}), D_{m,n}\bigg| D_n=D_m+r\bigg)
%\stackrel{c}{\geq} J^*\bigg(q; (\mathbf{v}_{-\{m,n\}}+\mathbf{1}), D_{m,n}+r\bigg)\nonumber\\
%\stackrel{d}{=} J^*\bigg(q+r; (\mathbf{v}_{-\{m,n\}}+\mathbf{1}), D_{m,n}\bigg)
%\label{eqn:outline2}
%\end{align}
%\normalsize
%Inequality $c$ is obtained by transitioning once again from conditioning to knowledge. 
%Specifically, by using the information about $B(v_m)$ and $r_n=r$, one can only improve the cost.
%Equality $d$ is explained as follows. In order to solve 
%$J^* \bigg( q;\mathbf{v}_{-\{m,n\}}+1, v_n'=v_m+1; \mathbf{r}_{-\{m,n\}}, r_n=r \bigg)$, any policy in the 
%chosen class will first serve $(q+r)$ packets and the choose some queue other than the current queue $m$. 
%But that same policy will be optimal for the problem $J^* \bigg( q+r;\mathbf{v}_{-n}+1; \mathbf{r}_{-n}=0 \bigg)$.  
\small
\begin{align}
\mathbb{E} \bigg[ J_n^* \bigg( B(v_m)+r;\mathbf{v}_{-n}+1; \mathbf{r}_{-n}=0 \bigg) - J_m^* \bigg( B(v_m);\mathbf{v}_{-m}+1; \mathbf{r}_{-m}=0 \bigg) \bigg| D=r \bigg]\leq0, ~\forall 1\leq r\leq v_n-v_m.
\label{eqnDGreaterThan0}
\end{align}
\normalsize
From \eqref{eqnDEquals0} and \eqref{eqnDGreaterThan0}, the policy structure now follows. 
\IEEEQED
%--------------------------------------------------*******************************************---------------------------------------------------
\subsection{Proof of Lem.~\ref{lemStochasticOrdering}}\label{AppendixProofPfStochasticOrdering}
Given $k\geq1$ if $Z$ is a random variable with p.m.f Bin($k,\lambda$) we define $\tilde{B}(k)$ as the p.m.f of $(Z-1)^+$. Let $X_k~(k\geq1),$ and $Y_n~(n\geq1)$ be random variables with distributions $\tilde{B}(k)$, and Bin($n,\lambda$), respectively. In what follows, we will first show that 
\begin{equation}
X_k+Y_n\stackrel{st}{\geq}X_{k+1}+Y_{n-1},~\forall k\geq1.
\label{eqnStochasticOrderPf1}
\end{equation}
Since the L.H.S of Eqn.~\ref{eqnStochasticOrderPf1}, $X_k+Y_n\stackrel{d}{=}X_k+Y_1+Y_{n-1}$ (where $\stackrel{d}{=}$ denotes equality of distribution), it is sufficient to prove that 
\begin{equation}
X_k+Y_1\stackrel{st}{\geq}X_{k+1}.
\label{eqnStochasticOrderPfEquivalent}
\end{equation}
We show this by proving that 
\begin{equation}
P\{X_k+Y_1\leq l\}\leq P\{X_{k+1}\leq l-1\},~\forall l\in\{0,1,\dots,k\}.
\end{equation}
We consider several cases
\begin{itemize}
\item $l=0$
\begin{align}
P\{X_{k+1}\leq 0\}=(1-\lambda)^{k+1}+(k+1)(1-\lambda)^k\lambda
%=(1-\lambda)\left(1-\lambda+(k+1)\lambda\right)\nonumber\\
=(1-\lambda)^k\left(1+k\lambda\right),\nonumber
\end{align}
while
\begin{eqnarray}
P\{X_k+Y_1\leq 0\} &=& \left((1-\lambda)^{k}+k(1-\lambda)^{k-1}\lambda\right)(1-\lambda)\nonumber\\
%=(1-\lambda)\left(1-\lambda+(k+1)\lambda\right)\nonumber\\
&=& (1-\lambda)^k\left(1+(k-1)\lambda\right)\nonumber\\ 
&<& P\{X_{k+1}\leq 0\}.\nonumber
\end{eqnarray}
\item $1\leq l\leq k-1$
\begin{align}
P\{X_{k+1}=l\}={k+1\choose l+1}  \lambda^{l+1}(1-\lambda)^{k-l},\nonumber
\end{align}
while
\begin{eqnarray}
P\{X_k+Y_1=l\} &=& {k\choose l+1}\lambda^{(l+1)}(1-\lambda)^{(k-l-1)}(1-\lambda)\nonumber\\
&& +{k\choose l}\lambda^l(1-\lambda)^{(k-l)}\lambda\nonumber\\
&=& \left({k\choose l+1}+{k\choose l}\right)\lambda^{l+1}(1-\lambda)^{(k-l)}\nonumber\\
&=& P\{X_{k+1}=l\}.\nonumber
\end{eqnarray}
Hence, in this case, $P\{X_{k+1}\leq l\}\geq P\{X_{k}+Y_1\leq l\}$.
\item $l=k$
\begin{eqnarray}
P\{X_{k+1}=l\}&=&\lambda^{k+1},~\text{and}\hspace{3.0cm}\nonumber\\
P\{X_k+Y_1=l\}&=&P\{X_k=k\}P\{Y_1=1\}\nonumber\\ 
&=& \lambda^{k+1}\nonumber\\
&=& P\{X_{k+1}=l\}.\hspace{2.6cm}\nonumber\\
\end{eqnarray}
Once again, $P\{X_{k+1}\leq k\}\geq P\{X_{k}+Y_1\leq k\}$.
\end{itemize}
This proves Eqn.~\ref{eqnStochasticOrderPfEquivalent} and hence, Eqn.~\ref{eqnStochasticOrderPf1}. By the same token, it is at once obvious that $X_k+Y_n\stackrel{st}{\geq}X_{k+1}+Y_{n-1}  \stackrel{st}{\geq}X_{k+2}+Y_{n-2} \dots \stackrel{st}{\geq}X_{k+n}$, which means that,
\begin{eqnarray}
\tilde{B}(k+n) &\stackrel{st}{\leq}& \tilde{B}(k)\ast Bin(n)\nonumber\\
\Rightarrow \tilde{B}(k+n)\ast Bin(1,\lambda) &\stackrel{st}{\leq}& \tilde{B}(k)\ast Bin(n)\ast Bin(1,\lambda)\nonumber\\
\Rightarrow B(k+n) &\stackrel{st}{\leq}& B(k)\ast\text{B}(n).
\label{eqnStochasticOrderingProofPMF}
\end{eqnarray}
Lem.~\ref{lemStochasticOrdering} follows by setting $v_m=k$ and $v_n=k+n$, in \eqref{eqnStochasticOrderingProofPMF}.
\IEEEQED
%--------------------------------------------------*******************************************---------------------------------------------------
\subsection{Proof of Lem.~\ref{lemCondition2Know}}\label{AppendixProofOfCondition2Know}
For every $a\in\mathcal{A}$ denote by $p_{a}(z,\nu)$ the stochastic transition kernel\footnote{Refer Chapter 4 of \cite{hernandez89adaptive-control-markov} for details.} from $\mathcal{P}(\mathcal{X})\times \mathcal{A}\mapsto \mathcal{P}(\mathcal{X})$. Such a kernel exists by Lem.~3.2 and Eqn.~3.7 in Chapter 4 of \cite{hernandez89adaptive-control-markov}. Now, observe that the R.H.S of \ref{eqnCondition2Know}
%\begin{align}
%\mathbb{E}^{\pi^*}_{x}\bigg[\sum_{k=0}^\infty \alpha^k c(\mu_k,A_k)\bigg] = c(\delta_x,\pi^*(\delta_x))+\alpha\sum_{\nu\in \mathcal{P}(\chi)}p_{\pi^*(\delta_x)}(\delta_x,\nu)\bigg[\mathbb{E}^{\pi^*}_{\nu} \sum_{k=1}^\infty \alpha^{k-1} c(\mu_k,A_k)\bigg]\\
%\stackrel{\#1}{\leq} c(\delta_x,a)+\alpha\sum_{\nu\in \mathcal{P}(\chi)}p_{a}(\delta_x,\nu)\bigg[\mathbb{E}^{\pi^*}_{\nu} \sum_{k=1}^\infty \alpha^{k-1} c(\mu_k,A_k)\bigg],~\forall a\in A_\mu.
%\end{align}
\small
%\begin{strip}
%\begin{figure}[tb]
\begin{eqnarray}
\hspace{-1.00cm}
\mathbb{E}^{\pi^*}_{\delta_x}\bigg[\sum_{k=0}^\infty \alpha^k c(\mu_k,A_k)\bigg] &=& c(\delta_x,\pi^*(\delta_x))\nonumber\\
&& +\alpha\int_{\mathcal{P}(\mathcal{X})}\left(\bigg[\mathbb{E}^{\pi^*}_{\nu} \sum_{k=1}^\infty \alpha^{k-1} c(\mu_k,A_k)\bigg]\right.\nonumber\\ 
&& ~~~~~\left.\times p_{\pi^*(\delta_x)}(\delta_x,d\nu)\right)\nonumber\\
&\stackrel{\#1}{\leq}& c(\delta_x,a)\nonumber\\
&& + \alpha\int_{\mathcal{P}(\mathcal{X})}\bigg[\mathbb{E}^{\pi^*}_{\nu} \sum_{k=1}^\infty \alpha^{k-1}c(\mu_k,A_k)\bigg]p_{a}(\delta_x,d\nu),\nonumber
\end{eqnarray}
%\end{figure}
%\end{strip}
\normalsize
for all $a\in A_\mu.$ Inequality $\#1$ follows from the fact that $\pi^*$ is an optimal policy. Hence, the L.H.S of \ref{eqnCondition2Know},
\small
\begin{eqnarray}
\hspace{-1.00cm}
\mathbb{E}^{\pi^*}_{\mu}\bigg[\sum_{k=0}^\infty \alpha^k c(x_k,A_k) \mid  X_0 = x\bigg] &\stackrel{\#2}{=}& c(\delta_x,\pi^*(\mu))\nonumber\\
&&+\alpha\int_{\mathcal{P}(\chi)} \left(\bigg[\mathbb{E}^{\pi^*}_{\nu} \sum_{k=1}^\infty \alpha^{k-1} c(\mu_k,A_k)\bigg]\right.\nonumber\\
&& ~~~~~ \left.\times p_{\pi^*(\mu)}(\delta_x,d\nu)\right)\nonumber\\
&\geq& \mathbb{E}^{\pi^*}_{\delta_x}\bigg[\sum_{k=0}^\infty \alpha^k c(\mu_k,A_k)\bigg].\nonumber
\end{eqnarray}
\normalsize
Note that in the first term of the R.H.S of equality $\#2$, because of the conditioning on $\{X_0=x\}$ the CMC is in state $\delta_x$ which becomes the first argument of $c(\cdot,\cdot)$ and the controller or decision maker is only told $\mu$, which becomes the second argument of $c(\cdot,\cdot)$.
\IEEEQED
%--------------------------------------------------*******************************************---------------------------------------------------
\subsection{Solving the Time-Average Cost MDP}\label{AppendixSolvingTimeAverageCostMDP}
We use the technique described in \cite{sennott1989average-cost-mdps} to show that the SLQ policy is
optimal for the long term time-averaged cost criterion as well. We consider a sequence of discount
factors $\{\alpha_n\}\uparrow1$ and the corresponding discounted optimal policies. We begin with the
following result:

\lem\label{lemDiscSubseq}(\cite{sennott1989average-cost-mdps}) Let $\{\alpha_n\}$ be a sequence of discount
factors increasing to 1, and let $\pi^*_n$
be the associated sequence of discounted optimal stationary policies. There exists a
subsequence $\{\alpha_{n_k}\}$ and a stationary policy $\pi^*$ that is a limit point of $\pi^*_{n_k}$.

The proof of Lemma \ref{lemDiscSubseq} can be found in \cite{sennott1989average-cost-mdps}. In our
case, the assertion is easily seen to be true, since the SLQ policy is optimal for every 
$\alpha\in(0,1)$. To invoke the theorem in \cite{sennott1989average-cost-mdps}, we need to first show 
that the conditions assumed therein are true. Towards this end let us first limit our MDP's state space to 
$S'=\mathbb{N}\times\mathbb{N}^N\times\mathbb{N}^{N-1}\times I-S_0$, where, 
$S_0=\{s\in\mathbb{N}\times\mathbb{N}^N\times\mathbb{N}^{N-1}\times I: v_i=v_j=0~\text{for some}~j\neq i~\text{or}~r_j\geq1~\text{for some}~j\in I\}$.
The set $S_0$ contains states where multiple coordinates of $\mathbf{V}(t)$ are zero and states with 
$r_j\geq1$. The former class of states is not allowed, since there can be at most \emph{one} transmitter
in any time slot. The latter are all transient states.
$S_0$ obviously contains the set of transient states induced by the policies under consideration, and needs
to be removed so that the resulting controlled DTMC, defined on $S'$, can be irreducible.

Now, the theorem in \cite{sennott1989average-cost-mdps} makes assumptions, which \emph{in our case} can be 
restated as follows. For
any $\alpha\in(0,1)$, and state $(q,\mathbf{V},\mathbf{r},i)\in S'$,
\begin{itemize}
 \item A: $J^*_i((q,\mathbf{V},\mathbf{r})<\infty$.
 \item B: Let\footnote{We define $\hat{\mathbf{e}}_k$ as the vector with 0's at all coordinates
 except the $k^{th}$, which is equal to $1$.} $\mathbf{V}_0=\mathbf{1}-\hat{\mathbf{e}}_1$ and define 
 $h_{i,\alpha}(q,\mathbf{V},\mathbf{r}):=J^*_i((q,\mathbf{V},\mathbf{r})-J^*_1((0,\mathbf{V}_0,\mathbf{0})$.
 There exists $N>0$, s.t.
 \begin{equation}
 h_{i,\alpha}(q,\mathbf{V},\mathbf{r})\geq-N~\forall(q,\mathbf{V},\mathbf{r},i)\in S'.  
 \end{equation}
\item C: For any given pair of states $(s,t)\in S'\times S'$, and action $a\in I$, let $P_{st}(a)$ 
denote the transition probability from state $s$ to state $t$ under action $a$ in the MDP (for further details, read Sec.\ref{AppendixSecFormulatingTheMDP}). There exists $M_i(q,\mathbf{V},\mathbf{r})\geq0$, s.t.
 \begin{equation}
 h_{i,\alpha}(q,\mathbf{V},\mathbf{r})\leq M_i(q,\mathbf{V},\mathbf{r})~\forall(q,\mathbf{V},\mathbf{r},i)\in S',~\text{and}~\alpha\in(0,1),
 \end{equation}
 and, denoting by $M_t$ the R.H.S in the above equation for state $t\in S'$,
 \begin{equation}
  \sum_{t\in S'}M_tP_{st}(a)<\infty,~\forall s\in S'~\text{and}~a\in I.
 \end{equation}

\end{itemize}

\thm\label{thmDis2Avg} If the assumptions A, B and C hold, $\pi^*$ is optimal for the Averge Cost problem as well.

Theorem \ref{thmDis2Avg} is proved in \cite{sennott1989average-cost-mdps}. To see why A is true, 
define $s(0)=\left[q,\mathbf{V},\mathbf{r},i\right]$ and consider
\small
\begin{eqnarray}
\hspace{-0.20cm}
J^*_i(s(0)) &=& \min_{\pi}\mathbb{E}_{s(0)}^\pi\sum_{n=0}^\infty\alpha^n\left[Q_{u_{n-1}}(n)+\lambda\sum_{j\neq u_{n-1}}V_j(n)+\sum_{i\in I}R_i(n)\right]\nonumber\\
&\stackrel{e}{\leq}& \min_{\pi}\mathbb{E}_{s(0)}^\pi\sum_{n=0}^\infty\alpha^n \left[ \left(Q_{u_{n-1}}(0)+n\right)+\lambda\left(n+\sum_{j\neq u_{n-1}}V_j(0)\right)\right.\nonumber\\
&& ~~~~~~~~~~~~~ \left.+\left(n+\sum_{i\in I}R_i(0)\right)\right]\nonumber\\
&\stackrel{f}{\leq}& \min_{\pi}\mathbb{E}_{s(0)}^\pi\sum_{n=0}^\infty\alpha^n \left[ \left(Q(0)+n\right)+\lambda\left(n+\sum_{j\neq u_{n-1}}V_j(0)\right)\right.\nonumber\\
&& ~~~~~~~~~~~~~ \left.+\left(n+\sum_{i\in I}R_i(0)\right)\right]\nonumber\\
&\stackrel{\star}{<}& \infty,\nonumber
\end{eqnarray}
\normalsize
where inequality $e$ follows from the fact that over $n$ slots, the backlog of queue $u_{n-1}$ cannot
have increased to more than $n$ packets over its initial backlog, i.e., $Q_{u_{n-1}}(0)+n$. In 
inequality $f$, $Q(0)=\max_{k\in I}Q_k(0)$. Finally, inequality $\star$ uses the fact that
\begin{equation}
 \sum_{n=0}^\infty n\alpha^n<\infty,~\forall\alpha\in(0,1).
\end{equation}

In Lemma.\ref{lemJstarInc}, we have already proven that $J^*$ is increasing in its first coordinate.
Now, keeping the incumbent the same, suppose one of the coordinates of $\mathbf{V}$ is increased,
\small
\begin{eqnarray}
\hspace{-0.50cm}
 J^*_i(0;\mathbf{V};\mathbf{0})&=&\lambda\sum_{k\neq i}V_k+\alpha\min_{k\neq i}\mathbb{E}J^*_j\left(B(V_k);V_k=0,\mathbf{V}_{-k}+\mathbf{1};\mathbf{0}\right)\hspace{0.0cm}\nonumber\\
&\leq& \lambda (V_l+1)+\lambda\sum_{k\neq i,l}V_k+\alpha\min\left\lbrace\min_{k\neq i,l}\mathbb{E}J^*_j\left(B(V_k);\right.\right.\nonumber\\
&& \left. V_k=0,\mathbf{V}_{-k}+\mathbf{1}+\hat{\mathbf{e}}_l;\mathbf{0}\right),\nonumber\\
&& \left. J^*_l\left(B(V_l+1);V_l=0,\mathbf{V}_{-k}+\mathbf{1};\mathbf{0}\right)\right\rbrace\nonumber\\
&=& J^*_i(0;\mathbf{V}+\hat{\mathbf{e}}_l;\mathbf{0}).\hspace{0.0cm}
\label{eqnIncV}
\end{eqnarray}
\normalsize
A similar analysis proves that $J^*$ is increasing in the coordinates of $\mathbf{r}$ also. Finally, 
starting with a different incumbent, since
\begin{equation}
J^*_l(0;\mathbf{V};\mathbf{0})=\lambda\sum_{k\neq l}V_k+\alpha\min_{k\neq l}\mathbb{E}J^*_j\left(B(V_k);V_k=0,\mathbf{V}_{-k}+\mathbf{1};\mathbf{0}\right),\\
\label{eqnVl}
\end{equation}
we see that if $V_l$ in the first equation of (\ref{eqnIncV}) is equal to $V_i$ in (\ref{eqnVl}), then
$J^*_l(0;\mathbf{V};\mathbf{0})=J^*_i(0;\mathbf{V};\mathbf{0})$. This is achieved by 
noting that all since arrivals are statistically identical and independent one can simply route the 
packets of queue $l$ to queue $i$ and vice versa in every sample path. This doesn't change the statistics
of the system in any way and proves that
\begin{eqnarray}
 J^*_i((q,\mathbf{V},\mathbf{r})-J^*_1((0,\mathbf{V}_0,\mathbf{0})=h_{i,\alpha}(q,\mathbf{V},\mathbf{r})\geq0~\forall i\in I.
\end{eqnarray}

Proving assumption C is nontrivial, and Prop.~5 (pp.629) in \cite{sennott1989average-cost-mdps} gives a
sufficient condition for C to hold. 
\lem Assume that the MDP has a stationary policy $f$ that induces an irreducible, positive recurrent 
Markov chain on the state space of the problem, viz, $S'$. Let $p_f(q,\mathbf{V},\mathbf{r},i)$ denote 
the invariant distribution of the chain and let $c_f(q,\mathbf{V},\mathbf{r},i)$ denote the 
non-negative cost in state $\mathbf{s}=\left[q;\mathbf{V};\mathbf{r};i\right]$. If 
\begin{equation}
 \sum_{\mathbf{s}}p_f(q,\mathbf{V},\mathbf{r},i)c_f(q,\mathbf{V},\mathbf{r},i)=\mathbb{E}\left[\tilde{Q}+\lambda\sum_{k\neq \tilde{i}}\tilde{V}_k+\sum_{l=1}^N\tilde{r}_l\right]<\infty
 \label{eqnStatCostFinit}
\end{equation}
then assumption C is true. Here, $\tilde{Q},\tilde{\mathbf{V}},\tilde{\mathbf{r}}~\text{and}~\tilde{i}$
are the random variables with joint distribution $p_f$.
Notice, however, from \eqref{eqnCOMDPssCost} that for all policies under consideration that are also
stabilizing,
\begin{eqnarray}
 \mathbb{E}\left[\tilde{Q}+\lambda\sum_{k\neq \tilde{i}}\tilde{V}_k+\sum_{l=1}^N\tilde{r}_l\right]
 &=& \mathbb{E}\left[\mathbb{E}\left(\sum_{k=1}^NQ_k\bigg|\tilde{Q},\tilde{\mathbf{V}}\right)\right]\nonumber\\
 &=& \mathbb{E}\sum_{k=1}^NQ_k,\hspace{2.1cm}
\end{eqnarray}
where $[Q_1,\dots,Q_N]$ is the stationary queue-length vector of the system. We have already shown that $\pi^*$ is stabilizing and know that the policy discussed in Sec.~\ref{secClassOfStableStationaryMarkovPolicies}, $\pi'$, is also stabilizing from the discussion therein. So, we set $f\equiv \pi'$, since $\pi'$ is stationary and Markov, and observe that for any $\lambda\in[0,\frac{1}{N})$ queueing delay for such policies is finite. Using Little's Theorem, we see that under $\pi'$,
$\mathbb{E}\sum_{k=1}^NQ_k=\sum_{k=1}^N\mathbb{E}Q_k<\infty$. Since $\pi'$ is stationary, this proves the claim and 
establishes the fact that $\pi^*$ is optimal for the average cost criterion as well.
\IEEEQED

%--------------------------------------------------*******************************************---------------------------------------------------

% \subsection{Characterizing $\pi^*$}\label{smallAppendixCharacterizingPiStar}
% Let us now characterize the policy further. As an example, consider the scenario with 3 queues
% in the system and  suppose that at the beginning of queue 1's service, $\mathbf{V}=\left[0~V_2~V_3\right]$, 
% with $V_2>V_3$. At the end of queue 1's service, the vector changes to $\mathbf{V}'=\left[0~V'_2~V'_3\right]$,
% but the \textit{ordering} is still preserved, i.e., $V'_2>V'_3$. Hence, queue 2 is chosen for service next.
% But at the end of queue 2's service, once again, we see that $V''_3>V''_1$, as it was at the beginning of queue
% 2's service, and queue 3 is chosen, after which queue 1 is chosen and this repeats. We see that $\pi^*$ 
% induces a \emph{\textbf{cyclic service}} system. Therefore, $\pi^*$ is exactly \emph{cyclic exhaustive service.}
%%---------------------------------------------------------------------------------

% \subsection{Proof of Prop.~\ref{prop:classTOStatMarkovPolicies}}\label{appendix:proofOfTOStatMarkovPolicies}

\subsection{The Class of Throughput Optimal Stationary Markov Policies}\label{secClassOfStableStationaryMarkovPolicies}
\label{appendix:proofOfTOStatMarkovPolicies}
We would next like to see if the class of stabilizing %\footnote{\enquote{Stability} as defined in Sec.\ref{secStablePiStar}.}
% \footnote{ A policy will be called
% stabilizing w.r.t this system if the resultant queue-length DTMCs of all queues are positive recurrent
% for all arrival rates in $[0,\frac{1}{N})$.}
stationary Markov policies on these state and action spaces is a singleton, and if not, whether all policies
in this class have the same cost. Obviously, both these situations would result in $\pi^*$ being a trivial
solution to the MDP problem and hence, need to be discussed in detail.
We now prove that that is, in
fact, \textit{not} the case.
% by providing a very simple counterexample. 
%\begin{tcolorbox}
\begin{prop}\label{prop:classTOStatMarkovPolicies}
For the information structure in Sec.~\ref{secRefineModel}, the class of throughput optimal, Markov scheduling policies is not a singleton.
\end{prop}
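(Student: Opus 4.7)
The plan is to exhibit two distinct stationary Markov policies, each throughput optimal on $\Lambda$, thereby showing the class is not a singleton.

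The starting point is Thm.~\ref{thmCycExhStable}, which establishes that cyclic exhaustive service stabilizes every $\boldsymbol{\lambda}\in\Lambda$. The Foster--Lyapunov drift calculation used there (over $N$ successive busy periods) invokes only the fact that each queue is served exhaustively exactly once per super-cycle, together with $\sum_i\lambda_i<1$; it does not depend on the specific cyclic order in which the queues are visited. Consequently, for every cyclic permutation $\sigma$ of $I$, the policy $\pi_\sigma$ that serves the incumbent exhaustively and, upon its emptying, switches to the queue following it under $\sigma$, is a throughput optimal stationary Markov policy on the state space of Sec.~\ref{secOptSlq}.

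For $N\geq 3$, pick two inequivalent cyclic permutations $\sigma$ and $\tau$ of $I$ together with some $j\in I$ for which the $\sigma$-successor and the $\tau$-successor of $j$ are different. In any state with incumbent $u_{t-1}=j$, $Q_j(t)=0$, and $\mathbf{r}(t)=\mathbf{0}$, the two policies prescribe different next actions, so $\pi_\sigma\neq\pi_\tau$ as maps on the state space. Hence $\{\pi_\sigma,\pi_\tau\}$ is contained in the class of throughput optimal stationary Markov policies, which is therefore not a singleton. For the edge case $N=2$, where only one cyclic order exists, a secondary construction suffices: in the symmetric regime one may contrast cyclic exhaustive service with a $1$-limited round-robin policy, which also stabilizes every $\boldsymbol{\lambda}\in\Lambda$ but is clearly a different function of the state.

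The main obstacle, in my view, is the preliminary step of confirming that the drift argument of Thm.~\ref{thmCycExhStable} is genuinely invariant under relabeling of queues. This should be immediate from the symmetric role played by each queue in both the sum-of-queue-lengths Lyapunov function and the independent Bernoulli arrival assumption, but it merits an explicit check before the permutation-based construction above can be invoked.
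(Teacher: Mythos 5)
Your proof is correct, but it takes a genuinely different route from the paper's. The paper perturbs the single policy $\pi^*$ on an arbitrary finite subset $S'\subset\mathbb{N}^N$ of the $\mathbf{V}$-space, and then notes that since $\pi^*$ satisfies a Foster--Lyapunov drift criterion with some exceptional set $A$ and Lyapunov function $y$, the perturbed policy $\pi'$ satisfies the same criterion with exceptional set $A\cup S'$ and the same $y$. This one move produces infinitely many distinct stabilizing policies in $\Pi_g\cap\Pi_e$ uniformly for every $N\geq 2$, and—importantly for the narrative of Sec.~\ref{secClassOfStableStationaryMarkovPolicies}—the paper then exhibits (Fig.~\ref{figSubOptPis}) perturbations $\pi_1,\pi_2,\pi_3$ with genuinely different mean delays, which is the real point: the throughput-optimal class is rich enough that picking an optimum is nontrivial. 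Your construction instead relies on the relabeling invariance of Thm.~\ref{thmCycExhStable}, which is indeed easy to check (the Lyapunov function $\sum_j Q_j$ and the drift computed over one super-cycle are symmetric in the queue index), and produces the $(N-1)!$ fixed-cycle policies $\pi_\sigma$. This is arguably more elementary, but it buys less: for one, it needs a patch at $N=2$, and the $1$-limited round-robin you invoke there is only throughput optimal in the symmetric regime $\lambda_1=\lambda_2<1/2$, not over the general $\Lambda$ of Eqn.~\eqref{eqnBasicCapacityRegion}—you flag this, and it is acceptable since the proposition sits inside the symmetric-system MDP, but it should be stated as a restriction rather than a parenthetical. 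More substantively, in a symmetric system all your $\pi_\sigma$ have identical cost by symmetry, so while they establish the literal claim that the class is not a singleton, they do not exhibit the cost heterogeneity that motivates formulating and solving the MDP in the first place; the paper's finite-perturbation construction does both. Your proof as written is a valid proof of the proposition, and the relabeling observation it rests on is sound.
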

%\end{tcolorbox}

\begin{proof}
The proof proceeds by creating throughput optimal policies from $\pi^*$ by taking suboptimal actions  on finite set of states. Foster-Lyapunov theory can be used to show that the resulting policies are still throughput optimal. Hence, the set of stabilizing policies is not a singleton. A detailed proof is provided in Sec.~\ref{appendix:proofOfTOStatMarkovPolicies} in the Appendix.
\end{proof}

First, recall that $\pi^*$ itself is stabilizing as proved in Sec.~\ref{secStablePiStar}. Next, consider a policy $\pi'\in\Pi_g\cap\Pi_e$, starting with a given vector $\mathbf{V}(0)=\mathbf{V}$ ($\pi^*$ starts with the same vector). When the in-service queue empties, $\pi'$ chooses $m:=\argmax{1\leq j\leq N}V_j$ except when $\mathbf{V}\in S'\subset\mathbb{N}^N$ where $S'$ is some \textit{finite} set. On $S'$, $\pi'$ makes a suboptimal choice and deviates 
from $\pi^*$, e.g., choosing $\argmax{j\neq m}V_j$.
%some queue other than the (the second largest queue, for example).

Clearly, this policy is stationary and Markov. We need to establish that it is also throughput optimal. Recall that $\mathbf{Q}(t)\in \mathbb{N}^N$ denotes the vector of backlogs at time $t$. Now, since $\pi^*$ itself is stabilizing,
%\cite{foss1996stability}, \cite{takagi-kleinrock85tutorial-polling-systems},
by the Foster-Lyapunov criterion \cite[Thm~2.2.3]{fayolle-etal95constructive-theory-markov-chains},
there exists\footnote{The Foster-Lyapunov criterion is both necessary and sufficient.} a non-negative function, $y:\mathbb{N}^N\rightarrow\mathbb{R}_+$, %on the state space,
$\epsilon>0$ and a finite set $A\subset\mathbb{N}^N$, such that,
\begin{equation*}
    \mathbb{E}\left[y(\mathbf{Q}(t+1))-y(\mathbf{Q}(t))| \mathbf{Q}(t)=\mathbf{q}\right]  
    \begin{cases} 
    \leq -\epsilon, \forall~\mathbf{q}\notin A, ~\text{and} \\
    <\infty, \forall~\mathbf{q}\in A.
\end{cases}
\end{equation*}
% \begin{eqnarray}
%  \mathbb{E}\left[y(\mathbf{Q}(t+1))-y(\mathbf{Q}(t))| \mathbf{Q}(t)=\mathbf{q}\right]&\leq& -\epsilon, \forall~\mathbf{q}\notin A, ~\text{and}\nonumber\\
%  \mathbb{E}\left[y(\mathbf{Q}(t+1))-y(\mathbf{Q}(t))| \mathbf{Q}(t)=\mathbf{q}\right]&<&\infty, \forall~\mathbf{q}\in A.
% \end{eqnarray}
By expanding $A$ to $A'=A\cup S'$, and using the same Lyapunov function $y$, the Foster-Lyapunov criterion
shows that $\pi'$ is also stabilizing. Thus, considering the number of such subsets $S$ that can be 
chosen, we see that $\Pi_g\cap\Pi_e$ contains \emph{infinitely many} stabilizing stationary Markov policies.
\IEEEQED

Moreover, these policies do not have the same cost! Suppose we define the set $S'$ above to be the set
$\{0,1,\cdots,k\}^N$. This means $\pi'$ will make sub-optimal decisions whenever 
$0\leq V_i(t)\leq k,\forall i\in I$. Fig.\ref{figSubOptPis} shows the difference in cost between $\pi^*$ and 
three other policies $\pi_1, \pi_2$ and $\pi_3$ corresponding to $k=40,~50$ and $80$. Note that
all three policies are in $\Pi_g\cap\Pi_e$ and are stabilizing, stationary and Markovian.

\begin{figure}[tb]
\centering%[height=5.25cm, width=17cm]
\includegraphics[height=5.5cm, width=8.5cm]{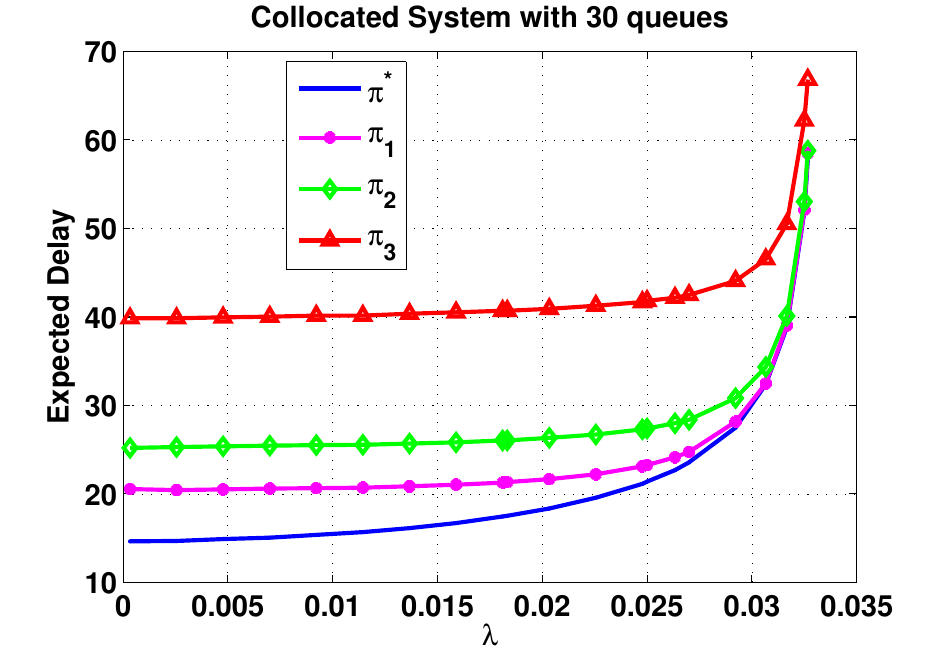}
\caption{$\pi_1, \pi_2$ and $\pi_3$ are suboptimal stabilizing stationary Markov policies, corresponding to $k=40,50\text{ and }80$ respectively. This figure clearly shows that the class of stabilizing stationary Markov policies is \emph{not} a singleton.}
%$\pi^*$ is not trivial and that 
\label{figSubOptPis}
\end{figure}
%--------------------------------------------------*******************************************---------------------------------------------------

%\subsection{Proof of Prop.~\ref{propKlimitedStable}}

%%% Everything Related to Fairness Here -------------------------
\subsection{Alleviating Short term Unfairness: The K-Longest Expected Queue (KLEQ) Policy}\label{secAlleviateShortTermUnfairness}
%\label{AppendixProofOfPropKlimitedStable}
%$<<$\emph{The entire section in \cite{mohan-etal17hybrid-protocols} on short term unfairness and using Jain's Fairness Index to quantitatively study it, will be relocated here.}$>>$
%
While exhaustive service is delay optimal, it obviously exhibits short-term unfairness, especially at high arrival rates with the incumbent being served
for long periods while other queues remain starved. While this might not be a cause of too much concern at low loads, near saturation, busy periods
tend to be very long and starving queues might not be a good idea, especially if the application requires timely delivery of packets.

One of the ways of quantifying unfairness is by defining a ``Fairness Index'', $\mathcal{J}(t)$, as in \cite{jain84fairness-resource-allocation}. In our model, suppose we define $x_i(t),1\leq i\leq N$ to be the fraction of time-slots in $\{0,1,\cdots,t\}$ during which queue $i$ was scheduled. Then, 
the fairness index at time $t$ is defined as 
\begin{equation}
 \mathcal{J}(t)=\frac{\left(\sum_{i=1}^Nx_i(t)\right)^2}{N\sum_{i=1}^Nx_i(t)^2}.
 \label{eqnJfi}
\end{equation}
%$\mathcal{J}(t)=\frac{1}{N}$ when $x_i(t)=1$ for some $i$ and 
%$x_j(t)=0~\forall j\neq i$. 
%The Cauchy-Schwarz inequality states that for vectors 
% $\mathbf{u},\mathbf{v}\in\mathbb{R}^N$, 
% \begin{equation}
% \langle\mathbf{u},\mathbf{v}\rangle^2\leq\langle\mathbf{u},\mathbf{u}\rangle\cdot\langle\mathbf{v},\mathbf{v}\rangle,
% \end{equation}
% where $\langle\cdot,\cdot\rangle$ denotes inner product.
% Setting $\mathbf{u}=\left[x_1(t),\cdots,x_N(t)\right]$ and $\mathbf{v}=[1,1,\cdots 1]$, we see that
% $\mathcal{J}(t)\leq1,~\forall t$. When $x_j(t)=\frac{1}{N}~\forall j\in I$, $\mathcal{J}(t)=1$.
% So, 
It can be shown, using the Cauchy-Schwarz inequality, that $\frac{1}{N}\leq\mathcal{J}(t)\leq1,~\forall t$ and the closer $\mathcal{J}(t)$ is to $1$, the fairer the allocation/algorithm \cite{jain84fairness-resource-allocation}.
 
% So, $\frac{1}{N}\leq\mathcal{J}(t)\leq1,~\forall t$ and the closer $\mathcal{J}(t)$ is to $1$, 
% the fairer the allocation/algorithm.
Fig.~\ref{figJfiTdExPoll} compares the fairness indices of TDMA and Cyclic Exhaustive Service, near saturation (for a symmetric system with $N=30$ queues, $\lambda<\frac{1}{N}\approx 0.033$). The computation is over a period of $2\times10^4$ slots (we wait for the queue length processes to attain stationarity and \emph{only then} record $\mathcal{J}(t)$ for $2\times10^4$ slots) by which time, the indices attain their maximum value of $1$.
We see that although cyclic exhaustive service is delay optimal, it does exhibit short-term unfairness.
%We also see, from Fig.~\ref{figJfiQzEzZ}, that with QZMAC it takes 1000 slots for fair service between the queues; on the other hand, ZMAC provides fair service over about a 100 slots.
The curious form of the curves corresponding to TDMA in figures \ref{figJfiTdExPoll} and \ref{figKLEQimprovesFairness} respectively is due to the cyclic nature of the underlying TDMA schedule. %, and is explained in greater detail in Sec.~III.D. of \cite{mohan-etal17hybrid-protocols}.
Consider the TDMA fairnes curve. Since every queue gets scheduled exactly once every $N$(= 30) slots, $\mathcal{J}(t)$ shows the corresponding periodicity (notice that the swings have a period of 30). So, over slots 31 through 60 for example, $\mathcal{J}(t)$ first decreases, since the first few queues get scheduled for a second time (while the others have been scheduled only once over slots 1 through 30) and then increases as this disparity reduces. 
%it takes quite a lot of time for all queues to receive comparable service. 
% One can clearly notice the effect of exhaustive service especially in the period
% between $7$ and $70$ slots when $\pi^*$'s index rises when a new queue obtains channel-access and 
% then keeps dropping while this queue is served exhaustively.
%----------------------------------------------------------------------------------------------
%\lipsum

% \begin{figure*}[ht]
% \hspace{-0.5cm}%\centering%
% \begin{minipage}{0.45\textwidth}
% \centering
% \includegraphics[height=6.05cm, width=8cm]{figures_for_full_proof/jfi_tdma_exhPoll_lam_003qs_30.eps}
% %{figures_for_full_proof/jfi_tdma_exh_poll_lam_003qs_30.eps}
% \caption{Comparison of fairness indices of TDMA and Cyclic Polling with Exhaustive service.}
% \label{figJfiTdExPoll}
% \end{minipage}
% \hfill
% \begin{minipage}{0.45\textwidth}
% \hspace{-1.0cm}%\centering%
% \includegraphics[height=6.05cm, width=8cm]{figures_for_full_proof/jfi_z_ex_qz_lam_00317qs_30.eps}
% %{figures_for_full_proof/jfi_tdma_exh_poll_lam_00317qs_30.eps}
% \caption{Comparison of fairness indices of QZMAC, EZMAC and ZMAC.}
% \label{figJfiQzEzZ}
% \end{minipage}
% \end{figure*}
\begin{figure*}[tbh]
\begin{subfigure}{.5\textwidth}
\centering%[height=5.25cm, width=17cm]
\includegraphics[height=4.5cm, width=7.5cm]{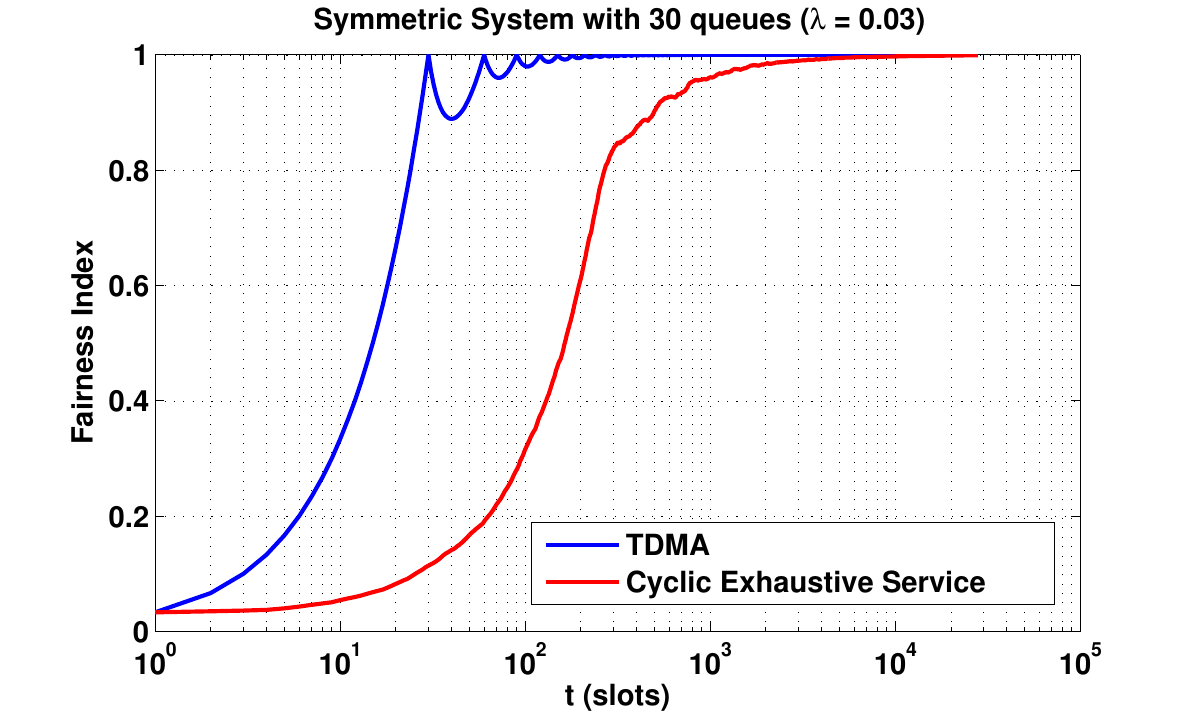}
%{figures_for_full_proof/jfi_tdma_exh_poll_lam_003qs_30.eps}
\caption{Comparing short term unfairness with TDMA and the cyclic exhaustive service policy, i.e., $\pi^*$. TDMA clearly outperforms $\pi^*$ in this regard.}
\label{figJfiTdExPoll}
%\end{figure}
\end{subfigure}
%\begin{figure}[tb]
%\begin{subfigure}{.5\textwidth}
%\centering%[height=5.25cm, width=17cm]
%\includegraphics[height=4.8cm, width=8cm]{figures_for_full_proof/jfi_z_ex_qz_lam_00317qs_30_july21_2016.eps}
%%{figures_for_full_proof/jfi_tdma_exh_poll_lam_00317qs_30.eps}
%\caption{QZMAC, EZMAC and ZMAC.}
%\label{figJfiQzEzZ}
%\end{subfigure}
%\caption{Comparison of fairness indices of the various protocols discussed. JFI stands for \emph{Jain's Fairness Index.}}
%\end{figure*}
\begin{subfigure}{.5\textwidth}%[tb]
\centering%[height=5.25cm, width=17cm]
\includegraphics[height=4.5cm, width=7.5cm]{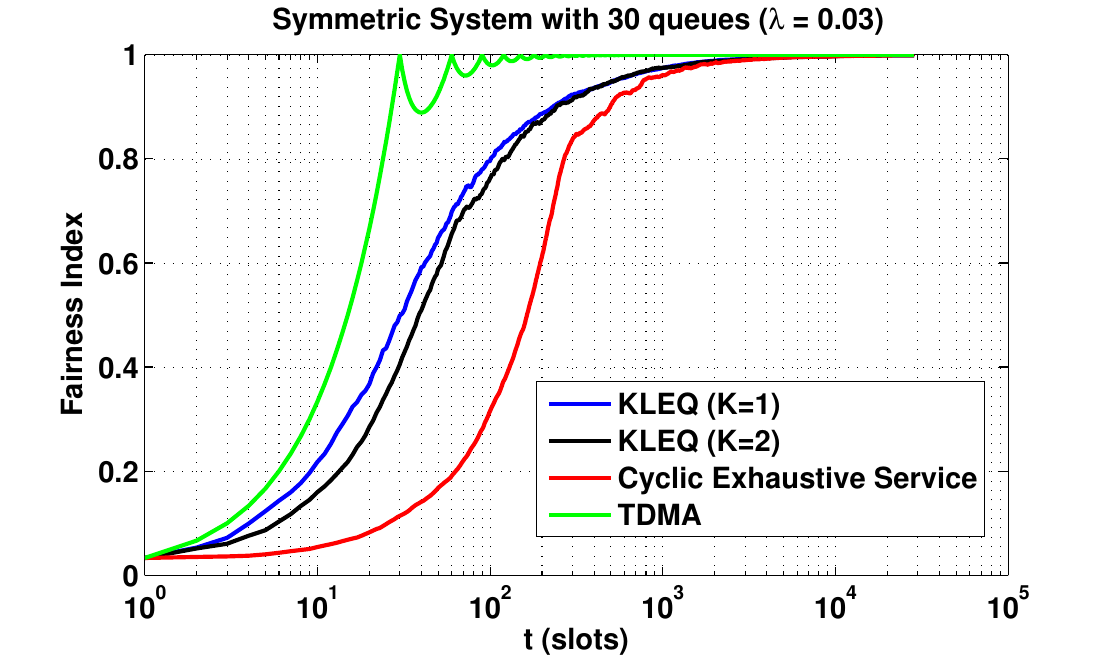}
%fig_leq_1_2_cyclicExhaustive_tdma_lam_0032_30qs_june23_2017.eps}
\caption{Improvement in fairness with the KLEQ policy.}
\label{figKLEQimprovesFairness}
\end{subfigure} 
\caption{Alleviating short term unfairness through the KLEQ policy.}%{Improvement in \emph{Jain's Fairness Index.}
\end{figure*}
%----------------------------------------------------------------------------------------------

%It is, hence, clear that short term unfairness in these protocols is a very real problem and in this section, we propose a technique to alleviate it.
Towards alleviating this problem, consider the space $\Pi_K\subset\Pi$ of policies that do not permit a queue to transmit for more than $K\geq1$ slots consecutively. 
%Clearly, from Prop.~\ref{propNonIdlingAndExhOpt}, we know that as long as a queue is non empty, it should be allowed to transmit. This observation is important, since several suboptimal (w.r.t. delay) do exist in this space; for example, policies that allow queues to transmit only for $K-2$ slots and then idle for the other $2$. This obviously means that within this space, policies that allow queues to transmit for $K$ slots or until they are empty (whichever happens earlier) are optimal. 
Such policies are called $K$ limited service policies. 
It is important to note that not all policies in $\Pi_K$ are throughput optimal. One prominent example is cyclic service, where, after serving queue $i$ (for at most $K$ slots), the server visits queue $i~$mod~$N+1$; refer \cite{takagi88queuing-analysis-polling-models} for more details about the capacity regions  of this policy and some of its variants. This means that the problem of deciding to which queue the system (server) should switch once a queue is empty or served for $K$ consecutive slots, \emph{is non trivial}. Here, we propose a policy that is both fair \emph{and} throughput optimal. 
%Let us first examine the DTMC describing the state of the system. 
%
%%in the COMDP's state vector,
%Hitherto, the state space was of the form\footnote{In this section we will ignore the vector $\mathbf{r}(t)$ since it was required in Sec.~3.2.2 of \cite{mohan-etal17hybrid-protocols} only for the proof of the fact that SLQ policies solve the COMDP, to go through.} $s(t)=[Q_{u_{t-1}}, \mathbf{V}(t), u_{t-1}]$ at \emph{the beginning of}(i.e., at $t^+$) an arbitrary time instant $t$, with $u_{t-1}$ being the incumbent in slot $t-1$, $Q_{u_{t-1}}$ its backlog at the beginning of that slot, and $\mathbf{V}(t)$ the number of slots since each queue was last scheduled (refer Sec.~3.2.2 of \cite{mohan-etal17hybrid-protocols} for details about the COMDP formulation). This was sufficient to describe the state, because exhaustive service meant we always left queues empty. Since we are no longer working with exhaustive policies, we will need to expand the state space.
% 
%With a slight abuse of notation, we redefine the \emph{new} system state vector as $s(t)=[\mathbf{Q}(t-\mathbf{V}(t)), \mathbf{V}(t)]$, where $\mathbf{Q}(t-\mathbf{V}(t))=[Q_1(t-V_1(t)),\cdots,Q_N(t-V_N(t))]$ and $V_l(t)=0$ \emph{iff} $l$ is the queue currently in service. 
The proposed policy $\pi_K\in\Pi_K$ serves a queue for a maximum of $K\geq1$ slots, and thereafter chooses to serve $\argmax{1\leq i\leq N} Q_i(t-V_i(t))+\lambda_i V_i(t)$. 
%Note that if the incumbent queue is, say $i$, then $V_i(t)=0$. Also, since exactly one queue is scheduled for transmission in a slot, precisely \emph{one} of the $N$ coordinates of $V(t)=0$ for any $t$. This fact, along with the (delayed) queue lengths and the $V(t)$ vector in $s(t)$ are enough to describe the transition probabilities for the next state $s(t+1)$. Therefore, the process $\{s(t),~t\geq0\}$ is a Discrete Time Markov Chain (DTMC). With this set up in place, we state the result.
%
%We will now prove that this DTMC, under $\pi_K$, is positive recurrent. One needs to note that under this policy, a queue (say $j$) may occasionally get chosen for consecutive sets of $K$ slots, but this only can only happen if $Q_j(t-V_j(t))+\lambda_j V_j(t)$ happens to be extremely large. With a view towards stability, however, it is necessary to serve such queues and keep the system backlog from blowing up. For example, if K=1, any protocol that disallows  queues from being scheduled over consecutive slots, cannot stabilize arrival rates larger than $1/2.$

\prop\label{propKlimitedStable} For all $K\geq1$ and $\boldsymbol{\lambda}\in\mathbb{R}^N_+$ such that $\sum_{i=1}^N\lambda_i<1$, $\pi_K$ is stabilizing, i.e., \emph{throughput optimal}.

\begin{IEEEproof}
 The proof involves bounding the $K$ slot conditional expected drift of a \emph{novel} Lyapunov function. 
Let $\mathbb{Z}_+$ denote the non negative integers. The proof uses a simple mean drift argument with a novel Lyapunov function. Define $L:\mathbb{Z}_+^{2N}\mapsto\mathbb{R}_+$ as
\begin{equation}
L(s(t)):=\sum_{j=1}^N\bigg[Q_j(t-V_j(t))+\lambda_jV_j(t)\bigg]^2+V_j(t)\lambda_j(1-\lambda_j).
\label{eqnKlimitedLyapunovDef}
\end{equation}
We will prove that a policy that is clearly suboptimal (refer Fig.~\ref{figLeqAndPseudoLeqDelays}), again, w.r.t. delay when compared to $\pi_K$ (meaning it idles more than $\pi_K$) is also stabilizing. This new policy, called \enquote{\color{blue}Gated K Limited service}(GKLS) policy and denoted by $\xi_K$, serves only those packets that were in the chosen queue at the beginning of service, for a maximum of $K$ slots. Also, once a queue is chosen for service, the server stays at it (idling, if necessary), for $K$ slots at the end of which $\argmax{1\leq i\leq N} \left(Q_i(t-V_i(t))+\lambda_i V_i(t)\right)$ is chosen for service for the next $K$ slots. 

\begin{figure}[tb]
\centering%[height=5.25cm, width=17cm]
\includegraphics[height=5.5cm, width=8.0cm]{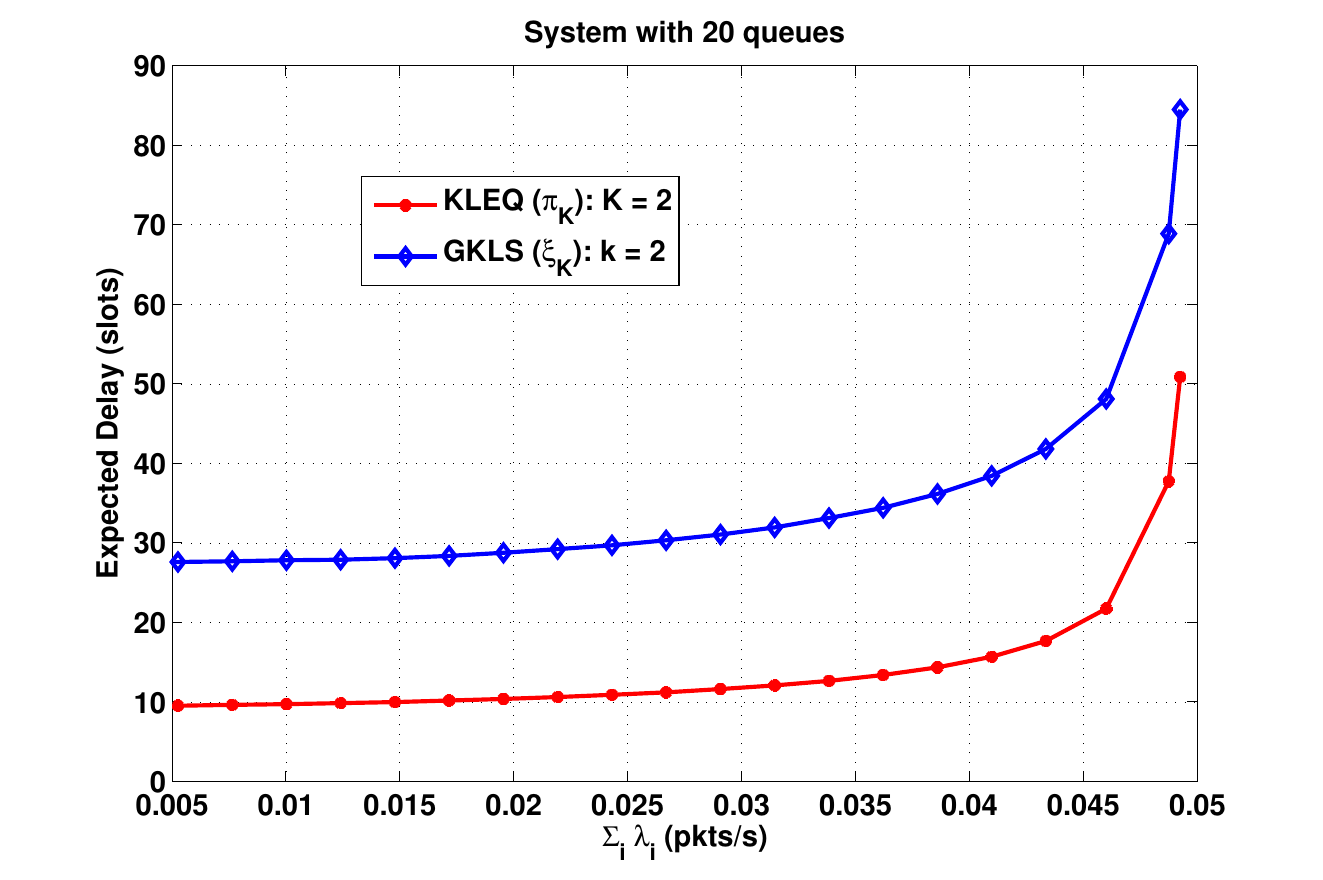}
\caption{Illustrating the suboptimality of the Gated K Limited service policy ($\xi_K$).}
\label{figLeqAndPseudoLeqDelays}
\end{figure}

For example, suppose $K=5$ and queue $3$ has been chosen for service. At the beginning of service, suppose $Q_3=2$ and over the course of the $5$ slots, it receives $4$ more packets. Then $\xi_K$ serves only the $2$ that were in Queue $3$ at the beginning of service, does not serve any of the $4$ new packets and \emph{idles} for the remaining $K-2(=3)$ slots. This is clearly suboptimal compared to $\pi_K$ which would have served $5$ packets and left Queue~$3$ with just $1$ packet at the end of $K=5$ slots.

%the backlogs of various queues are as follows
Suppose at time $t$, $\xi_K$ switches to queue $i$. For all other queues, after K slots, $V_l(t+K)=V_l(t)+K$, and $Q_l(t+K-V_l(t+K))=Q_l(t-V_l(t))$. Therefore, at the end of $K$ slots, the new state is $s(t+K)=[Q_1(t-V_1(t)),\dots, Q_{i-1}(t-V_{i-1}(t)), Q_i(t+K), Q_{i+1}(t-V_{i+1}(t)),\dots,Q_N(t-V_N(t)),V_1(t)+K,\dots,V_{i-1}(t)+K,0,V_{i+1}(t)+K,\dots,V_N(t)+K]$.
%Under $\xi_K$, at the end of $K$ slots, $L(s(t+K))$  is computed as follows. 
Invoking the linearity of conditional expectation, computing the expected conditional drift can be split into contributions by the queue chosen for service at time $t$, i.e., queue $i$, and that by the other queues. For simplicity we drop the time indices and denote $V_j(t)$ by $V_j$ for the remainder of the proof. 
%\begin{small}
\begin{eqnarray}
&& \mathbb{E}\left[L(s(t+K))-L(s(t))|s(t)\right]\nonumber\\
&=& \mathbb{E}\bigg[Q^2_i(t+K)-\left(Q_i(t-V_i)+ \lambda_iV_i\right)^2- V_i\lambda_i(1-\lambda_i)\bigg|s(t)\bigg]+ \nonumber\\
&& \mathbb{E}\bigg[\sum_{j\neq i}\left(Q_j(t-V_j)+ \lambda_j\left(V_j+K\right)\right)^2+ \left(V_j+K\right)\lambda_j(1-\lambda_j)-\nonumber\\
&& \sum_{j\neq i}\left(\left(Q_j(t-V_j)+ \lambda_jV_j\right)^2+V_j\lambda_j(1-\lambda_j)\right)\bigg|s(t)\bigg]
\label{eqnKlimitedConditionalDriftContributions}
\end{eqnarray}
%\end{small}
The contribution to $L(s(t+K))$ of all queues $j$ other than $i$ is 
%\begin{small}
\begin{eqnarray}
&& \sum_{j\neq i}\left(Q_j(t-V_j)+\lambda_j\left(V_j+K\right)\right)^2+\left(V_j+K\right)\lambda_j(1-\lambda_j)\nonumber\\
&=& \sum_{j\neq i}\left(Q_j(t-V_j)+\lambda_jV_j\right)^2+2K\left(Q_j(t-V_j)+\lambda_jV_j\right)\lambda_j\nonumber\\
&& +K^2\lambda_j^2+ V_j\lambda_j(1-\lambda_j)+K\lambda_j(1-\lambda_j).
\label{eqnKlimitedLOftPlusKeveryoneElse}
\end{eqnarray}
%\end{small}
Using \eqref{eqnKlimitedLOftPlusKeveryoneElse}, the second term on the R.H.S of \eqref{eqnKlimitedConditionalDriftContributions} becomes
%\begin{small}
\begin{eqnarray}
&& \mathbb{E}\bigg[\sum_{j\neq i}\left(Q_j(t-V_j)+\lambda_j\left(V_j+K\right)\right)^2+\left(V_j+K\right)\lambda_j(1-\lambda_j)\nonumber\\
&& -\sum_{j\neq i}\left(Q_j(t-V_j)+\lambda_jV_j\right)^2+V_j\lambda_j(1-\lambda_j)\bigg|s(t)\bigg]\nonumber\\
&=& \mathbb{E}\bigg[2K\left(Q_j(t-V_j)+\lambda_jV_j\right)\lambda_j+K^2\lambda_j^2+K\lambda_j(1-\lambda_j)\bigg|s(t)\bigg] \hspace{0.0cm}\nonumber\\
&=& 2K\left(Q_j(t-V_j)+\lambda_jV_j\right)\lambda_j+K^2\lambda_j^2+K\lambda_j(1-\lambda_j),\hspace{0.0cm}
\label{eqnKlimitedContributionOfOthers}
\end{eqnarray}
%\end{small}
where the last equality follows since the term inside the expectation is a function of $s(t)$. Proceeding to queue $i$, we see that
%\begin{small}
\begin{eqnarray}
\hspace{-1.00cm}
Q^2_i(t+K) &\stackrel{(*a)}{=}& \left(\left(Q_i(t-V_i)+A_i[t-V_i+1,t]-K\right)^++A_i[t+1,t+K]\right)^2,\nonumber\\
&\stackrel{(*b)}{\leq}& \left(Q_i(t-V_i)+A_i[t-V_i+1,t]\right)^2+K^2+\left(A_i[t+1,t+K]\right)^2\nonumber\\
&& -2\left(Q_i(t-V_i)+A_i[t-V_i,t]\right) (K-A_i[t+1,t+K]),
\label{eqnKLimitedQiSquare}
\end{eqnarray}
%\end{small}
where, in equality $(*a)$, one first needs to note that queue $i$ has never been served since $t-V_i$. Also, $A_i[x,y]$ denotes the cumulative arrivals to queue $i$ over the time period $[x,y]$, and $A_i[x,x]$ is the arrival over the single slot $x$. Specifically, when the arrival process is Bernoulli, $A_i[x,y]$ has the distribution Binomial$(y-x+1,\lambda_i)$. Also, $(*b)$ follows from the fact that for any three nonnegative numbers $x,y~\text{and}~z$, $((x-y)^++z)^2\leq x^2+y^2+z^2-2x(y-z)$. 
In what follows, we use the fact that if a random variable $X\sim$Binomial($m,p$), then $\mathbb{E}X^2=mp(1-p)+m^2p^2$.
%\begin{align}
%Q^2_i(t+K)=\left(\left(Q_i(t-V_i)+A_i[t-V_i,t]-K\right)^++A_i[t+1,t+K]\right)^2,
%\end{align}
\begin{eqnarray}
\hspace{-0.40cm}
\mathbb{E}\bigg[Q^2_i(t+K)\bigg|s(t)\bigg] &\leq& \mathbb{E}\bigg[\left(Q_i(t-V_i)+A_i[t-V_i+1,t]\right)^2+ K^2\nonumber\\
&& + \left(A_i[t+1,t+K]\right)^2- \nonumber\\
&& 2\bigg(\left(Q_i(t-V_i)+A_i[t-V_i+1,t]\right)\nonumber\\
&& \times \left(K-A_i[t+1,t+K]\right)\bigg)\bigg|s(t)\bigg]\nonumber\\
&\stackrel{(*c)}{=}& Q^2_i(t-V_i)+ 2\lambda_iV_iQ_i(t-V_i)+ V_i\lambda_i(1-\lambda_i)\nonumber\\
&& +V_i^2\lambda_i^2+K^2+ K\lambda_i(1-\lambda_i)+ K^2\lambda_i^2\nonumber\\
&& -2\left(Q_i(t-V_i)+\lambda_iV_i\right)K\left(1-\lambda_i\right).\nonumber
\end{eqnarray}
In $(*c)$, one needs to note that the arrivals that enter queue $i$ \emph{after} the last time queue $i$ was served, viz., $A_i[t-V_i,t+K]$, are independent of $Q_i(t-V_i)$. With this, the contribution of queue $i$ to the expected drift (the first term on the R.H.S of \eqref{eqnKlimitedConditionalDriftContributions}) is
\small
\begin{eqnarray}
&& \mathbb{E}\bigg[Q^2_i(t+K)-\left(Q_i(t-V_i)+\lambda_iV_i\right)^2-V_i\lambda_i(1-\lambda_i)\bigg|s(t)\bigg]\nonumber\\
&& \leq K^2+ K\lambda_i(1-\lambda_i)+ K^2\lambda_i^2\nonumber\\
&& +2K\lambda_i\left(Q_i(t-V_i)+\lambda_iV_i\right)-2K\left(Q_i(t-V_i)+\lambda_iV_i\right)
\label{eqnKlimitedContributionOfQueuei}
\end{eqnarray}
\normalsize
We are now in a position to compute the expected drift. Define $\epsilon:=\left(1-\sum_j\lambda_j\right)$, strictly positive, by assumption. Using \eqref{eqnKlimitedContributionOfOthers} and \eqref{eqnKlimitedContributionOfQueuei} in \eqref{eqnKlimitedConditionalDriftContributions}, we get
\small
\begin{eqnarray}
\hspace{-0.25cm}
\mathbb{E}\left[L(s(t+K))-L(s(t))|s(t)\right]&\leq &\left(K(K-1)\sum_{j=1}^N\lambda_j^2+ K\sum_{j=1}^N\lambda_j+ K^2\right)\nonumber\\
&&-2K\left(Q_i(t-V_i)+ \lambda_iV_i\right)\left(1-\sum_{l=1}^N\lambda_l\right)\nonumber\\
&<& -\epsilon,\nonumber
\end{eqnarray} 
\normalsize
whenever
\small
\begin{equation}
\hspace{-0.25cm}
max_{i}\left(Q_i(t-V_i)+ \lambda_iV_i\right)>\frac{\left(K(K-1)\sum_{j=1}^N\lambda_j^2+ K\sum_{j=1}^N\lambda_j+ K^2\right)}{2K\epsilon}+ \frac{1}{2K},\nonumber%\hspace{9.25cm}
\end{equation}
\normalsize
Invoking the Foster Lyapunov theorem \cite{fayolle-etal95constructive-theory-markov-chains} we see that the DTMC $\{s(Kt),t\geq0\}$ is positive recurrent. To show that $s(t)$ is positive recurrent as well, we invoke the following lemma.

\lem\label{lemPRecurrenceOfOriginalDTMC} Let $\{X(t),t\geq0\}$ be an aperiodic and irreducible DTMC over a countable state space. Define $Y(t):=X(Kt),\forall t\geq0,$ for some positive integer $K$. If $Y(t)$ is positive recurrent, then so is $X(t)$.

\pf Given any $j\in\mathbb{N}$, suppose the mean recurrence time of $j$ in $Y(t)$ is denoted by $\nu^{(K)}_j$, and in $X(t)$ by $\nu_j$. Clearly, starting from any state $j$, every time $X(Kt)$ hits $j$, so does $X(t)$. This means that $\nu^{(K)}_j\geq\nu_j$ and since $\nu^{(K)}_j<\infty$, so is $\nu_j$, showing that $X(t)$ is also positive recurrent. \IEEEQED
%Refer Sec.~\ref{secProofOflemPRecurrenceOfOriginalDTMC} in the Appendix. \qed

Thus, $s(t)$ is also positive recurrent, and the proof of Prop.~\ref{propKlimitedStable} concludes. %\IEEEQED
\end{IEEEproof}
In Sec.~\ref{secFadingAndChannelErrors} in the Appendix, we show how the above scheduler can be modified to handle Channel Errors and Fading. As can be seen from Fig.~\ref{figKLEQimprovesFairness}, the KLEQ algorithm indeed improves fairness. In fact, with $k=1$ and $k=2$, KLEQ is able to achieve a fairness of 50$\%$, about 100 and 90 slots before $\pi^*$ respectively. However, even more dramatic improvement in fairness is not possible, since stability pulls the protocol towards repeatedly scheduling the same queue over consecutive sets of $K$ slots.
%\begin{figure}[tb]
%\centering%[height=5.25cm, width=17cm]
%\includegraphics[height=4.9cm, width=8.0cm]{figures_for_full_proof/fig_leq_1_2_cyclicExhaustive_tdma_lam_0032_30qs.eps}
%\caption{Improvement in fairness with the KLEQ policy. JFI stands for \emph{Jain's Fairness Index.}}
%\label{figKLEQimprovesFairness}
%\end{figure} 
%Further details about the policy, including \emph{distributed implementation,} are given in Sec.~III.D. of \cite{mohan-etal17hybrid-protocols}.
%To implement this policy in a distributed manner, it is imperative to broadcast the 
% \subsection{Channel Errors and Fading}\label{secFadingAndChannelErrors}
%--------------------------------------------------*******************************************---------------------------------------------------

\subsection{The ZMAC Protocol: Details}\label{appendix:ZMACprotocolDetails}

%% --------------------------  ZMAC begins -------------------------- 
 
%   %\begin{figure*}
  \begin{algorithm}[ tbh ]
  
    \floatname{algorithm}{Protocol}
      %\twocolumn
      \allowdisplaybreaks
      	\caption{{\bf ZMAC}} \label{protocol:ZMAC}%{\bf No CCA Errors}
      	\begin{algorithmic}[1]
      %	\hspace{-2.00cm}
      	\State \textbf{Input:} $N,T_c${\color{blue}\Comment{QZMAC runs in parallel at every Queue~$j,~j\in[N]$ in the network}}
       	
      	\State \textbf{Init:} $t\leftarrow 0,\tau\leftarrow1,PU\leftarrow1$%,SU\leftarrow2$  
      	{\color{blue}\Comment{$\tau$ keeps track of the minislot number}}
      %	\State \textbf{Init:} $\forall~j\in[N], V_k(t)\leftarrow k$ 
      	\While{$t\geq0$}
       	
      	$\tau\leftarrow1$ {\color{blue}\Comment{Keeps track of minislot number.}}
      	 \If{$PU==j$ \&\& $Q_{j}(t)>0$}{\color{blue}\Comment{If you are the PU}}
      	    \State Tx pkt.
      	 %   \For{$k=1:N$} 
      	 %       \State $V_k(t)\leftarrow \left(V_k(t)+1\right)\mathbb{I}_{\{j\neq PU\}}+0\times\mathbb{I}_{\{k = PU\}}$
      	 %   \EndFor
      	    \State GOTO \ref{ZMAC:incrementPU}
      	    \Else {\color{blue}\Comment{PU empty $\Rightarrow$ network enters contention}}\label{zmac:ContentionBegins}
      	    \label{zmacProtocol:incont}
      	        \State {\bf Over} $\tau\in\{2,3,\cdots,T_c+1\}$ {\bf do}
      	        \If{$Q_j(t)>0$}
      	            \State $U_j\sim Unif\left(\lbrace2,3,\cdots,T_c+1\rbrace\right)$
      	            \State Wait for $U_j-1$ minislots. 
      	            \State If \texttt{CCA == SUCCESS}, Tx Pkt.
      	            \State GOTO \ref{ZMAC:incrementPU}
      	            %\State If no collisions detected, $SU\leftarrow j$
      	        \EndIf\label{zmac:ContentionEnds}
      	            %\EndFor
      	  \EndIf
      	  \State\label{ZMAC:incrementPU} $PU\leftarrow PU\text{ mod }N + 1$ \label{ZMAC:NextPUCyclicTDMA}
      	 %\EndIf 
      	 \State\label{ZMAC:incrementTime} $t\leftarrow t+1$ {\color{blue}\Comment{Next time slot}}
      	\EndWhile
      	\end{algorithmic} 
 \end{algorithm}
%  %\end{figure*}
 %% --------------------------  ZMAC ends --------------------------
%---------------------------------------------------------------
%--------------------------------------------------*******************************************---------------------------------------------------

\subsection{The EZMAC Protocol}
\label{appendix:secEZMACProtocol}
Recall the description of the ZMAC protocol given in Sec.~\ref{secProtocolDesign}. To motivate the development of EZMAC, consider ZMAC's contention mechanism \emph{in isolation}. If the contention winner can retain rights to access the channel for exactly one slot (which is what happens in ZMAC), each time the PU is empty, contention is needed to decide the next winner. We shall, henceforth, refer to the contention mechanism that allows a contention winner to transmit only in one slot as \enquote{ALOHA.} That ALOHA wastes a lot of slots in contention is clearly demonstrated in the large difference in delay in Fig.~\ref{figAlRal10}. This problem is solved as follows.

% \begin{figure}[ht]
% \centering%[height=5.25cm, width=17cm]
% \includegraphics[height=9.5cm, width=17cm]{figures_for_full_proof/exhaustive_srv_smpl_pth.pdf}
% \caption{A sample path illustrating how $\gamma$ is derived from $\pi$.}
% \label{sys2_omega}
% \end{figure}

%\indent Before presenting the actual EZMAC protocol 
  
The proposed protocol EZMAC differs from ZMAC in the contention resolution (CR) portion. Here, once the winner of a contention is determined, it is allowed to transmit in all slots without a PU until it empties. It is assumed that at the end of the winner's transmission, the packet contains an end of transmission message (a bit in the header, perhaps) that can be decoded by the other users. Following the terminology in the literature (\cite{hofri1983analysis-raloha}, for example), we refer to the contention mechanism that allows a contention winner to transmit until it is empty, \enquote{RALOHA,} which stands for ALOHA \emph{with Reservation}. 

 \begin{figure}[tb]
 \centering%[height=5.25cm, width=17cm]
 \includegraphics[height=5.05cm, width=7.2cm]{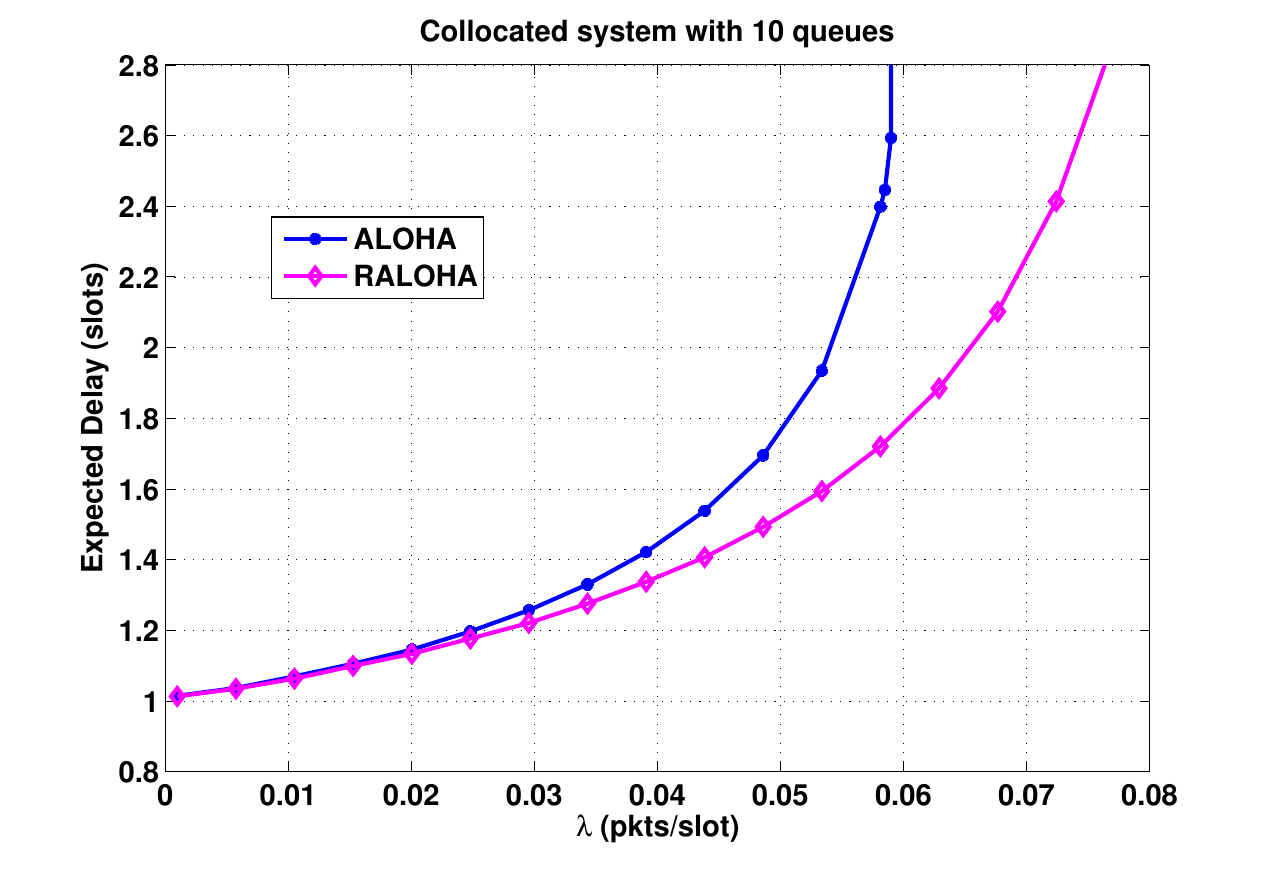}
 \caption{Mean delay with ALOHA and RALOHA in a \emph{symmetric} system with 10 queues.}
 \label{figAlRal10}
 \end{figure}
% 
% \begin{figure}[ht]
% \centering%[height=5.25cm, width=17cm]
% \includegraphics[height=5.05cm, width=6.9cm]{figures_for_full_proof/both_30qs.eps}
% \caption{System with 30 queues.}%Expected delay with ALOHA and RALOHA in a \emph{symmetric} s
% \label{figAlRal30}
% \end{figure}
% 
%\lipsum
%----------------------------------------------------------------------------------------------

If the current slot is not the user's TDMA slot, it checks if the PU is transmitting. If 
not and if the SU is also empty, the user contends for access to the channel as described above. The
winner keeps transmitting in all slots where the PU is empty, until it is itself empty.
Thereafter, in the next slot with an empty PU, contention begins to determine the next SU.
%--------------------------------------------------*******************************************---------------------------------------------------

\subsection{Channel Errors and Fading}\label{secFadingAndChannelErrors}
Until now, we have assumed that all transmission attempts succeed $w.p.1$. In this section, we consider a system similar to the one described in \cite{tassiulas93server-allocation-randomly-varying}. In addition to the system described in Sec.~\ref{secSysMod}, we also specify that every transmission from queue $i$ succeeds with probability $p_i$ independent of all other transmissions. Specifically, fading or transmission errors are independent across time slots. This necessitates a revision of Eqn.~\ref{eqn:qEvolution}, provided in Eqn.~\eqref{eqnQEvolution2WithFading} wherein the random variables $C_i(t)$, that represent the fade state of the channel between node $i$ and the receiver, are modelled as Bernoulli random variables i.i.d across time and independent across queues. 
\begin{eqnarray}
 Q_i(t+1)&=&(Q_i(t)-C_i(t)D_i(t))^++A_i(t+1),\nonumber\\%~\text{with}
 Q_i(0)&=&A_i(0).%\hspace{4.4cm}
\label{eqnQEvolution2WithFading}
\end{eqnarray}
%where $\forall x\in\mathbb{R},(x)^+=$max$(x,0)$. 
The coefficient of $D_i(t)$ in \eqref{eqnQEvolution2WithFading} is $-C_i(t)\in\{0,-1\}$, meaning Queue $i$'s transmissions are assumed to succeed whenever it is scheduled ($D_i(t)=1$) and its channel is \enquote{ON} ($C_i(t)=1$). 
%
%In time slot $t$, every queue $i$ for which $C_i(t)=1$ is said to be \enquote{connected}. In \cite{tassiulas93server-allocation-randomly-varying}, the \enquote{Serve the Longest Connected Queue} policy was shown to be throughput and delay optimal, but under the assumption that the server knows which queues are connected in \emph{every slot}. That is, the server knows which transmissions will succeed $w.p.1$ at the beginning of every slot. 
We assume the system only gets to know whether a transmission succeeded or not at the end of a slot and has no knowledge of the current channel state at the scheduling instants. Since each packet at Queue~$i$ takes, on the average, $1/p_i$ slots to be transmitted (i.e., its service time is Geometric with mean $1/p_i$), the capacity region of such a system is given by %$\Lambda$ $\frac{1}{p_i}$ 
\begin{equation}
\Lambda:=\bigg\{\boldsymbol{\lambda}\in\mathbb{R}^N_+\bigg|\sum_{i=1}^N\frac{\lambda_i}{p_i}<1\bigg\},
\end{equation}
%and one can find algorithms to stabilize queues for all rate vectors in $\Lambda^o$. 
%Three points are to be noted here. Firstly, for any given\footnote{For any vector $\mathbf{x}\in\mathbb{R}^N$, $\mathbf{x}^T$ represents its transpose.} $[p_1,p_2,\cdots,p_N]^T$, this capacity region is a strict subset of the region derived in \cite{tassiulas93server-allocation-randomly-varying}, which assumes instantaneous knowledge of channel states. Second, even when arrival rates are equal, unequal fading/error probabilities can make the system asymmetric. Finally, 
%Note that the LEQ scheduler \emph{does not} need to know the channel fade probabilities $\{p_1,p_2,\cdots,p_N\}$ to stabilize the system.
%In this system, the effects of channel errors/fading can be modelled by assuming that the service time of every packet entering queue $i$ is a geometric random variable with mean $\frac{1}{p_i}$. 
%Note that this is only possible due to the \emph{memoryless} nature of the errors/fading. Prop.~\ref{propNonIdlingAndExhOpt} can easily be seen to extend to situations with geometrically distributed service times by choosing to serve the queue until the packet is transmitted successfully. Hence, non idling exhaustive policies are still delay optimal. Furthermore, 
% %%%----------------------------------- V-vector misalignment table ------------------------------------------------
\begin{table*}[t]
  \centering
  %\begin{tabular}{!{\VRule[2pt]}c!{\VRule}c!{\color{red}\VRule[3pt]}c!{\VRule}}
  \begin{tabular}{c|c|c|c|c}
  Time Slot & $\mathbf{V}_{1}(\cdot)$ & $\mathbf{V}_{2}(\cdot)$ & $\mathbf{V}_{3}(\cdot)$ & $\mathbf{V}_{4}(\cdot)$\\
            & (incumbent)             &                         &                        & ($\argmax{i}V_i(t)=4$) \\
  \specialrule{2pt}{2pt}{0pt}
    $t$ & $[0,9,11,12]$ & $[0,9,11,12]$ & $[0,9,11,12]$ & $[0,9,11,12]$ \\
    \hline
    $t+1$ & $[0,10,12,13]$ & $[0,10,12,13]$ & $[0,10,12,13]$ & $[0,10,12,13]$ \\
    \hline
    $t+2$ & $[0,11,13,14]$ & {\color{red}$[1^\ast,11,13,0]$} & $[0,11,13,14]$ & $[0,11,13,14]$ \\
    %\hdashline
     &  & {\color{blue}$[0,11,13,14]$} &  & \\
    \hline
    $t+3$ & $[0,12,14,15]$ & $[0,12,14,15]$ & $[0,12,14,15]$ & {\color{red}$[1^*,12,14,0]$} $\leftarrow$ Queue~4 nonempty \\
    \hline
    $\vdots$ \\
    \hline
    $t'$ & $[0,11,13,14]$ & $[0,11,13,14]$ & $[0,11,13,14]$ & {\color{red}$[1^*,11,13,0]$} $\leftarrow$ Queue~4 empty \\
    %\hdashline
     &  &  &  & {\color{blue}$[0,11,13,14]$} \\
    \specialrule{2pt}{2pt}{0pt}
  \end{tabular}
  \caption{Illustrating the three types of CCA errors and the resulting $V$-vector misalignment. The network here comprises 4 nodes and the 4 columns represent \emph{local} copies of the $V$-vector (i.e., $\mathbf{V}_i(t)$ is the copy of the $V$-vector at Node~$i$). In the situation considered here, Node~1 is the incumbent and $\argmax{1\leq i\leq4}V_i(t)=4$. The coordinate at which CCA error occurs is denoted by an asterisk ($\ast$). A \emph{\color{red}red} to \emph{\color{blue}blue} transition implies the node detects (infers) a misalignment and also corrects it within the same time slot.}
  \label{table:cca-errors-v-misalignments}
\end{table*}
% %%%-----------------------------------%%%-----------------------------------%%%-----------------------------------
It can be shown that the proof of Prop.~\ref{propSLEQunequal} remains valid even when the service time of Queue $i$ is $B_i$ and $\sum_{i=1}^N\lambda_i\mathbb{E}B_i<1$. Setting $\mathbb{E}B_i=\frac{1}{p_i}$ we see that the LEQ policy, clearly, is still throughput optimal. Note that the LEQ scheduler \emph{does not} need to know the fading probabilities $\{p_1,p_2,\cdots,p_N\}$ to stabilize the system.
%--------------------------------------------------*******************************************---------------------------------------------------
\subsection{Handling Alarm Traffic: The QZMAC$_a$ protocol}\label{appendix:DesigningQZwithAlarms}
%As described in the introduction, many process control applications depend on sensors raising alarms to inform the necessary control systems about the system operating point (temperature, pressure, concentration of chemicals, etc.) entering dangerous and unacceptable regions. This
Alarm traffic is particularly important in \emph{in-network processing}, wherein the data generated by the sensors is partially processed within the network before reaching the sink etc. \cite{liu-etal2003collaborative-in-network-processing,yao-gehrke02cougar-approach-in-network-query-processing,ye-etal2002energy-efficient-mac-in-network}. This is done, as mentioned earlier, to detect abnormal behavior in the processes begin monitored by the WSN and when such conditions are detected, the network generates alarm packets. %The network may consist of other sensors generating traffic and additionally acting as relays. 
To study the effects of applications generating alarm packets that need to be delivered with utmost priority, we first modify the system described in Sec.~\ref{secSysMod}. 
%-----------------------------------------------------------------------------------------------------------
 \begin{figure*}[ht]
\begin{subfigure}{.5\textwidth}
\centering%[height=5.25cm, width=17cm]
\tikzset{every picture/.style={line width=0.75pt}} %set default line width to 0.75pt        
\resizebox{7.50cm}{2.3cm}{
\begin{tikzpicture}[x=0.5pt,y=0.5pt,yscale=-1,xscale=1]
%uncomment if require: \path (0,300); %set diagram left start at 0, and has height of 300

%Straight Lines [id:da21091821527862487] 
\draw[very thick,black]    (181,111) -- (391.26,111) ;
%Straight Lines [id:da8923248760987523] 
\draw[very thick,black]    (391.26,111) -- (391.26,213.35) ;
%Straight Lines [id:da8966019854695426] 
\draw[very thick,black]    (181,213.35) -- (391.26,213.35) ;
%Rounded Rect [id:dp9740766065249149] 
\filldraw[very thick, fill=blue!60!white!40, draw=black]  (373.74,115.93) .. controls (380.26,115.88) and (385.58,121.12) .. (385.64,127.64) -- (386.17,196.09) .. controls (386.22,202.6) and (380.98,207.93) .. (374.47,207.98) -- (339.06,208.26) .. controls (332.55,208.31) and (327.22,203.07) .. (327.17,196.55) -- (326.63,128.1) .. controls (326.58,121.58) and (331.82,116.26) .. (338.34,116.21) -- cycle ;
%Rounded Rect [id:dp058046820249539444] 
\filldraw[very thick, fill=blue!60!white!40, draw=black] (308.74,115.39) .. controls (315.26,115.34) and (320.58,120.58) .. (320.63,127.1) -- (321.17,195.55) .. controls (321.22,202.07) and (315.98,207.39) .. (309.46,207.44) -- (274.06,207.72) .. controls (267.54,207.77) and (262.22,202.53) .. (262.16,196.01) -- (261.63,127.57) .. controls (261.58,121.05) and (266.82,115.72) .. (273.34,115.67) -- cycle ;
%Rounded Rect [id:dp2739897468562962] 
\filldraw[very thick, fill=blue!60!white!40, draw=black]  (242.73,115.86) .. controls (249.25,115.81) and (254.58,121.05) .. (254.63,127.57) -- (255.16,196.01) .. controls (255.22,202.53) and (249.97,207.86) .. (243.46,207.91) -- (208.05,208.19) .. controls (201.54,208.24) and (196.21,203) .. (196.16,196.48) -- (195.62,128.03) .. controls (195.57,121.51) and (200.81,116.19) .. (207.33,116.14) -- cycle ;
%Curve Lines [id:da09849830586458941] 
\draw   (106.25,85.13) .. controls (161.69,86.12) and (134.81,129.47) .. (179.87,132.28) ;
\draw [shift={(181.26,132.35)}, rotate = 182.44] [color={rgb, 255:red, 0; green, 0; blue, 0 }  ][line width=0.75]    (10.93,-3.29) .. controls (6.95,-1.4) and (3.31,-0.3) .. (0,0) .. controls (3.31,0.3) and (6.95,1.4) .. (10.93,3.29)   ;
%Curve Lines [id:da3026930486770598] 
\draw    (107.25,232.13) .. controls (162.69,233.12) and (136.79,189.24) .. (181.87,190.31) ;
\draw [shift={(183.26,190.35)}, rotate = 182.44] [color={rgb, 255:red, 0; green, 0; blue, 0 }  ][line width=0.75]    (10.93,-3.29) .. controls (6.95,-1.4) and (3.31,-0.3) .. (0,0) .. controls (3.31,0.3) and (6.95,1.4) .. (10.93,3.29)   ;
%Straight Lines [id:da4007844786240109] 
\draw    (105.25,158.13) -- (174.25,158.13) ;
\draw [shift={(176.25,158.13)}, rotate = 180] [color={rgb, 255:red, 0; green, 0; blue, 0 }  ][line width=0.75]    (10.93,-3.29) .. controls (6.95,-1.4) and (3.31,-0.3) .. (0,0) .. controls (3.31,0.3) and (6.95,1.4) .. (10.93,3.29)   ;
%Straight Lines [id:da3706975149666767] 
% \draw[thick,black,dashed]    (498,160) -- (541.25,160.13) ;
% \draw [shift={(543.25,160.13)}, rotate = 180.17] [color={rgb, 255:red, 0; green, 0; blue, 0 }  ][line width=0.75]    (10.93,-3.29) .. controls (6.95,-1.4) and (3.31,-0.3) .. (0,0) .. controls (3.31,0.3) and (6.95,1.4) .. (10.93,3.29)   ;
%Straight Lines [id:da3706975149666767] 
\draw    (468,160) -- (511.25,160.13) ;
\draw [shift={(513.25,160.13)}, rotate = 180.17] [color={rgb, 255:red, 0; green, 0; blue, 0 }  ][line width=0.75]    (10.93,-3.29) .. controls (6.95,-1.4) and (3.31,-0.3) .. (0,0) .. controls (3.31,0.3) and (6.95,1.4) .. (10.93,3.29)   ;
% %Shape: Circle [id:dp9882621263515663] 
% \draw[very thick,black]   (448,160) .. controls (448,146.19) and (459.19,135) .. (473,135) .. controls (486.81,135) and (498,146.19) .. (498,160) .. controls (498,173.81) and (486.81,185) .. (473,185) .. controls (459.19,185) and (448,173.81) .. (448,160) -- cycle ;
%Shape: Circle [id:dp9882621263515663] 
\draw[very thick,black]   (418,160) .. controls (418,146.19) and (429.19,135) .. (443,135) .. controls (456.81,135) and (468,146.19) .. (468,160) .. controls (468,173.81) and (456.81,185) .. (443,185) .. controls (429.19,185) and (418,173.81) .. (418,160) -- cycle ;
%Shape: Circle [id:dp7407174319355243] 
\draw   (80.65,178.57) .. controls (80.64,176.6) and (82.22,175) .. (84.19,175) .. controls (86.16,175) and (87.77,176.6) .. (87.79,178.57) .. controls (87.81,180.54) and (86.23,182.13) .. (84.26,182.13) .. controls (82.29,182.13) and (80.67,180.54) .. (80.65,178.57) -- cycle ;
%Shape: Circle [id:dp41729943435639916] 
\draw   (80.65,194.57) .. controls (80.64,192.6) and (82.22,191) .. (84.19,191) .. controls (86.16,191) and (87.77,192.6) .. (87.79,194.57) .. controls (87.81,196.54) and (86.23,198.13) .. (84.26,198.13) .. controls (82.29,198.13) and (80.67,196.54) .. (80.65,194.57) -- cycle ;
%Shape: Circle [id:dp478255207950242] 
\draw   (80.65,209.57) .. controls (80.64,207.6) and (82.22,206) .. (84.19,206) .. controls (86.16,206) and (87.77,207.6) .. (87.79,209.57) .. controls (87.81,211.54) and (86.23,213.13) .. (84.26,213.13) .. controls (82.29,213.13) and (80.67,211.54) .. (80.65,209.57) -- cycle ;

% Text Node
\draw (55,146.4) node [anchor=north west][inner sep=0.75pt]    {$A_{2}( t)$};
% Text Node
\draw (52,220.4) node [anchor=north west][inner sep=0.75pt]    {$A_{N}( t)$};
% Text Node
\draw (55,76.4) node [anchor=north west][inner sep=0.75pt]    {$A_{1}( t)$};
% Text Node
\draw (425,105.4) node [anchor=north west][inner sep=0.75pt]    {$Geo^{[ X]} /D/1$};
%\draw (428,112.4) node [anchor=north west][inner sep=0.75pt]    {$Geo^{[ X]} /D/1$};

\end{tikzpicture}
}
%{figures/figure_GeoX_D_1_queue_BETTER.png}
%{figures_for_full_proof/jfi_tdma_exh_poll_lam_003qs_30.eps}
\caption{Operation of the idealized full knowledge scheduler.}
\label{figFullKnowledgeScheduler}
%\end{figure}
\end{subfigure}
%\begin{figure}[tbh]
\begin{subfigure}{.5\textwidth}
\hspace{1.00cm}
\centering%[height=5.25cm, width=17cm]
\input{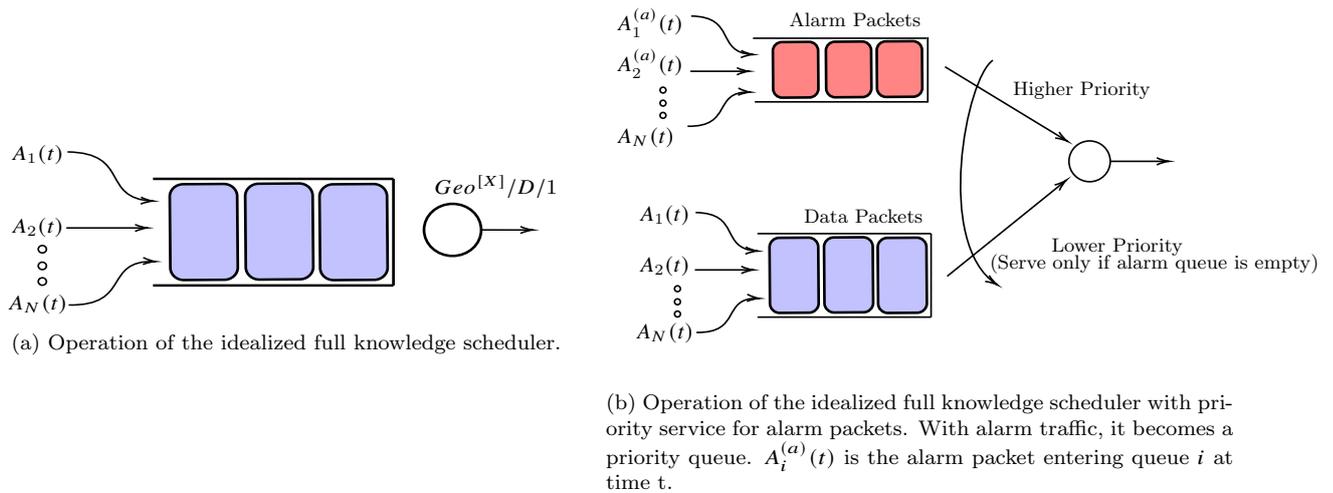}
%{figures_for_full_proof/jfi_tdma_exh_poll_lam_00317qs_30.eps}
\caption{Operation of the idealized full knowledge scheduler with priority service for alarm packets. With alarm traffic, it becomes a priority queue. $A^{(a)}_i(t)$ is the alarm packet entering queue $i$ at time t.}
\label{figFullKnowledgeSchedulerWithAlarms}
\end{subfigure}
\caption{Operation of the idealized full knowledge scheduler with priority service for alarm packets.}
%\end{figure}
\end{figure*}
%-----------------------------------------------------------------------------------------------------------
Instead of a single arrival stream, every queue now has two arrival streams. One carrying non alarm traffic (which we will call data traffic) and another carrying alarm packets. The Bernoulli $\lambda_i$ input process is split (in a memoryless manner) with a fraction $1-\alpha_i$ entering the data stream and a fraction $\alpha_i$ entering the alarm packet stream. Given that alarm packets are generated very rarely 
%(this is a sign that the control system has indeed been designed well), 
$\alpha_i$ is much smaller than $1-\alpha_i$. In this case, we modify QZMAC as follows. We include an alarm minislot ($T_a$) at the beginning of every slot. Queues with alarm packets can use this slot to send out a high power jamming signal to inform the others that alarm packets are in the system. With this, we modify QZMAC to accommodate such traffic and call the resulting protocol QZMAC$_a$. 
So, at the beginning of slot $t$
\begin{itemize}
\item The slot begins with the alarm minislot $T_a$. Every queue that has an alarm packet sends out a jam signal.
\item If the queues \emph{do not} sense any power in $T_a$, QZMAC$_a$ operates exactly like QZMAC in slot $t$.
\item If the queues \emph{do} sense power in $T_a$, normal operation according to QZMAC is suspended for slot $t$.
\begin{itemize}
\item All queues with alarm packets back off as usual over $\{T_a+T_p+1,\cdots,T_a+T_p+T_c\}$ and if a winner emerges, it transmits its alarm packet.
\item If no winner emerges, the system still has alarm packets, which means the protocol goes into alarm mode in slot $t+1$ as well. 
\item This continues until there are no alarm packets left in the system.
\end{itemize}
\end{itemize}
%In Sec.~\ref{secQZMAC_aSimulationResults}, we demonstrate the performance of QZMAC$_a$.% and show how it
%achieves 
\subsubsection{Performance of QZMAC with Alarm Traffic}\label{secQZMAC_aSimulationResults}
As can be seen from Fig.~\ref{figAlarmTraffic_Light}, when alarm traffic is light QZMAC$_a$ provides the same delay to alarm packets as the full knowledge scheduler (the plain red and plan blue curves overlap significantly). This is because alarm packets are given maximum priority in QZMAC$_a$. Queues with alarm packets contend for the channel and since their arrival rate ($\alpha_i\lambda_i$) is small, these packets essentially see a service time of 1 slot. Moreover, the delay to data traffic (blue-with-diamonds curve) is comparable to that without any alarm traffic (dashed green curve), i.e., plain QZMAC. However, if the alarm traffic load becomes significant compared to data traffic, the performance delivered to alarm packets (solid red curve at the bottom of Fig.~\ref{figAlarmTraffic_Heavy}) is not affected much due to priority queueing (this delay does show an increase near saturation from 1.005 slots to 1.357 slots, but this is still acceptable). However, the delay of data packets (blue-with-diamonds curve) degrades considerably and, as Fig.~\ref{figAlarmTraffic_Heavy} shows, is much worse than that of QZMAC without any alarm traffic (dashed green curve). 
%, as Fig. 22b in \cite{mohan-etal17hybrid-protocols} shows (with $33\%$ alarm traffic).
% 
%As Fig.~\ref{figAlarmTraffic_Light} shows, this simple modification to QZMAC is actually quite powerful and provides very low delay for alarm traffic without significantly compromising performance for data traffic.
% Since the alarm packets are always given highest priority, they never have to wait behind any data traffic and see a \emph{pure contention} system. But contention, as we already know, performs better than polled access at low loads, and hence ensures low delay for the alarm traffic.
The curves in red are obtained by the full-knowledge scheduler with priorities (see Fig.~\ref{figFullKnowledgeSchedulerWithAlarms}). 
%----------------------------------------------------------------------------------------------------
\begin{figure*}[tbh]
\begin{subfigure}{.5\textwidth}
\centering%[height=5.25cm, width=17cm]
\includegraphics[height=5.00cm, width=8.25cm]{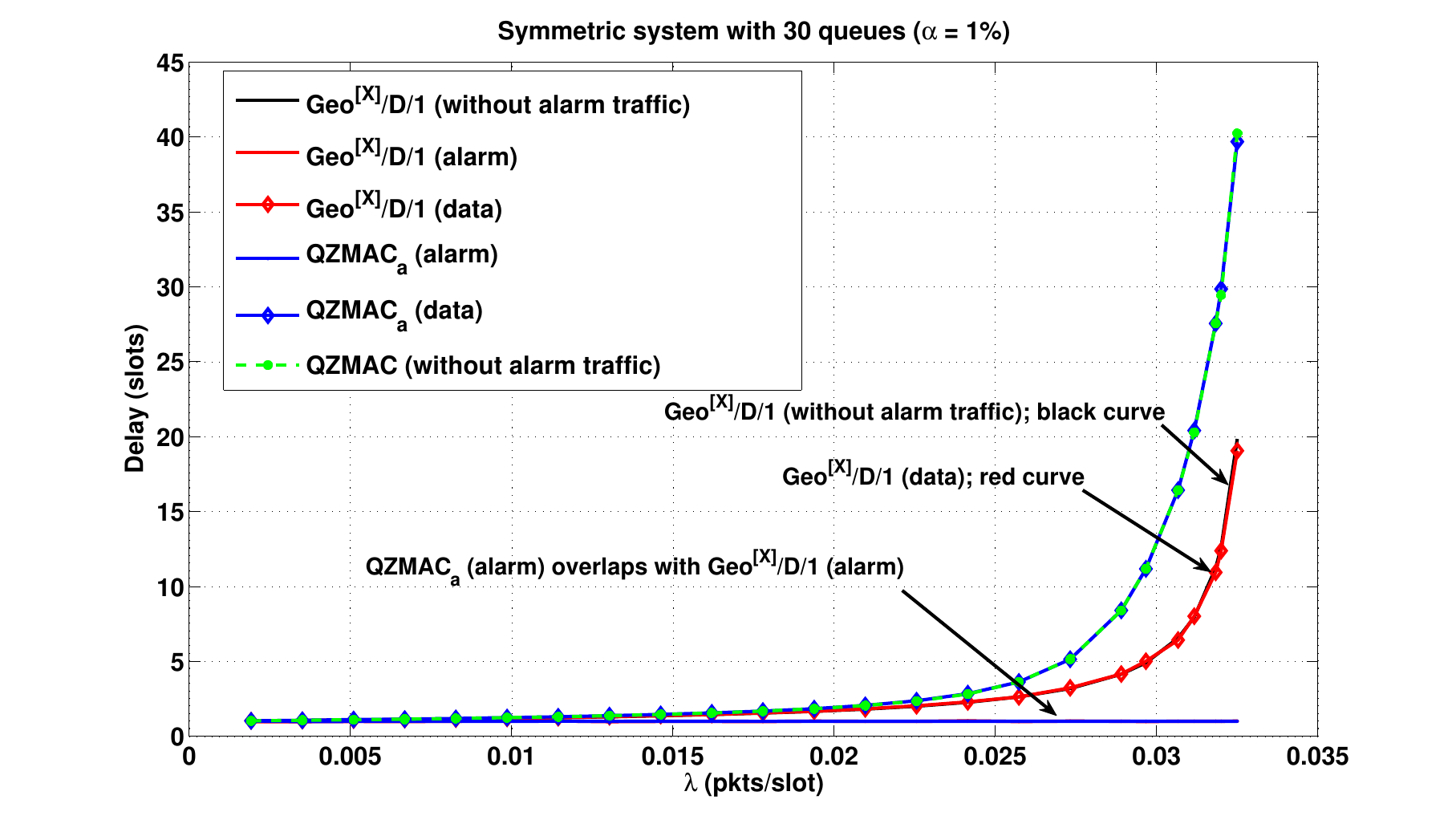}
%fig_qz_alarms_qz_data_qz_plain_lower_bd_alarms_data_lower_bd_plain30qs_01alpha.eps}
%fig_qz_alarms_qz_data_qz_plain_lower_bd_alarms_data_30qs_01alpha.eps}
%{figures_for_full_proof/jfi_tdma_exh_poll_lam_003qs_30.eps}
\caption{Performance of QZMAC$_a$ with $\alpha_i=0.01$ for all $i$, i.e., 1$\%$ alarm traffic.}%1\leq i\leq N
\label{figAlarmTraffic_Light}
\end{subfigure}
\begin{subfigure}{.5\textwidth}
\centering%[height=5.25cm, width=17cm]
\includegraphics[height=5.00cm, width=8.25cm]{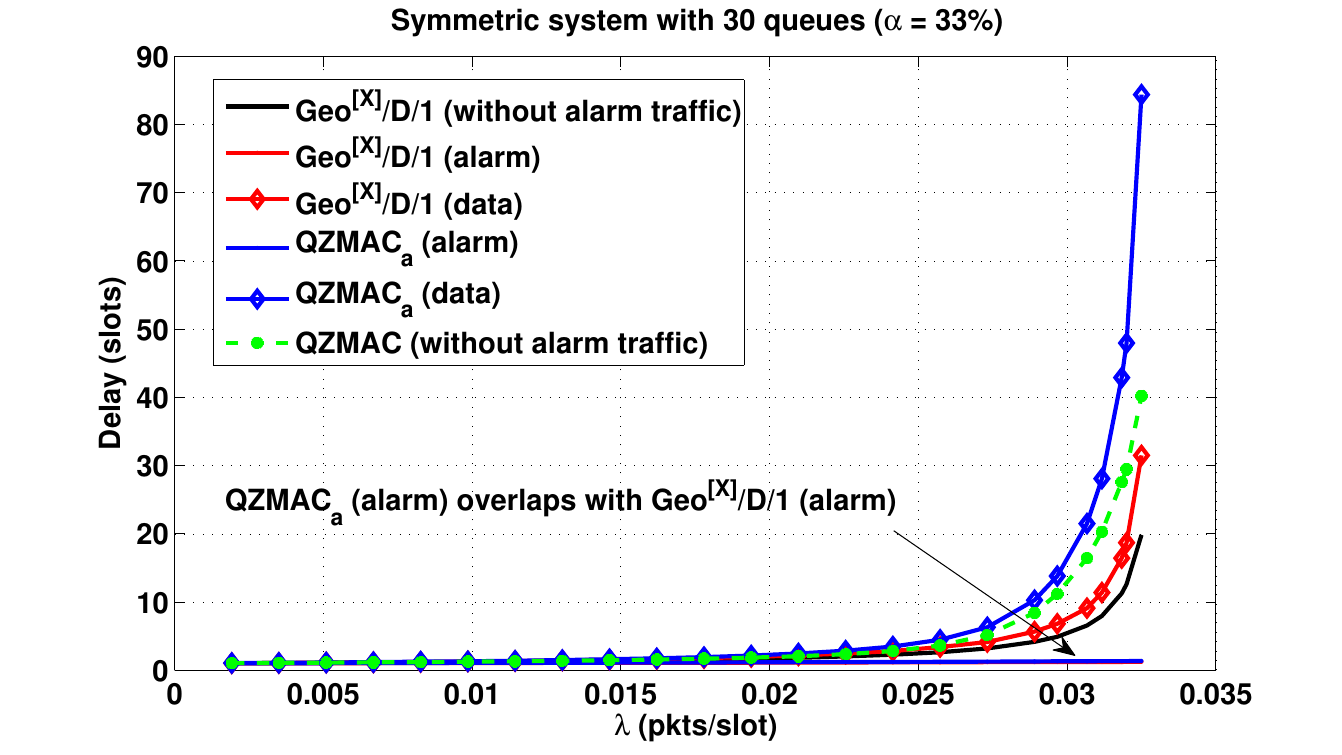}
%fig_qz_alarms_qz_data_qz_plain_lower_bd_alarms_data_30qs_01alpha.eps}
%{figures_for_full_proof/jfi_tdma_exh_poll_lam_003qs_30.eps}
\caption{Performance of QZMAC$_a$ with $\alpha_i=0.33$ for all $i$, i.e., 33$\%$ alarm traffic.}%1\leq i\leq N
\label{figAlarmTraffic_Heavy}
\end{subfigure}
\caption{Performance of QZMAC$_a$ with different alarm traffic arrival rates.}
\end{figure*}
\subsection{Effects of Nonidealities in Clear Channel Assessment}\label{appendix:CCAinQZMAC}
As mentioned earlier, in standard hypothesis testing parlance, the event when a CCA fails despite there being no activity on the channel is called a \enquote{Fale Alarm,} while the event when the CCA is successful even when a node in the network is transmitting is called a \enquote{Miss.} We denote the corresponding probabilities by $p_{FA}$ and $p_{miss}$ respectively. Our extensive experimentation, as described in Sec.~\ref{sec:Experiments}, has revealed the following facts
\begin{itemize}
    \item For networks with diameter up to $8$ meters, $p_{FA}\approx0$. This happens because the noise floor of modern sensors is low enough (-110 dBm for a 2MHz bandwidth) that no CCA failure triggers when the channel is free, and
    \item In any given time slot (not minislot), at most \emph{one} CCA \enquote{Miss} occurs.
\end{itemize}
Therefore, going forward, our analysis will assume that only \enquote{Miss} events occur and at most one node's CCA can register a spurious failure in any time slot. Table~\ref{table:cca-errors-v-misalignments} illustrates the three types of misalignment that can occur. In the sequel, we will refer to them as \enquote{M1,} \enquote{M2} and \enquote{M3} errors. We now describe them in detail and explain how each type is resolved. Recall that the node currently transmitting is called the \emph{incumbent} and in time slot $t$, $i^*=\argmax{i\in[N]}V_i(t).$
\begin{itemize}
    \item[M1.] This is the situation in slot $t+2$ in Table~\ref{table:cca-errors-v-misalignments}. The CCA of Node~$j$ ($j\neq i^*$) succeeds, and it assumes that Node~$i^*$ will now transmit. However, upon decoding the header, the node discovers that the incumbent is nonempty, infers that a CCA Miss has occurred and corrects its copy of the $\mathbf{V}$-vector.
    \item[M2.] This is the situation in slot $t+3$ in Table~\ref{table:cca-errors-v-misalignments}. The $i^*$ Node assumes that the incumbent is empty and, being \emph{nonempty,} begins transmitting. The transmissions from the two nodes now collide and further provisions are now required within QZMAC to extricate the network from this state. The subroutine discussed in Sec.~\ref{sec:handling-CCA-errors-subroutine} is designed to accomplish this.
    \item[M3.] This is the situation in slot $t'$ in Table~\ref{table:cca-errors-v-misalignments}. This case is similar to M2., but now, Node~$i^*$ is empty. The node then decodes the header of the packet, detects the CCA miss and corrects its copy of the $\mathbf{V}$-vector.
\end{itemize}
We thus see that two of the three potential types of misalignment can be detected and rectified quite easily, while only one necessitates modification to QZMAC. 
%In Sec.~\ref{sec:handling-CCA-errors-subroutine} we discuss this modification. 
\\
% \indent Firstly, the contention portion obviously requires no modification since, if the SU is empty and no other queue in the network detects this, the slot simply goes unused. This adds to packet delay but does not wreck the protocol itself\footnote{In an effort to keep the protocol simple to implement, we permit this suboptimality and simulations show that this doesn't affect mean delay too adversely.}.
\indent The polling portion is modified as follows. %We argued that the final misalignment category causes collisions across successive time slots. 
We set a threshold $K_{thr}$ such that any time the transmitting node(s) perceive $K_{thr}$ time outs, the node(s) assume it is due to CCA errors. These transmitting nodes then enter a state we term \enquote{{\tt COLL}} and a \texttt{RESET} Subroutine is triggered. Informally, the subroutine does the following.
\texttt
{
\begin{enumerate}
   \item[\textbf{\color{blue}[1]}]
   Until the transmitting nodes receive a reset beacon (denoted {\tt RSTBCN}), the first $T_p$ minisolt of every slot following the $K_{thr}$ timeouts is '{\tt BUSY}', ensuring no other queues attempt transmissions during those slots.
   \item[\textbf{\color{blue}[2]}]
   In every such slot, the transmitting nodes perform a random backoff and transmit a \texttt{RESET RQST} packet to the Base Station (BS).
   \item[\textbf{\color{blue}[3]}]
   Upon reception of a \texttt{RESET RQST} packet, the BS broadcasts a \texttt{RESET} beacon.
\end{enumerate}
}
% We provide detailed pseudocode in Sec.~\ref{appendix:CCAinQZMAC} in the Appendix.
%% --------------------------  RESET subroutine begins -------------------------- 
  %\begin{figure*}
  \begin{algorithm}[tb]
  \small
    \floatname{algorithm}{Protocol}
      %\twocolumn
      \allowdisplaybreaks
      	\caption{{\bf RESET} (Subroutine)} \label{protocol:RESET}%{\bf No CCA Errors}
      	\begin{algorithmic}[1]
      %	\hspace{-2.00cm}
      	\State \textbf{Input:} {\tt THRSLD, NDST, RSTBCN}
      	%, $\tau$
      	%$N,T_c${\color{blue}\Comment{This runs at every Queue~$j,~j\in[N]$ in the network}}
       	
      	\State \textbf{Init:} \texttt{RSTBCN == FALSE} {\color{blue}\Comment{Keeps track of whether a RESET beacon has been received from the BS.}}
      	$\tau\leftarrow1,$  {\color{blue}\Comment{$\tau$ keeps track of the minislot number}}
     	\State \textbf{Init:} $\forall~j\in[N], V_k(t)\leftarrow k$ 
      \If{\texttt{NDST == COLL}} {\color{blue}\Comment{Resetting involves only colliding nodes.}}\label{resetSubroutine:RstRqstBegins}
            \While{{\tt NDST == COLL \&\& RSTBCN == FALSE}}
      	    \State Tx \texttt{DUMMY PKT} {\color{blue}\Comment{To jam all $T_p$ minislots.}}
      	    \State {\bf Over} $\tau\in\{4,5,\cdots,T_c+3\}$ {\bf do}
      	    \State $U_j\sim Unif\left(\lbrace4,5,\cdots,T_c+3\rbrace\right)$
      	    \State Wait for $U_j-4$ minislots. 
      	    \State If \texttt{CCA = SUCCESS,} Tx \texttt{RSTRQST} pkt {\color{blue}\Comment{Tx a Reset Request to the BS.}}
      	    \If{\texttt{RSTBCN == TRUE}} {\color{blue}\Comment{The BS transmits a RESET beacon.}}
      	        \State{\texttt{NDST $\leftarrow$ NOCOLL}}
      	    \EndIf
      	    \EndWhile
      \EndIf\label{resetSubroutine:RstRqstEnds}
      \If{\texttt{RSTBCN == TRUE}}{\color{blue}\Comment{Every node in the network resets.}}\label{resetSubroutine:RsttingBegins}
            \State{$\forall k\in[N],~V_k(t)\leftarrow k.$}
      	\EndIf
      	\label{resetSubroutine:RsttingEnds}
      	\end{algorithmic} 
 \end{algorithm}
 %\end{figure*}
 % --------------------------  RESET ends --------------------------
% to withstand inconsistencies in the $V(t)$ matrix. In the Polling Portion of a time slot (i.e., the first two $T_p$ minislots), nonidealities can manifest in two ways. If some Node~$j$ in the network decides to transmit in a slot based on the state of its $\mathbf{V}_j(t)$ and a failed CCA test in slot $t-1,$ its transmission will likely collide with that of the slot's PU and result in a collision. So also in the case where local $i^*$ computations (Step~\ref{qzmacProtocol:ChooseLEQ}) yield different IDs at different nodes. The collisions result in time outs and we modify QZMAC to keep track of the number of time outs in the network. 
The network then simply resets, i.e., the copy of the vector $\mathbf{V}(t)$ at each node is reset (without jettisoning any existing packets) and the protocol starts afresh. Note that while only the nodes in a \texttt{COLL} state trigger a reset (Steps~\ref{resetSubroutine:RstRqstBegins}-\ref{resetSubroutine:RstRqstEnds}), \emph{all} nodes are required to reset their copies of the $\mathbf{V}(t)$ vector upon receiving a \texttt{RESET} beacon from the BS (Steps~\ref{resetSubroutine:RsttingBegins}-\ref{resetSubroutine:RsttingEnds}).
As simulation results in Fig.~\ref{fig:perfectCCAversusPMiss3eMinus6}
%\cite{mohan-etal21low-delay-iot} 
show, this simple modification resolves the CCA problem but does not harm delay performance adversely.\\
Furthermore, as the simulation results in Fig.~\ref{fig:perfectCCAversusPMiss3eMinus6} show, even at a miss rate $p_{miss} = 3\times10^{-4},$ which is $100$ times larger than the measures miss probability, the largest increase in mean delay due to the modification is $1.5266$ slots ($\approx4.32\%$), which is quite minimal.
%%%------------------------- CCA mean delay plot -----------------------------------------
% \begin{figure}[ht]
% \centering%[height=5.25cm, width=17cm]
% \includegraphics[height=5.5cm, width=8.5cm]{}
% \caption{Comparing mean delay with and without CCA errors. We ran QZMAC for $5.25\times10^6$ slots with $K_t=5$ slots. The curve in red shows the performance of QZMAC with the RESET subroutine described in Sec.~\ref{sec:handling-CCA-errors-subroutine}. As the figure shows, the mean delay loss with due to imperfect CCA is negligible.}
% \label{fig:perfectCCAversusPMiss3eMinus6}
% \end{figure}
\begin{figure}[tbh]
\centering%[height=5.25cm, width=17cm]
\includegraphics[height=4.5cm, width=9.0cm]{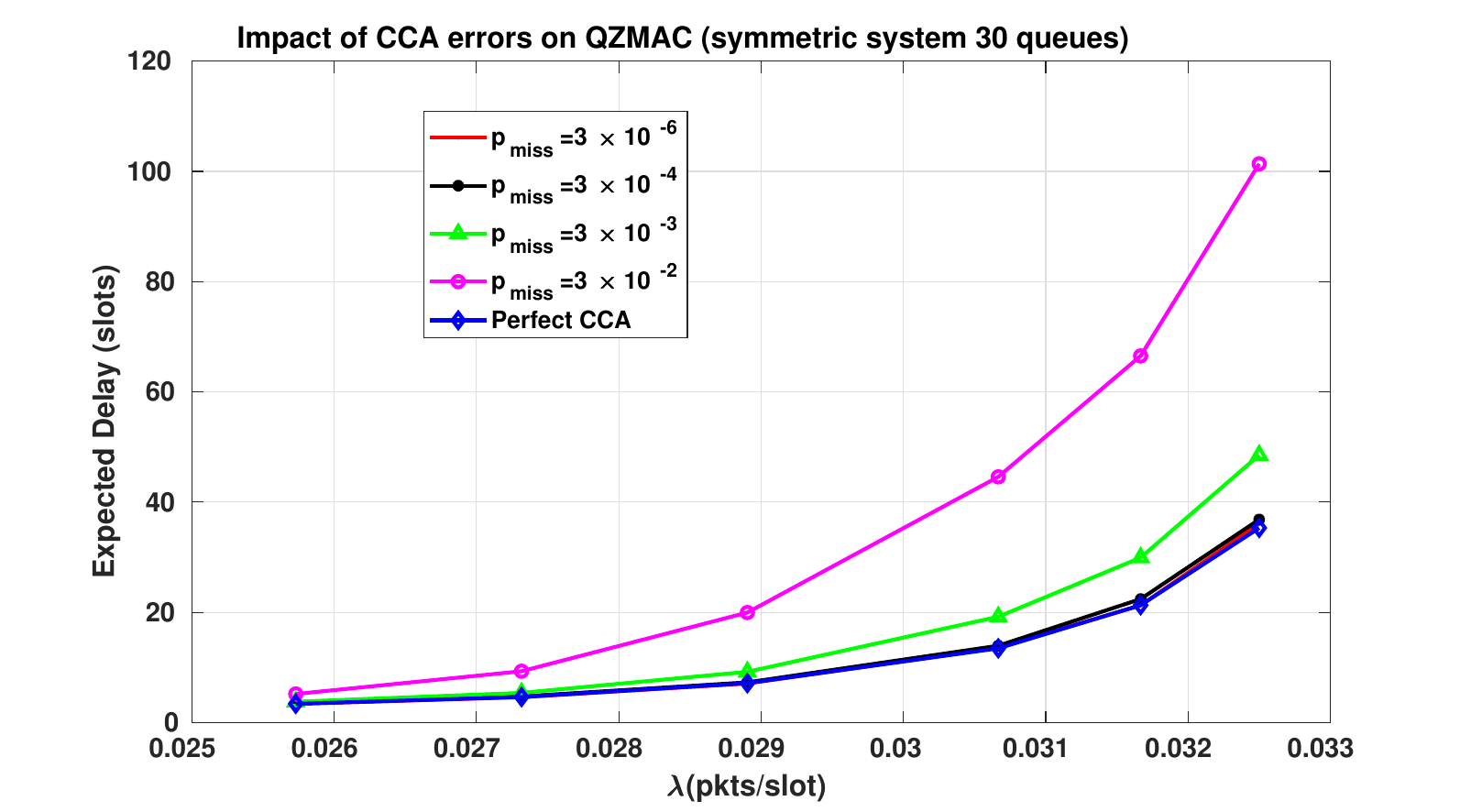}
\caption{Comparing mean delay with and without CCA errors. We ran QZMAC for $5.25\times10^6$ slots with $K_t=5$ slots. The curve in red shows the performance of QZMAC with the RESET subroutine. Mean delay loss with due to imperfect CCA is at most $1.5266$ slots (occurring at $\lambda=0.0325$) even up to a CCA miss probability of $3\times10^{-4}$, which is quite negligible.}
\label{fig:perfectCCAversusPMiss3eMinus6}
\end{figure}
\subsection{Channel Utilization}
In Table~\ref{table:channel-utilization-simulations-qz_ez_z}, we show the results of simulating QZMAC, EZMAC and ZMAC in a system containing 30 queues with \emph{equal} arrival rates. This comparison is necessary, because ZMAC is defined assuming a system with equal arrival rates. Please note that these are simulation results, whereas Table~\ref{table:channel-utilization-comparison} reports the results of an experiment. Since in the experiment, the CCA loss anomaly was rare (see Sec.~\ref{sec:ccaErrorsInferenceAndHandling}), a simulation captures the actual performance almost exactly. 
    
    Since the system in our simulation has 30 queues (with equal arrival rates), the per queue arrival rate cannot exceed $\frac{1}{30} \approx 0.033.$ We compute the channel utilization as defined before, i.e., as
    \begin{eqnarray*}
%\hspace{-0.250cm}
 \zeta^{\pi}(\mathbf{Q}(0)) &:=& \lim_{t\rightarrow\infty}\mathbb{P}^{\pi}_{\mathbf{Q}(0)} \left(\sum_{j=1}^N \mathbb{I}_{\{Q_j(t)>0\}}D_j(t)>0\bigg|\sum_{j=1}^N Q_j(t)>0\right)\nonumber\\
 %&\stackrel{(*)}{=}& \lim_{t\rightarrow\infty}\frac{1}{t} \left(\sum_{j=1}^N \mathbb{I}_{\{Q_j(t)>0\}}D_j(t)>0\bigg|\sum_{j=1}^N Q_j(t)>0\right)\\
 &=& \lim_{t\rightarrow\infty} \frac{\sum_{s=0}^{t-1}\sum_{j=1}^N \mathbb{I}_{\{Q_j(s)>0\}}\mathbb{I}_{\{D_j(s)>0\}}}{\sum_{s=0}^{t-1}\mathbb{I}_{\{\sum_{j=1}^N Q_j(s)>0\}}}.
\end{eqnarray*}
Table~\ref{table:channel-utilization-simulations-qz_ez_z} clearly shows that our protocols, QZMAC and EZMAC waste the channel less often and hence, clear the system of packets more efficiently than ZMAC.

    \begin{table}[tb]
        \centering
        %\begin{tabular}{!{\VRule[2pt]}c!{\VRule}c!{\color{red}\VRule[3pt]}c!{\VRule}}
        \begin{tabular}{c|c|c|c}
        Algorithm & QZMAC & EZMAC & ZMAC\\
        \specialrule{2pt}{2pt}{0pt}
            $\lambda = 0.032$ & 0.9541 & 0.9414 & 0.9224 \\
            \hline
            $\lambda=0.018$ & 0.9271 & 0.9110 & 0.8362 \\
            \specialrule{2pt}{2pt}{0pt}
        \end{tabular}
        \caption{Comparing the channel utilization of QZMAC, EZMAC and ZMAC for different arrival rates.
        %, for the network set up in Fig.~\ref{fig:7nodes-experiments}. 
        As before, $\zeta^{QZMAC}>\zeta^{EZMAC}>\zeta^{ZMAC}$, showing that QZMAC and EZMAC waste the channel less often, thereby clearing the network of packets more efficiently.}
        \label{table:channel-utilization-simulations-qz_ez_z}
        %\vspace{-0.500cm}
    \end{table}
%\IEEEQED
%--------------------------------------------------*******************************************---------------------------------------------------

% if have a single appendix:
%\appendix[Proof of the Zonklar Equations]
% or
%\appendix  % for no appendix heading
% do not use \section anymore after \appendix, only \section*
% is possibly needed

% use appendices with more than one appendix
% then use \section to start each appendix
% you must declare a \section before using any
% \subsection or using \label (\appendices by itself
% starts a section numbered zero.)
%

%\appendices
%\section{Proof of the First Zonklar Equation}
%Appendix one text goes here.
%
%% you can choose not to have a title for an appendix
%% if you want by leaving the argument blank
%\section{}
%Appendix two text goes here.
%
%
%% use section* for acknowledgment
%\section*{Acknowledgment}
%
%
%The authors would like to thank...

% Can use something like this to put references on a page
% by themselves when using endfloat and the captionsoff option.
\ifCLASSOPTIONcaptionsoff
  \newpage
\fi

\end{document}